\title{The Edit Distance to $k$-Subsequence Universality} 
\author[1]{Pamela Fleischmann}
\author[2]{Maria Kosche}
\author[2]{Tore Ko\ss }
\author[2]{Florin Manea}
\author[2]{Stefan Siemer}
\affil[1]{Kiel University, Computer Science Department, Germany\\ \texttt{fpa@informatik.uni-kiel.de}}
\affil[2]{G\"ottingen University, Computer Science Department, Germany\\
	\texttt{\{maria.kosche, tore.koss, florin.manea, stefan.siemer\}@cs.uni-goettingen.de}}
\newtheorem{theorem}{Theorem}
\theoremstyle{definition}
\newtheorem{definition}{Definition}
\theoremstyle{remark}
\newtheorem{remark}{Remark}
\theoremstyle{lemma}
\newtheorem{lemma}{Lemma}
\theoremstyle{example}
\def\ta{\mathtt{a}}
\def\tb{\mathtt{b}}
\def\tc{\mathtt{c}}
\def\td{\mathtt{d}}
\newcommand{\letters}{\text{alph}}
\def\nth#1{#1$^{\text{th}}$}
\def\N{\mathbb{N}}
\DeclareMathOperator{\find}{\mathtt{find}}
\DeclareMathOperator{\union}{\mathtt{union}}
\DeclareMathOperator{\ScatFact}{Subseq}
\DeclareMathOperator{\ar}{ar}
\DeclareMathOperator{\last}{last}
\DeclareMathOperator{\pos}{pos}
\DeclareMathOperator{\RMQ}{RMQ}
\DeclareMathOperator{\argmin}{arg\ min}
\DeclareMathOperator{\univ}{univ}
\DeclareMathOperator{\freq}{freq}
\newcommand\tn{\mathtt{n}}
\newcommand*{\qqed}{\null\nobreak\hfill\ensuremath{\square}}%
\tikzset{ns/.style={
        inner sep = 0mm, outer sep=0mm,
        append after command={
            [very thick]
            (\tikzlastnode.north west)edge(\tikzlastnode.north east)
            [very thick]
            (\tikzlastnode.south west)edge(\tikzlastnode.south east)
        }
    }
}
\tikzset{wens/.style={
        inner sep = 0mm, outer sep=0mm,
        append after command={
            [very thick]
            (\tikzlastnode.north west)edge(\tikzlastnode.south west)
            [very thick]
            (\tikzlastnode.north east)edge(\tikzlastnode.south east)
            [very thick]
            (\tikzlastnode.north west)edge(\tikzlastnode.north east)
            [very thick]
            (\tikzlastnode.south west)edge(\tikzlastnode.south east)
        }
    }
}
\tikzset{wns/.style={
        inner sep = 0mm, outer sep=0mm,
        append after command={
            [very thick]
            (\tikzlastnode.north west)edge(\tikzlastnode.south west)
            [very thick]
            (\tikzlastnode.north west)edge(\tikzlastnode.north east)
            [very thick]
            (\tikzlastnode.south west)edge(\tikzlastnode.south east)
        }
    }
}
\tikzset{ens/.style={
        inner sep = 0mm, outer sep=0mm,
        append after command={
            [very thick]
            (\tikzlastnode.north east)edge(\tikzlastnode.south east)
            [very thick]
            (\tikzlastnode.north west)edge(\tikzlastnode.north east)
            [very thick]
            (\tikzlastnode.south west)edge(\tikzlastnode.south east)
        }
    }
}
\tikzset{e/.style={
        inner sep = 0mm, outer sep=0mm,
        append after command={
            [very thick]
            (\tikzlastnode.north east)edge(\tikzlastnode.south east)
            [very thick]
        }
    }
}
\tikzset{w/.style={
        inner sep = 0mm, outer sep=0mm,
        append after command={
            [very thick]
            (\tikzlastnode.north west)edge(\tikzlastnode.south west)
            [very thick]
        }
    }
}
\begin{document}
\maketitle

\begin{abstract}
A word $u$ is a subsequence of another word $w$ if $u$ can be obtained from $w$ by deleting some of its letters. The word $w$ with $\letters(w)=\Sigma$ is called $k$-subsequence universal if the set of subsequences of length $k$ of $w$ contains all possible words of length $k$ over $\Sigma$. We propose a series of efficient algorithms computing the minimal number of edit operations (insertion, deletion, substitution) one needs to apply to a given word in order to reach the set of $k$-subsequence universal words. \\
{\bf Keywords:} Subsequence, k-Subsequence Universality, Edit Distance, Efficient algorithms.
\end{abstract}

\section{Introduction}
A word $v$ is a subsequence (also called scattered factor or subword) of a word $w$ if there exist (possibly empty) words 
 $x_1, \ldots, x_{\ell+1}$ and $v_1, \ldots, 
v_\ell$ such that $v = v_1 \ldots v_\ell$ and $w = x_1 v_1 \ldots x_\ell v_\ell x_{\ell+1}$. That is, $v$ is obtained from $w$ by removing some of its letters. 

The study of the relationship between words and their subsequences is a central topic in combinatorics on words and string algorithms, as well as in language and automata theory (see, e.g., 
the chapter {\em Subwords} by J. Sakarovitch and I. Simon in  \cite[Chapter 6]{Loth97} for an overview of the fundamental aspects of this topic).
The concept of subsequence and its generalisations play an important role in various areas of theoretical computer science. For instance, in logic of automata theory, subsequences are used in the context of piecewise testability~\cite{simonPhD,Simon72}, in particular to the height of piecewise testable languages~\cite{KarandikarKS15,CSLKarandikarS,journals/lmcs/KarandikarS19}, subword order \cite{HalfonSZ17,KuskeZ19,Kuske20}, or downward closures \cite{Zetzsche16}. In combinatorics on words, many concepts were developed around the idea of counting the occurrences of particular subsequences of a word, such as the $k$-binomial equivalence ~\cite{RigoS15,FreydenbergerGK15,LeroyRS17a,Rigo19}, subword histories ~\cite{Seki12}, and Parikh matrices ~\cite{Mat04,Salomaa05}. In the area of algorithms, subsequences appear, e.g., in classical problems such as {the longest common subsequence} \cite{DBLP:journals/tcs/Baeza-Yates91,DBLP:conf/fsttcs/BringmannC18,BringmannK18}, {the shortest common supersequence} \cite{Maier:1978}, or {the string-to-string correction} \cite{Wagner:1974}. 
From a practical point of view, subsequences are useful in bioinformatics-related scenarios, as well as in other areas where they model corrupted or lossy representations of an original string, see~\cite{sankoff}.

A major area of research related to subsequences is the study of the set of all subsequences of bounded length of a word, initiated by Simon in his PhD thesis \cite{simonPhD}. In particular, Simon defined and studied (see \cite{Simon72,Loth97}) the relation $\sim_k$ (called now Simon's congruence) between words having exactly the same set of subsequences of length at most $k$. The surveys \cite{Pin2004,Pin2019} overview some of the extensions of Simon's seminal work from 1972 in various areas related to automata theory. Moreover, $\sim_k$ is a well-studied relation in the area of string algorithms too. The problems of deciding whether two given words are $\sim_k$-equivalent, for a given $k$, and to find the largest $k$ such that two given words are $\sim_k$-equivalent (and their applications) were heavily investigated in the literature, see, e.g., \cite{TCS::Hebrard1991,garelCPM,SimonWords,DBLP:conf/wia/Tronicek02,DBLP:journals/jda/CrochemoreMT03,KufMFCS} and the references therein. This year, optimal solutions were given for both these problems \cite{DBLP:conf/dlt/BarkerFHMN20,mfcs2020}. In \cite{DBLP:conf/dlt/BarkerFHMN20} it was shown how to compute the shortlex normal form of a given word in linear time, i.e., the minimum representative of a $\sim_k$-equivalence class w.r.t. shortlex ordering. This can be directly applied to test whether two words are $\sim_k$-equivalent: they need to have the same shortlex normal form. In \cite{mfcs2020}, a tree-like structure, called Simon-tree, was used to represent the equivalence classes induced by $\sim_k$ on the set of suffixes of a word, for all possible values of $k$, and then, given two words, a correspondence between their Simon-trees was constructed to compute in linear time the largest $k$ for which they are $\sim_k$-equivalent.

Extending the algorithmic work on $\sim_k$, the following problem seems interesting: given two words $w$ and $u$ and an integer $k$, which is the minimal number of edit operations we need to perform on $w$ to obtain a word $v$ such that $v\sim_k u$? As edit operations we consider, as usual, insertion, deletion, and substitution of letters. Rephrasing, we ask how far (w.r.t. the {\em edit distance}) are two words from being $\sim_k$-equivalent, or, how far is the word $w$ from the set of all words which are $\sim_k$-equivalent to $u$. To this end, we can replace the target-word $u$ by a set $U$ of words of length $k$ and ask for the minimal number of edit operations we need to perform on $w$ to obtain a word $v$ whose set of subsequences of length $k$ is (or includes)~$U$.

This direction of research is not new, and has always been a source of interesting problems. One of the most classical and well-understood string-problems is computing the edit distance between words \cite{EditDistance}, for which an optimal (up to poly-logarithmic factors) solution exists~\cite{DBLP:journals/siamcomp/BackursI18,DBLP:journals/jcss/MasekP80}. The problem of computing the edit distance between a word and a language, or between two languages, is also a well-studied problem, in various settings (see, e.g., \cite{DBLP:conf/focs/BringmannGSW16,DBLP:conf/icalp/JayaramS17,DBLP:journals/ijfcs/HanKS13,DBLP:conf/dlt/CheonH20,DBLP:conf/dlt/CheonHKS19}).

In this paper, we make some initial steps in the study of the problems introduced above. While we do not solve their general form, we investigate one of their particular cases which seems interesting, meaningful, and well motivated. We follow the line of research of \cite{KarandikarKS15,CSLKarandikarS,dlt2019,DBLP:conf/dlt/BarkerFHMN20} and focus on a special $\sim_k$-class of words. A word $w$ is {\em $k$-subsequence universal} (for short $k$-universal) with respect to an alphabet $\Sigma$ if its set of subsequences of length $k$ equals $\Sigma^k$. In the problems we consider, $\Sigma$ will be the set $\letters(w)$ of letters occurring in the input $w$ (to this end, see the discussion in \autoref{alphabet}). The maximum $k$ for which a word $w$ is $k$-universal is {\em the universality index} of $w$. In this context, we consider the problem of computing for a given word $w$ and an integer $k$ the minimal number of edit operations we need to perform on $w$ in order to obtain a $k$-universal word. That is, we are interested in the edit distance from the input word $w$ to the set of $k$-universal words w.r.t. $\letters(w)$. 

Before presenting our results, we briefly discuss the motivation of considering $k$-subsequence universal words. Firstly, using the name {\em universal} in this context is not unusual. The classical universality problem (see, e.g., \cite{HolzerK11}) is whether a given language $L$  (over an alphabet $\Sigma$, and specified by an automaton or grammar) is equal to $\Sigma^{\ast}$. The works \cite{Rampersad:2012,KrotzschMT17,GawrychowskiRSS17} and the references therein discuss many variants of and results on the universality problem for various language generating and accepting formalisms. The universality problem was considered for words \cite{martin1934,Bruijn46} and  partial words \cite{ChenKMS17,GoecknerGHKKKS18} w.r.t. their factors. More precisely, one is interested in finding, for a given $\ell$, a word $w$ over an alphabet $\Sigma$, such that each word of length $\ell$ over $\Sigma$ occurs exactly once as a contiguous factor of $w$. De Bruijn sequences~\cite{Bruijn46} fulfil this property and have many applications in computer science or combinatorics, see \cite{ChenKMS17,GoecknerGHKKKS18} and the references therein. In~\cite{KarandikarKS15,CSLKarandikarS,journals/lmcs/KarandikarS19} the authors define the notion of $k$-rich words in relation to the height of piecewise testable languages, a class of simple regular languages with applications in learning theory, databases theory, or linguistics (see \cite{journals/lmcs/KarandikarS19} and the references therein). The class of $k$-rich words coincides with that of $k$-subsequence universal words. The study of $k$-subsequence universal words was continued, from a combinatorial point of view, in \cite{dlt2019,DBLP:conf/dlt/BarkerFHMN20}. So, it seems that investigating this class also from an algorithmic perspective is motivated by, fits in, and even enriches this well-developed and classical line of research. 

{\bf Our results.} Firstly, we note that when we want to increase the universality of a word by edit operations, it is enough to use only insertions. Similarly, when we want to decrease the universality of a word, it is enough to consider deletions. So, to measure the edit distance to the class of $k$-subsequence universal words, for a given $k$, it is enough to consider either insertions or deletions. However, changing the universality of a word by substitutions (both increasing and decreasing it) is interesting in itself as one can see the minimal number of substitutions needed to transform a word $w$ into a $k$-universal word as the {\em Hamming distance}~\cite{6772729} between $w$ and the set of $k$-universal words. Thus, we consider all these operations independently and propose efficient algorithms computing the minimal number of insertions, deletions, and substitutions, respectively, needed to apply to a given word $w$ in order to reach the class of $k$-universal words (w.r.t. the alphabet of $w$), for a given $k$. The time needed to compute these numbers is $O(nk)$ in the case of deletions and substitutions, as well as in the case of insertions if $k\in O(c^n)$ for some constant $c$ (for even larger values of $k$ we need to add roughly the time complexity of multiplying $n$ and~$k$). The respective algorithms are presented in \autoref{editDist}. 

Our algorithms are based, like most edit distance algorithms, on a dynamic programming approach. However, implementing such an approach within the time complexities stated above does not seem to follow directly from the known results on the word-to-word or word-to-language edit distance. In particular, we do not explicitly construct any $k$-universal word nor any representation (e.g., automaton or grammar) of the set of $k$-universal words, when computing the distance from the input word $w$ to this set. Rather, we can compute the $k$-universal word which is closest w.r.t. edit distance to $w$ as a byproduct of~our~algorithms. In our approach, we first develop (\autoref{toolbox}) several efficient data structures (most notably \autoref{speedUp}). Then (\autoref{editDist}), for each of the considered operations, we make several combinatorial observations, allowing us to restrict the search space of our algorithms, and creating a framework where our data~structures can be used efficiently. 

\section{Preliminaries}\label{prels}
Let $\N$ be the set of natural numbers and $\N_0=\N\cup\{0\}$.
Define for $i,j\in\N_0$ with $i< j$ the interval $[i:j]$ as $\{i,i+1,\dots,j-1,j\}$.
An alphabet $\Sigma$ is a nonempty finite set of symbols called {\em letters}. 
A {\em word} is a finite sequence of letters from $\Sigma$, thus an element of 
the free monoid $\Sigma^{\ast}$. Let $\Sigma^+=\Sigma^{\ast}\backslash\{\varepsilon\}$, where $\varepsilon$ is the empty word. The {\em length} of 
a word $w\in\Sigma^{\ast}$ is denoted by $|w|$. Let $\Sigma^k$ be the set of all words from $\Sigma^*$ of length exactly $k$.
A word $u\in\Sigma^{\ast}$ is a {\em factor} of $w\in\Sigma^{\ast}$ if $w=xuy$ 
for some $x,y\in\Sigma^{\ast}$. If $x=\varepsilon$ (resp. $y=\varepsilon$), $u$ is called a 
{\em prefix} (resp. {\em suffix}) of $w$. The \nth{$i$} letter of $w\in\Sigma^{\ast}$ is denoted by 
$w[i]$ for $i\in[1:|w|]$.
Set $w[i:j] = 
w[i] w[i+1] \cdots w[j]$ for $1\leq i\leq j\leq |w|$,
$|w|_{\ta}=|\{i\in[1:|w|]|\,w[i]=\ta\}|$,
 and $\letters(w)$ $= \{\ta \in \Sigma | |w|_\ta > 0 \}$ for $w\in\Sigma^{\ast}$.
We can now introduce the notion of subsequence.

\begin{definition}
A word $v= v_1 \cdots v_\ell\in \Sigma^*$ is called a  {\em subsequence (or scattered factor)} of $w \in \Sigma^*$ if there exist $x_1, \ldots, 
x_{\ell+1}\in\Sigma^{\ast}$ such that $w = x_1 v_1 \cdots x_\ell v_\ell x_{\ell+1}$.
Let $\ScatFact(w)$ be the set of all subsequences of $w$ and
define $\ScatFact_k(w)=\ScatFact(w)\cap \Sigma^k$, the set of all subsequences of $w$ of 
length $k\in\N$. 
\end{definition}

For $k\in\N_0$, $\ScatFact_k(w)$ is  
called the $k$-spectrum of $w$. Simon 
\cite{Simon72} defined the congruence $\sim_k$ in which $u,v\in\Sigma^{\ast}$ are congruent if they
have the same $k$-spectrum. As introduced in \cite{DBLP:conf/dlt/BarkerFHMN20} the notion of $k$-universality of a word over $\Sigma$ denotes its property of having~$\Sigma^k$ as $k$-spectrum.

\begin{definition}
A word $w\in\Sigma^{\ast}$ is called {\em $k$-universal} (w.r.t. $\Sigma$), for $k\in\N$, if $\ScatFact_k(w)=\Sigma^k$. We abbreviate $1$-universal by {\em universal}. The {\em universality-index} $\iota(w)$ of $w\in\Sigma^{\ast}$ is the largest $k$ such that $w$ is $k$-universal.
\end{definition}

\begin{remark}
If $\iota(w)=k$ then $w$ is $\ell$-universal for all $\ell\leq k$. Notice that $k$-universality is always w.r.t. a given alphabet $\Sigma$: the word $\ta\tb\tc\tb\ta$ is universal for $\Sigma=\{\ta,\tb,\tc\}$ but it is not universal for $\Sigma\cup\{\td\}$. In each algorithm presented in this paper, whenever we discuss about the universality index of some word (factor of the input word, or obtained from the input word via edit operations), we compute it with respect to the alphabet of the input word $w$.
\end{remark}

The notion of $\ell$-universality coincides to that of $\ell$-richness introduced in \cite{CSLKarandikarS,journals/lmcs/KarandikarS19}. We use the name {\em $\ell$-universality} rather than {\em $\ell$-richness}, as richness of words is also used with other meanings, see, e.g., \cite{DroubayJP01,LucaGZ08}. 
We recall the arch factorisation, introduced by Hebrard~\cite{TCS::Hebrard1991}.


\begin{definition}[\cite{TCS::Hebrard1991}]\label{archfact}
For $w\in\Sigma^{\ast}$ the {\em arch factorisation} of $w$ is $w=\ar_w(1)\cdots \ar_w(k)r(w)$ for some $k\in\N_0$ where $\ar_w(i)$ is universal,  
the last letter of $\ar_{w}(i)$, namely $\ar_w(i)[|\ar_w(i)|]$, does not occur in $\ar_w(i)[1:|\ar_w(i)|-1]$  for all $i\in[1:k]$, and 
$\letters(r(w))\subset\Sigma$. 
The words $\ar_w(i)$ are called {\em arches} of $w$, $r(w)$ is called the {\em rest}. 
\end{definition}

If the arch factorisation of $w$ contains $k\in\N_0$ arches, then $\iota(w)=k$. 
The following immediate theorem based on the work of Simon \cite{Simon72} completely characterises the set of $k$-subsequence universal words, based on Hebrard's arch factorisation.

\begin{theorem}\label{theodlt}
The word  $w\in\Sigma^{\ast}$ is $k$-universal if and only if there exist the words $v_i$, with $i\in [1:k]$, such that $v_1\cdots v_k=w$ and $\letters(v_i)=\Sigma$ for all $i\in [1:k]$.
\end{theorem}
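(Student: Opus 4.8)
The plan is to prove both directions by relating $k$-universality to the existence of $k$ disjoint universal factors, using the arch factorisation as the main bridge.

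For the forward direction, suppose $w$ is $k$-universal, i.e. $\ScatFact_k(w)=\Sigma^k$. I would invoke Hebrard's arch factorisation (\autoref{archfact}) together with the fact stated immediately after it: if the arch factorisation of $w$ contains exactly $m$ arches then $\iota(w)=m$. Since $w$ is $k$-universal, $\iota(w)\geq k$, so the arch factorisation $w=\ar_w(1)\cdots\ar_w(m)r(w)$ has $m\geq k$ arches. Each arch $\ar_w(i)$ is universal by definition, i.e. $\letters(\ar_w(i))=\Sigma$. I would then set $v_i=\ar_w(i)$ for $i\in[1:k-1]$ and absorb everything remaining into the last block, $v_k=\ar_w(k)\cdots\ar_w(m)r(w)$. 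Each of $v_1,\dots,v_{k-1}$ is a single arch, hence has $\letters(v_i)=\Sigma$; and $v_k$ contains the universal arch $\ar_w(k)$ as a factor, so $\letters(v_k)=\Sigma$ as well. Their concatenation is $w$, giving the desired factorisation.

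For the converse, suppose $w=v_1\cdots v_k$ with $\letters(v_i)=\Sigma$ for every $i\in[1:k]$. I must show $\ScatFact_k(w)=\Sigma^k$; since $\ScatFact_k(w)\subseteq\Sigma^k$ trivially (every subsequence uses only letters of $\Sigma$), it suffices to show $\Sigma^k\subseteq\ScatFact_k(w)$. Take any target word $\mathtt{a}_1\cdots\mathtt{a}_k\in\Sigma^k$. Because $\letters(v_i)=\Sigma$, each block $v_i$ contains at least one occurrence of the letter $\mathtt{a}_i$; picking one such occurrence from each block in left-to-right order exhibits $\mathtt{a}_1\cdots\mathtt{a}_k$ as a subsequence of $v_1\cdots v_k=w$. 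Since the target was arbitrary, every length-$k$ word over $\Sigma$ is a subsequence of $w$, so $w$ is $k$-universal.

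The converse direction is essentially routine (a one-line pigeonhole/greedy argument), so I expect the only genuine subtlety to lie in the forward direction, specifically in handling the case $m>k$ cleanly: one must make sure the leftover arches and the rest are merged into a block that still starts at the correct position and whose alphabet is $\Sigma$. This is immediate once we observe that prepending or appending further letters to a universal word keeps it universal, so collapsing the tail $\ar_w(k)\cdots\ar_w(m)r(w)$ into $v_k$ causes no loss. An alternative, perhaps slicker, route would bypass the arch factorisation entirely and prove the forward direction directly: assuming $w$ is $k$-universal, one could argue that $w$ must contain a copy of the word $\mathtt{a}_1^{\,}\cdots$ built from all of $\Sigma$ repeated $k$ times, forcing $k$ interleaved universal blocks; but the arch-factorisation approach is cleaner since the needed machinery is already set up in the excerpt.
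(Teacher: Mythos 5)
Your proof is correct and takes essentially the route the paper intends: the paper states this theorem without an explicit proof, declaring it immediate from Simon's work and Hebrard's arch factorisation, and your argument (taking the first $k-1$ arches as blocks and absorbing $\ar_w(k)\cdots\ar_w(m)r(w)$ into $v_k$ for the forward direction, greedy letter-picking across the blocks for the converse) is the standard fleshing-out of exactly that. The only external ingredient you invoke, namely that the number of arches equals $\iota(w)$ (of which you only need the half ``$\iota(w)\leq$ number of arches'', proved by Hebrard's last-letter argument independently of this theorem), is likewise stated without proof in the paper immediately before the theorem, so no circularity arises.
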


The further preliminary results regard algorithms. We first introduce our framework. 
 
 \smallskip
 
{\bf General algorithmic framework:} The computational model we use is the standard unit-cost RAM with logarithmic word size: for an input of size $n$, each memory word can hold $\log n$ bits. In all the problems, we assume that we are given a word $w$, with $|w|=n$, over an alphabet $\Sigma=\{1,2,\ldots,\sigma\}$, with $|\Sigma|=\sigma\leq n$. 
This is a common assumption in string algorithms: the input alphabet is said to be {\em an integer alphabet}. For a more detailed general discussion on this model see, e.g.,~\cite{crochemore} or Appendix~\ref{RAM}. We also assume that our input words contain at least two distinct letters, otherwise all the problems we consider become trivial.

The following theorem was proven in \cite{DBLP:conf/dlt/BarkerFHMN20} and shows that the universality index and the arches can be obtained in linear time w.r.t. the word length. 

\begin{theorem}\label{decomp}
Let $w$ be a word, with $|w|=n$, $\letters(w)=\Sigma$, and $\Sigma = \{1,2,\ldots,\sigma\}$. We can compute in linear 
time $O(n)$ the arch factorisation of $w$, and, as such, $\iota(w)$.
\end{theorem}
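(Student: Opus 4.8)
The plan is to compute the arch factorisation by a single left-to-right scan of $w$, greedily detecting the completion of each arch. I would maintain a running tally of which letters of $\Sigma$ have been seen since the start of the current arch, together with a counter of how many distinct letters are still missing. Concretely, I keep a boolean array $\mathit{seen}[1:\sigma]$ (all entries reset at the start of each arch) and an integer $\mathit{missing}$ initialised to $\sigma$. Scanning position $i$ from $1$ to $n$, whenever I read a letter $\ta=w[i]$ with $\mathit{seen}[\ta]=\texttt{false}$, I set $\mathit{seen}[\ta]\gets\texttt{true}$ and decrement $\mathit{missing}$. The moment $\mathit{missing}$ reaches $0$, the factor read since the previous arch boundary is universal and its last letter is the unique first occurrence completing the alphabet, so it matches \autoref{archfact} exactly; I close the current arch at position $i$, increment the arch counter, and reinitialise $\mathit{seen}$ and $\mathit{missing}$ for the next arch. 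When the scan ends, whatever letters remain since the last boundary form the rest $r(w)$, and the arch counter equals $\iota(w)$ by the remark following \autoref{archfact}.

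The key steps, in order, are: (i) initialise the data structures; (ii) perform the scan, updating $\mathit{seen}$ and $\mathit{missing}$ in $O(1)$ per position and recording an arch boundary whenever $\mathit{missing}=0$; and (iii) read off the factorisation $w=\ar_w(1)\cdots\ar_w(k)\,r(w)$ and set $\iota(w)=k$. Correctness follows because each arch boundary is placed at the \emph{earliest} position at which the alphabet is first completed, which is precisely the defining property of the arch factorisation in \autoref{archfact}: every arch is universal, and its last letter is a first occurrence within that arch (otherwise $\mathit{missing}$ would have hit $0$ earlier). A short induction on the number of completed arches confirms that this greedy placement coincides with the unique arch factorisation.

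The efficiency argument is the heart of the matter. Each of the $n$ positions is processed exactly once, and the per-position work is a constant number of array accesses and arithmetic operations, so the scan costs $O(n)$. The only subtlety is the reinitialisation of $\mathit{seen}$ at every arch boundary: a naive $\Theta(\sigma)$ reset per boundary could cost $\Theta(n\sigma)$ in the worst case. Because we are on an integer alphabet $\Sigma=\{1,\dots,\sigma\}$ with $\sigma\le n$, I avoid this by timestamping: instead of a boolean array I store in $\mathit{seen}[\ta]$ the index of the current arch, and treat $\ta$ as unseen in the current arch precisely when $\mathit{seen}[\ta]$ differs from the current arch counter. This makes reinitialisation implicit and $O(1)$, keeping the total time $O(n)$ and the space $O(\sigma)$.

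The main obstacle I anticipate is exactly this reinitialisation cost; handling it cleanly is what makes the timestamping trick (or an equivalent lazy-clearing scheme) essential, and it is the one place where the integer-alphabet assumption is genuinely used. Everything else is a routine linear scan whose correctness is immediate from the definition of the arch factorisation.
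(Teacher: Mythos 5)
Your greedy single left-to-right scan is exactly the paper's approach: the paper defers the proof of \autoref{decomp} to \cite{DBLP:conf/dlt/BarkerFHMN20} and sketches the same greedy computation of each arch as the shortest universal prefix of the remaining suffix, and your correctness argument matches it. One small remark: the timestamping trick is valid but not actually needed, since every arch contains all $\sigma$ letters and hence has length at least $\sigma$, so there are at most $n/\sigma$ arch boundaries and even naive $\Theta(\sigma)$ resets sum to $O(n)$ --- the $\Theta(n\sigma)$ worst case you worry about cannot occur.
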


More precisely, one computes greedily, in linear time, the following decomposition of $w$ into arches $w=u_1\cdots u_k$ as follows:
\begin{itemize}
\item $u_1$ is the shortest prefix of $w$ with $\letters(u_1)=\Sigma$, or $u_1=w$ if there is no such prefix;
\item if $u_1\cdots u_i = w[1:t]$, for some $i\in [1:k]$ and $t\in[1:n]$, we compute $u_{i+1}$ as the shortest
prefix of $w[t +1:n]$ with $\letters(u_{i+1}) = \Sigma$, or $u_{i+1} = w[t+1:n]$ if there is no such prefix.
\end{itemize}

We will now present two efficient data structures we use in our results. 
First, the {\em interval union-find} data structure~\cite{gabow,union-find}.
\begin{definition}[Interval union-find]\label{UnionFind}
Let $V=[1:n]$ and $S$ a set with $S \subseteq V$. The elements of $S =\{ s_1, \ldots , s_p \}$ are called borders and are ordered $0 = s_0 < s_1 < \ldots < s_p < s_{p+1} = n+1$ where $s_0$ and $s_{p+1}$ are generic borders. For each border $s_i$, we define $V(s_i) =[s_{i-1} +1: s_i]$ as an induced interval. Now, $P(S) \coloneqq \lbrace V(s_i)~|~ s_i \in S \rbrace$ gives an ordered partition of the set $V$. The {\em interval union-find structure} maintains the partition $P(S)$ under the operations:
\begin{itemize}
\item For $u \in V$, $\find(u)$ returns $s_i \in S \cup \lbrace n+1 \rbrace$ such that $u \in V(s_i)$. 
\item For $u \in S$, $\union(u)$ updates the partition $P(S)$ to $P(S \setminus \{ u\} )$. That is, if $u=s_i$, then we replace the intervals $V(s_i)$ and $V(s_{i+1})$ by the single interval $[s_{i-1}+1:s_{i+1}]$ and update the partition so that further $\find$ and $\union$ operations can be performed.
\end{itemize}

\end{definition}
When using the data structure from Definition \ref{UnionFind}, we employ a less technical language: we describe the intervals stored initially in the structure, and then the unions are made between adjacent intervals. We can enhance the data structures so that the $\find$ operation returns both borders of the interval containing the searched value, as well as some other satellite data we decide to associate to that interval. 
The following lemma was shown in~\cite{gabow,union-find}.  

\begin{lemma}\label{LemUnionFind}
One can implement the interval union-find data structure, such that, the initialisation of the structures followed by a sequence of $m\in O(n)$ union and find operations can be executed in $\mathcal{O}(n)$ time and space.
\end{lemma}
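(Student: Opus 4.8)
The plan is to recognise the interval union-find as a structurally restricted case of disjoint-set union and to realise it by a word-packed two-level structure in the spirit of Gabow and Tarjan. First I would reformulate the operations: since $\union(s_i)$ always merges $V(s_i)$ with the adjacent interval $V(s_{i+1})$, maintaining the partition $P(S)$ is equivalent to maintaining the set of \emph{active} borders $S\cup\{n+1\}$ (with the generic border $n+1$ permanently active) under deletions, where $\find(u)$ must return the smallest active border that is at least $u$. The key structural observation is that every union merges neighbouring intervals, so the underlying union tree is a \emph{path} that is known in advance. This is exactly the static-tree special case of union-find for which \cite{gabow,union-find} give a linear-time algorithm, and invoking that result on an instance with $n$ elements and $m\in O(n)$ operations immediately yields the claimed $O(n)$ time and space. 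The remaining paragraphs sketch one concrete realisation of this algorithm, which also explains why the restriction to adjacent unions avoids the inverse-Ackermann overhead of generic union-find.

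For the low level I would partition $[1:n]$ into $\Theta(n/b)$ consecutive blocks of $b=\tfrac12\log n$ positions each. On the assumed unit-cost RAM with logarithmic word size, the active borders inside a single block fit into one machine word, stored as a bitmask. Deleting a border then clears one bit, and locating the smallest active border $\geq u$ within $u$'s own block is a masked ``next set bit'' query; both are $O(1)$ word operations, or, avoiding such primitives, $O(1)$ lookups into precomputed tables indexed by the pair (bitmask, offset) of total size $2^{b}\cdot b=o(n)$. Initialising the tables and the block bitmasks costs $O(n)$ time and $O(n)$ space.

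The high level must handle a $\find$ whose block contains no active border $\geq u$: it has to jump to the first later block that still contains an active border, and each block is emptied at most once, namely when its last active border is deleted. Contracting every block to a single macro-node, this ``skip the empty blocks'' problem is itself an interval union-find instance, but now on only $\Theta(n/b)=\Theta(n/\log n)$ elements with, again, a path-shaped known union tree: a block is unioned rightward into its successor exactly when it becomes empty. Each original $\find$ triggers at most one macro query and each emptied block triggers one macro-union, so the macro instance receives $O(m+n/b)=O(n)$ operations; applying the same linear-time static-tree machinery to it costs $O(n/b+n)=O(n)$. Charging the $O(1)$ low-level work to each of the $m\in O(n)$ operations and to each deletion that empties a block, the whole sequence runs in $O(n)$ time and $O(n)$ space.

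The main obstacle is the amortised accounting at the interface between the two levels: one must verify that every $\find$ performs genuine bit work in only $O(1)$ blocks before the macro structure returns the correct next nonempty block, and that each emptied block induces exactly one macro-contraction, so that the total cost stays genuinely linear rather than degrading to the inverse-Ackermann bound of a generic disjoint-set forest. A secondary but necessary point is justifying the constant-time intra-block primitives on the RAM model, which is precisely where the logarithmic word size is used and where the precomputed-table variant serves as a model-independent fallback. Once these two points are settled, the stated $O(n)$ time and space bounds follow.
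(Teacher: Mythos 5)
Your proposal matches the paper's treatment of this statement: the paper offers no proof of \autoref{LemUnionFind} beyond citing \cite{gabow,union-find}, and your first paragraph makes the same reduction to the static-tree (path) special case of union-find and invokes those very references, with your remaining paragraphs merely sketching the standard micro/macro internals of that cited result. One caution if you intend the sketch as a standalone proof: at the macro level you should not re-invoke ``the same linear-time static-tree machinery'' (that is circular, since the macro instance is an instance of the lemma being proved); the standard argument instead runs plain union-find with path compression and union by rank on the $\Theta(n/\log n)$ macro-nodes, where the $\Theta(n)$ operations make the instance dense enough that the two-parameter inverse-Ackermann cost is $O(1)$ amortised per operation, yielding the $O(n)$ total bound.
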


\vspace*{-3pt}
Finally, we recall the {\em Range Minimum Query} problem, and the main result on it~\cite{rmq}.
\begin{definition}[RMQ]\label{RMQ-def}
Let $A$ be an array with $n$ elements from a well-ordered set. We define {\em range minimum queries} $\RMQ_A$ for the array of $A$: 
$\RMQ_A(i,j)=\argmin \{A[t]\mid t\in [i:j]\}$, for $i,j\in [1:n]$. That is, $\RMQ_A(i,j)$ is the position of the smallest element in the subarray $A[i:j]$; if there are multiple positions containing this smallest element, $\RMQ_A(i,j)$ is the leftmost of them. (When it is clear from the context, we drop the subscript $A$).
\end{definition}

\begin{lemma}\label{RMQ}
Let $A$ be an array with $n$ integer elements. One can preprocess $A$ in $O(n)$ time and produce data structures allowing to answer in constant time {\em range minimum queries} $\RMQ_A(i,j)$, for any $i,j\in [1:n]$. 
\end{lemma}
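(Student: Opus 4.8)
The plan is to establish this classical result via the standard two-level block-decomposition scheme, reducing the general problem to a case that can be tabulated in sublinear space. First I would recall the \emph{sparse table}: precompute $M[i][\ell]=\RMQ_A(i,i+2^\ell-1)$ for every $i\in[1:n]$ and every $\ell$ with $0\le \ell\le \lfloor\log n\rfloor$, each entry being assembled from two entries at level $\ell-1$ by dynamic programming. Any query interval $[i:j]$ is covered by the two (possibly overlapping) intervals of length $2^{\lfloor \log(j-i+1)\rfloor}$ anchored at $i$ and ending at $j$, whose stored minima are compared in $O(1)$ time. This already yields constant-time queries, but its preprocessing costs $O(n\log n)$, which is too expensive.

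To bring preprocessing down to $O(n)$, I would partition $A$ into consecutive blocks of length $b=\tfrac14\log n$ and form the array $A'$ of the $n/b$ block minima, recording for each block the position attaining its minimum. Building the sparse table on $A'$ costs $O\big((n/b)\log(n/b)\big)=O(n)$. A query $\RMQ_A(i,j)$ then splits into at most three candidates: a suffix of the block containing $i$, a maximal run of whole blocks in between (answered by the sparse table on $A'$), and a prefix of the block containing $j$; taking the minimum of the three yields the answer, provided in-block queries can be answered in $O(1)$.

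The key step, and the main obstacle, is answering these in-block queries within the $O(n)$ budget, since a per-block sparse table would again cost $O(n\log n)$ in total. I would resolve this by observing that two blocks admit the same RMQ answer on every sub-interval precisely when their \emph{Cartesian trees} have the same shape, and the number of such shapes on $b$ nodes is the Catalan number $C_b\le 4^b$. With $b=\tfrac14\log n$ this gives $4^b=\sqrt n$ distinct block types. During a single linear scan I would assign each block a type signature (the push/pop sequence of the stack-based Cartesian-tree construction, which fits in $O(b)=O(\log n)$ bits, i.e. $O(1)$ machine words), and precompute, for every type and every pair of in-block endpoints, the position of the minimum — a table of size $O(\sqrt n\cdot b^2)=o(n)$ computable in the same time. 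Each in-block query is then a single table lookup using the stored signature.

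Combining the pieces, every query performs at most two in-block lookups and one sparse-table lookup on $A'$, each in $O(1)$ time, so queries run in constant time, and the total preprocessing is $O(n)$. I expect the delicate points to be the precise choice of $b$ (so that $4^b$ stays $O(\sqrt n)$ while $(n/b)\log(n/b)$ stays $O(n)$), the careful handling of the partial blocks at the query endpoints, and verifying that the type-based tabulation returns the \emph{leftmost} position of the minimum, as required by the tie-breaking convention of \autoref{RMQ-def}.
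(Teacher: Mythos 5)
Your proposal is correct; note that the paper itself states this lemma without proof, simply citing the literature on RMQ, and your argument is exactly the standard construction behind that citation (the Fischer--Heun variant of the Bender--Farach-Colton scheme: sparse table over block minima plus Catalan-bounded Cartesian-tree typing of blocks, with $b=\tfrac14\log n$ so that the type table stays sublinear). Your closing caveats are the right ones to check, in particular that building the per-block Cartesian trees with strict comparisons makes the type-based lookups return the leftmost minimising position, matching the tie-breaking convention of \autoref{RMQ-def}.
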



\section{Toolbox}
In this section, we present data structures which will be decisive in obtaining efficient solutions for the approached problems. Our running example for \autoref{toolbox} will be the word $w=\tb \ta \tn \ta \tn \ta \tb \ta \tn $, on which we illustrate some of the notions we define here. 
Detailed examples are given in \autoref{examples}.

\subsection{The Tools}
\label{toolbox}

For a word $w$ over an alphabet $\Sigma$, a position $j$ of $w$, and a letter $a\in \Sigma$ which occurs in $w[1:j]$, let $\last_j[a]=\max\{i\leq j\mid w[i]=a\}$, the last position where $a$ occurs before $j$; if $a$ does not occur in $w[1:j]$ or for $j=0$, then, by convention, $\last_j[a]=|w|+1$. Let $S_j=\{\last_j[a]\mid a\in \letters(w[1:j])\}$. If $i,j$ are two positions of $w$, let $\Delta(i,j)$ be the number of distinct letters occurring in $w[i:j]$, i.e., $\Delta(i,j)=|\letters(w[i:j])|$; if $i>j$, then $\Delta(i,j)=0$. For a position $i$ of $w$, and a letter $a\in \Sigma$, let $d_{i}[a]=\Delta(\last_i[a],i)$.

\begin{lemma}\label{init}
Let $w$ be a word, with $|w|=n$, $\letters(w)=\Sigma$, and $\Sigma = \{1,2,\ldots,\sigma\}$. We can compute in $O(n)$ the values $\Delta(1,\ell)$, for all $\ell \in [1:n]$. 
\end{lemma}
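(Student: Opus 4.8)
The plan is to compute the quantities $\Delta(1,\ell)$ incrementally as $\ell$ runs from $1$ to $n$, exploiting the fact that $\Delta(1,\ell)$ counts the number of \emph{distinct} letters occurring in the prefix $w[1:\ell]$. The crucial observation is that $\Delta(1,\ell+1)$ differs from $\Delta(1,\ell)$ by exactly one precisely when the letter $w[\ell+1]$ has never appeared in $w[1:\ell]$, and is unchanged otherwise. So the whole computation reduces to detecting, at each position, whether the current letter is being seen for the first time.

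First I would initialise $\Delta(1,0)=0$ together with an auxiliary boolean array $\mathtt{seen}[\,\cdot\,]$ indexed by the letters of $\Sigma=\{1,\dots,\sigma\}$, all entries set to \emph{false}; since $\sigma\le n$, this array fits in $O(n)$ space and can be initialised in $O(n)$ time. Then I would scan $w$ left to right, maintaining a running counter. For each position $\ell\in[1:n]$, inspect the letter $a=w[\ell]$ in constant time (legitimate on the integer-alphabet unit-cost RAM model described in the preliminaries, where $a$ directly indexes the array). If $\mathtt{seen}[a]$ is \emph{false}, increment the counter and set $\mathtt{seen}[a]\gets\text{true}$; otherwise leave the counter untouched. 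In either case set $\Delta(1,\ell)$ to the current value of the counter. Each step does $O(1)$ work, so the scan runs in $O(n)$ total, giving the claimed bound.

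Correctness is an immediate induction on $\ell$: the counter after processing position $\ell$ equals the number of distinct letters among $w[1],\dots,w[\ell]$, because a letter contributes to the count exactly once, at the moment of its first occurrence, which is precisely when $\mathtt{seen}$ flips from \emph{false} to \emph{true}. This matches the definition $\Delta(1,\ell)=|\letters(w[1:\ell])|$.

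Frankly, there is no real obstacle here; the statement is elementary and the only point that deserves a word is why the per-position work is genuinely $O(1)$. This hinges entirely on the integer-alphabet assumption ($\Sigma=\{1,\dots,\sigma\}$ with $\sigma\le n$), which lets each letter serve as a direct array index so that the membership test $\mathtt{seen}[a]$ and its update are constant-time operations rather than, say, a logarithmic dictionary lookup. For a general alphabet one would pay an extra factor, but under the stated model the linear bound is tight and the lemma follows.
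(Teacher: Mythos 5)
Your proposal is correct and essentially identical to the paper's own proof: the paper likewise scans $w$ left to right with a counter $f$ and an occurrence array $C[1:\sigma]$ (playing the role of your $\mathtt{seen}[\,\cdot\,]$), incrementing $f$ exactly at the first occurrence of each letter and recording $\Delta(1,\ell)=f$. Your added remark on why the integer-alphabet assumption makes each step $O(1)$ is a fair observation that the paper leaves implicit.
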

\begin{proof}
The pseudocode for this algorithm is given in \autoref{alg:delta}.

We define an array $C[1:\sigma]$, whose elements are initialised with $0$ and $f=0$. Now, we will traverse the positions of the word left to right.
When we reach position $\ell$, we do the following. 
If $C[w[\ell]]=0$, then we set $C[w[\ell]]=1$ and we increment $f$ by $1$. 
We set $\Delta(1,\ell)=f$.

	\begin{algorithm}[h]
		\SetKwInOut{Input}{Input}
		\SetKwInOut{Output}{Output}
		\Input{word $w$, alphabet $\Sigma$}
		\Output{$\Delta(i - \sigma + 1, i)$ for $i \in [\sigma:n]$}
		\BlankLine
		
		\tcp{initialization}
		int $f\leftarrow 0$; int $n \leftarrow \lvert w \rvert$; int $\sigma \leftarrow \lvert \Sigma \rvert$\;
		int $C[1:\sigma] = 0$\;
		\BlankLine
		
		\For{int $i \leftarrow 1$ \KwTo $n$}{
			\If{$C[w[i]] = 0$}{
				$f \leftarrow f + 1$\;
			}
			$C[w[i]] \leftarrow C[w[i]] + 1$\;
			\BlankLine
			$\Delta(1,i) \leftarrow f$\;
		}
		\caption{Calculation of $\Delta(1, i)$ for all $i \in [1:n]$}
		\label{alg:delta}
	\end{algorithm}
\end{proof}

\begin{lemma}\label{delta-sigma}
	Let $w$ be a word, with $|w|=n$, $\letters(w)=\Sigma$, and $\Sigma = \{1,2,\ldots,\sigma\}$. 
	We can compute in $O(n)$ the values $\Delta(i - \sigma + 1,i)$, for all $i \in [\sigma:n]$. 
\end{lemma}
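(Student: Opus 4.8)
The plan is to compute all requested values with a single left-to-right sweep maintaining a sliding window of fixed width $\sigma$. Observe that $\Delta(i-\sigma+1,i)$ is exactly the number of distinct letters occurring in the factor $w[i-\sigma+1:i]$, a window of length exactly $\sigma$ whose right endpoint $i$ ranges over $[\sigma:n]$. As $i$ increases by one, this window shifts one position to the right: the letter $w[i]$ enters it while the letter $w[i-\sigma]$ leaves it. Thus it suffices to maintain, incrementally, the count of distinct letters in the current window, which is a standard windowed-distinct-count computation.

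Concretely, I would keep a frequency array $C[1:\sigma]$, where $C[a]$ counts the occurrences of letter $a$ in the current window, together with a counter $f$ equal to the number of letters $a$ with $C[a]>0$. First I initialise $C$ with zeros and set $f=0$, then process the initial window $w[1:\sigma]$ by scanning $j$ from $1$ to $\sigma$: whenever $C[w[j]]=0$ I increment $f$, and in either case I increment $C[w[j]]$. After this phase $f=\Delta(1,\sigma)$, which is the output for $i=\sigma$. This step mirrors the counting used in the proof of \autoref{init}.

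For each subsequent $i$ from $\sigma+1$ to $n$ I perform two constant-time updates. First I \emph{remove} the outgoing letter: I decrement $C[w[i-\sigma]]$, and if it drops to $0$ I decrement $f$. Then I \emph{insert} the incoming letter: if $C[w[i]]=0$ I increment $f$, and in either case I increment $C[w[i]]$. The invariant that $f$ equals the number of distinct letters in $w[i-\sigma+1:i]$ is preserved by these two updates, so I output $\Delta(i-\sigma+1,i)=f$ at the end of the iteration.

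Each of the $O(n)$ iterations does a constant amount of work on $C$ and $f$, and the initialisation costs $O(\sigma)=O(n)$, which yields the claimed $O(n)$ total time (and $O(\sigma)$ space for $C$). There is no substantial obstacle here beyond careful bookkeeping: the only point requiring attention is to update $f$ precisely at the transitions between zero and nonzero counts, so that a letter occurring at both the outgoing and incoming ends of the window (the case $w[i-\sigma]=w[i]$) is still counted exactly once; with the threshold checks placed at $0$ this is handled correctly regardless of the order of the two updates.
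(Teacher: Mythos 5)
Your proof is correct and takes essentially the same approach as the paper: a sliding window of fixed width $\sigma$ maintained via a frequency array $C[1:\sigma]$ and a counter $f$ of letters with nonzero count, updated in constant time per shift. The only cosmetic difference is the order of the per-step updates (you remove the outgoing letter $w[i-\sigma]$ before inserting the incoming letter $w[i]$, while the paper's \autoref{alg:delta-sigma} inserts first and then removes), and you correctly note that the zero-threshold checks make either order sound, including the case $w[i-\sigma]=w[i]$.
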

\begin{proof}
The pseudocode for this algorithm is given in \autoref{alg:delta-sigma}.

	We define an array $C[1:\sigma]$, 
	whose elements are initialised with $0$ and $f=0$. 
	Now, we will traverse the positions of the word left to right as shown in \autoref{alg:delta-sigma}.
	
	When we reach position $i$, we do the following. 
	If $C[w[i]]=0$, we increment $f$ by $1$ since we saw a new letter.
	Then, $C[w[i]]$ is incremented by $1$. 
	When $i = \sigma$, we set $\Delta(1,i)=f$.
	When $i > \sigma$, we decrement $C[w[i-\sigma]]$ by one,
	and if the value of $C[w[i-\sigma]]$ is now $0$,
	meaning that it does not occur in $w[i-\sigma+1:i]$,
	we decrement $f$ by $1$.
	Finally, $\Delta(i-\sigma+1,i)$ is set to $f$.
	
	\begin{algorithm}[h]
		\SetKwInOut{Input}{Input}
		\SetKwInOut{Output}{Output}
		\Input{word $w$, alphabet $\Sigma$}
		\Output{$\Delta(i - \sigma + 1, i)$ for $i \in [\sigma:n]$}
		\BlankLine
		
		\tcp{initialization}
		int $f\leftarrow 0$; int $n \leftarrow \lvert w \rvert$; int $\sigma \leftarrow \lvert \Sigma \rvert$\;
		int $C[1:\sigma] = 0$\;
		\BlankLine
		
		\For{int $i \leftarrow 1$ \KwTo $n$}{
			\If{$C[w[i]] = 0$}{
				$f \leftarrow f + 1$\;
			}
			$C[w[i]] \leftarrow C[w[i]] + 1$\;
			\BlankLine
			
			\uIf{$i = \sigma$}{
				$\Delta(1,i) \leftarrow f$\;
			\BlankLine
			
			}\ElseIf{$i > \sigma$}{
				$C[w[i-\sigma]] \leftarrow C[w[i - \sigma]] - 1$\;
				\If{$C[w[i-\sigma]] = 0$}{
					$f \leftarrow f -1$\;
				}
				$\Delta(i - \sigma + 1, i) \leftarrow f$\;
			}
		}
		\caption{Calculation of $\Delta(i - \sigma + 1, i)$ for all $i \in [\sigma:n]$}
		\label{alg:delta-sigma}
	\end{algorithm}
\end{proof}

\begin{lemma}\label{last_occs_ins}
Let $w$ be a word, with $|w|=n$, $\letters(w)=\Sigma$, and $\Sigma = \{1,2,\ldots,\sigma\}$. We can compute in $O(n)$ time $\last_{j\sigma+1}[a]$ and $d_{j\sigma+1}[a]$, for all $a\in \Sigma$ and all integers $j\leq (n-1)/\sigma$.
\end{lemma}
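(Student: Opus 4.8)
The plan is to scan $w$ once from left to right, maintaining the last-occurrence positions of the letters together with an ordering of them by recency, and to read off the required values in a single ``snapshot'' taken every $\sigma$ positions, namely at the positions $p=j\sigma+1$. The whole approach hinges on a combinatorial reformulation of $d_p[a]$. Fix a snapshot position $p=j\sigma+1$. For a letter $a\in\letters(w[1:p])$, a letter $b$ occurs in $w[\last_p[a]:p]$ if and only if $\last_p[b]\geq\last_p[a]$; hence $d_p[a]=\Delta(\last_p[a],p)$ equals the number of distinct letters whose last occurrence within $w[1:p]$ is at least $\last_p[a]$. The positions $\last_p[b]$, for $b\in\letters(w[1:p])$, are pairwise distinct, so listing these letters by strictly decreasing last-occurrence position is well defined; if $a$ is the $r$-th letter in this list (its \emph{recency rank}), then exactly $r$ letters have last occurrence $\geq\last_p[a]$, whence $d_p[a]=r$. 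In particular $w[p]$ has recency rank $1$ and $d_p[w[p]]=1$. Finally, for $a\notin\letters(w[1:p])$ we have $\last_p[a]=n+1>p$, so $\Delta(\last_p[a],p)=0$ and $d_p[a]=0$.

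Given this, I would scan $w$ left to right while maintaining three structures: an array $\last[\cdot]$ initialised to $n+1$ holding the current last-occurrence position of each letter; a doubly-linked list $L$ of the letters seen so far, kept in order of recency with the most recently read letter at the front; and an array of pointers from each letter to its node in $L$. On reading position $i$ with $b=w[i]$, I set $\last[b]=i$, and if $b$ already has a node I unlink it in $O(1)$ (using the stored pointer) and reinsert it at the front of $L$, otherwise I create a new front node. Thus after processing position $i$ the list $L$ contains exactly $\letters(w[1:i])$ ordered from most to least recent. Whenever $i=j\sigma+1$, I take a snapshot: I first set $\last_i[a]=n+1$ and $d_i[a]=0$ for every $a\in\Sigma$, and then I walk $L$ from front to back, recording for the $r$-th encountered letter $a$ the values $\last_i[a]=\last[a]$ and $d_i[a]=r$. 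By the reformulation above these are exactly the desired outputs.

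For the running time, the scan and the per-position list updates each cost $O(1)$, hence $O(n)$ in total. There are $\lfloor(n-1)/\sigma\rfloor+1=O(n/\sigma)$ snapshot positions; each snapshot spends $O(\sigma)$ to set the default values and a further $O(|L|)=O(\sigma)$ to walk the list, so all snapshots together cost $O\big((n/\sigma)\cdot\sigma\big)=O(n)$, and the space is $O(n)$, dominated by the $O(n)$ recorded output values. The step that requires care — and the conceptual heart of the argument — is the reformulation in the first paragraph: it is what lets us read off every $d_{j\sigma+1}[a]$ by one traversal of the recency list, rather than evaluating $\Delta$ from scratch for each letter and snapshot (which would be too slow). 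By contrast, the data-structural bookkeeping, namely the constant-time unlink/reinsert in the doubly-linked list, is routine.
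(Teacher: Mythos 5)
Your proposal is correct and follows essentially the same approach as the paper: a single left-to-right scan maintaining a doubly-linked list of letters ordered by recency of last occurrence (with a pointer array for $O(1)$ unlink/reinsert), plus an $O(\sigma)$-time snapshot at each position $j\sigma+1$, for $O(n)$ total time. The only differences are cosmetic --- you orient the list most-recent-first and assign $d_{j\sigma+1}[a]$ as the recency rank directly, whereas the paper orders the list the other way and counts down from the number of distinct letters seen --- and you spell out the correctness argument ($d_p[a]$ equals the recency rank of $a$) that the paper leaves implicit.
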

\begin{proof}
The pseudocode for this algorithm is given in \autoref{alg:last-d}. 

We define an array $C[1:\sigma]$, whose elements are initialised with $\infty$, a variable $f=0$, as well as a doubly linked list $R$, which will have at most $\sigma$ elements from $[1:\sigma]$, and is empty at the beginning, and an array $P[1:\sigma]$ of pointers to the elements of $R$ (initially they are all set to an undefined value $\infty$). 

Now, we will traverse the positions of the word $w$ left to right. So, we consider each position $i$ of $w$, for $i$ from $1$ to $\left \lfloor \frac{n-1}{\sigma} \right \rfloor\sigma +1$ (the largest number of the form $j\sigma+1$, smaller or equal to $n$).

For position $i$, we do the following three steps. 

\begin{enumerate}
\item If $C[w[i]]=\infty$ then we increment $f$ by $1$. Both when $C[w[i]]=\infty$ or $C[w[i]]\neq \infty$ we set $C[w[i]] =i$. Now, $C[a]$ is the last occurrence of the letter $a\in \Sigma$ in the word $w[1:i]$, for all $a\in \Sigma$ (or $\infty$ if $a$ does not occur in $w[1:i]$). Also, $f$ is the number of elements of $C$ which are not equal to $\infty$, so the number of distinct letters we have seen in $w[1:i]$. 

\item If $w[i]$ does not occur in $R$ (tested by checking if $P[w[i]]=\infty$ or not), insert $w[i]$ at the end of $R$ (and set $P[w[i]]$ to point at the node where $w[i]$ occurs, which we have just created). If $w[i]$ occurs in $R$, remove $w[i]$ from $R$ (the place where $w[i]$ is stored in $R$ is $P[w[i]]$), insert $w[i]$ at the end of $R$, and update $P[w[i]]$ to point to this node of $R$. Now $R$ contains in order (from the front to the end) the $f$ distinct letters occurring in $w[1:i]$ ordered increasingly by the position of their last occurrence, and for each letter $a$ occurring in $w[1:i]$, $P[a]$ is a pointer to the node containing $a$ of $R$.

\item If $i=j\sigma +1$ for some $j$, we need to run the following two special steps. Firstly, we define the array $\last_{j\sigma+1}[\cdot]$, by setting $\last_{j\sigma+1}[a]=C[a]$ for all $a\in \Sigma$. Secondly, we set $g=f$; then, for each element $e$ in $R$, in the order in which these elements occur when traversing $R$ left to right, we set $d_{j\sigma+1}[e]=g$ and decrement $g$ by $1$. 
\end{enumerate}

It is not hard to see that the arrays $\last_{j\sigma+1}[\cdot]$ and $d_{j\sigma+1}[\cdot]$ are correctly computed for all $j\leq (n-1)/\sigma$. The overall time needed to maintain the array $C$ is linear, while computing each of the arrays $\last_{j\sigma+1}[1:\sigma]$ and $d_{j\sigma+1}[1:\sigma]$, for $j\leq n/\sigma$, takes $O(\sigma)$ time. However, as we only need to compute such arrays $O(n/\sigma)$ times, the overall time needed to compute them is $O(n)$. So, the statement holds.

\begin{algorithm}[h]
	\SetKwInOut{Input}{Input}
	\SetKwInOut{Output}{Output}
	\Input{word $w$, alphabet $\Sigma$}
	\Output{$\last_{j\sigma+1}$, $d_{j\sigma+1}$ for $j \leq (n-1)/\sigma$}
	\BlankLine
	
	\tcp{initialization}
	int $f = 0$; int $n \leftarrow \lvert w \rvert$; int $\sigma \leftarrow \lvert \Sigma \rvert$\;
	int $C[1:\sigma] = \infty$; doubly linked list $R$; pointer array $P[1:n]$\;
	\BlankLine
	
	\For{$i = 1$ \KwTo $\left \lfloor \frac{n-1}{\sigma} \right \rfloor\sigma +1$}{
		\tcp{step one}
		\If{$C[w[i]] == \infty$}{
			$f \leftarrow f + 1$\;
		}
		$C[w[i]] \leftarrow i$\;
		\BlankLine
		
		\tcp{step two}
		\If{$P[w[i]] \neq$ null}{
			remove the element in R at pointer $P[w[i]]$\;
		}
		add $w[i]$ to the right end of $R$\;
		$P[w[i]]$ $\leftarrow$ adress of $R[w[i]]$\;
        \BlankLine

        \tcp{step three}
	    \If{$i \mod \sigma = 1$}{
		$j \leftarrow (i-1)/\sigma$\;
		\ForEach{$a \in \Sigma$}{
		    $\last_{j\sigma+1}[a] \leftarrow C[a]$\;
	    }
        int $g \leftarrow f$\;
	    \ForEach{$e \in R$}{
	        $d_{j\sigma+1}[e] = g$\;
	        $g \leftarrow g - 1$\;
        }
      }
}
	\caption{Calculation of $\last_{j\sigma+1}$, $d_{j\sigma+1}$}
	\label{alg:last-d}
\end{algorithm}
\end{proof}

For $w=\tb \ta \tn \ta \tn \ta \tb \ta \tn $, we have $|w|=9$ and $\sigma=3$. In \autoref{init} we compute $\Delta(1,1)=1$, $\Delta(1,2)=2$, and $\Delta(1,\ell)=3$ for $\ell\in [3:9]$. In \autoref{delta-sigma} we compute $\Delta(1,3)=3$, $\Delta(2,4)=\Delta(3,5)=\Delta(4,6)=2$, $\Delta (5,7)=3$, $\Delta(6,8)=2$, and $\Delta(7,9)=3$. In \autoref{last_occs_ins} we compute the arrays $\last_{4}[\cdot]$ and $\last_{7}[\cdot]$. We get: $\last_{4}[\ta]=4$, $\last_{4}[\tb]=1$, $\last_{4}[\tn]=3$, and $\last_{7}[\ta]=6$, $\last_{7}[\tb]=7$, $\last_{7}[\tn]=5$. Therefore, $S_4=\{1,3,4\}$, $S_7=\{5,6,7\}$, and $d_4[\ta]=1$, $d_4[\tb]=3$, $d_4[\tn]=2$, $d_7[\ta]=2$, $d_7[\tb]=1$, $d_7[\tn]=3$.

For a word $w$ and a position $i$ of $w$, let $\univ[i]=\max\{j\mid w[j:i]$ is universal$\}$. That is, for the position $i$ we compute the shortest universal word ending on that position. If there is no universal word ending on position $i$ we set $\univ[i]=0$. 

Further, if $n=|w|$, let $V_w = \{\univ[i]\mid 1\leq i\leq n\}$. In $V_w$ we collect the starting positions of the shortest universal words ending at each position of the word $w$.
Now, for $j\in V_w$, let $L_j=\{i\mid \univ[i]=j\}$; in other words, we group together the positions $i$ of $w$ for which the shortest universal word ending on $i$ starts on some position $j$.
Note that $L_0=\{i\mid w[1:i]$ is not universal$\}$, i.e., the positions of $w$ where no universal word ends (see \autoref{fig_Ljs}).

Several observations are immediate: for $i\in L_j,~i'\in L_{j'}$, we have $i\leq i'$ if and only if~$j\leq j'$.  As each position $i$ of $w$ belongs to a set $L_j$, for some $j\in V_w$, we get that $\{L_j\mid j\in V_w\}$ is a partition of $[1:n]$ into intervals. 
Furthermore, $w[i]\neq w[j]$ for all $i\in L_j$ and $j\neq 0$: if $w[i]$ would be the same as $w[j]$ then $w[j+1:i]$ would also be a universal word, so $i$ would not be in $L_j$. 
Also, if $i=\max(L_j)$ for some $j>0$  then $w[i+1] = w[j]$. Indeed, there exists $j'\in[j+1:i]$ such that $w[j':i+1]$ is universal. But $w[j]$ does not occur in $w[j':i]$, so $w[j]=w[i+1]$ must hold.

Further, we define for all positions $i$ of $w$ the value $\freq[i]=|w[1:i]|_{w[i]}$, the number of occurrences of $w[i]$ in $w[1:i]$. Also, let $T[i]=\min \{|w[i+1:n]|_{a} \mid a \in \Sigma \}$, for $i\in [0,n-1]$,  
be the least number of occurrences of a letter in $w[i+1:n]$; set $T[n]=0$. 


\begin{lemma}\label{help1}
Let $w$ be a word, with $|w|=n$, $\letters(w)=\Sigma$, and $\Sigma = \{1,2,\ldots,\sigma\}$. 
We can compute in $O(n)$ time the following data structures:
$1.$ the array $\univ[\cdot]$; 
$2.$ the set $V_w$ and the lists $L_j$, for all $j\in V_w\setminus\{0\}$;
$3.$ the array $\freq[\cdot]$;
$4.$ the array $T[\cdot]$;
$5.$ the values $\last_{j-1}[w[i]]$, for all $j\in V_w$ and all $i \in L_j$; 
$6.$ the values $\last_{i-1}[w[i]]$, for all $i \in [2:n]$.
\end{lemma}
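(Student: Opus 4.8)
The guiding idea is that every one of the six structures can be produced by a single left-to-right or right-to-left sweep of $w$, maintaining a few auxiliary arrays of size $O(\sigma)\subseteq O(n)$, and that the only non-obvious items (parts $4$ and $5$) become linear once we exploit the monotonicity of $\univ[\cdot]$ stated above. First I would compute $\univ[\cdot]$ (part $1$) by the standard smallest-window-containing-all-letters sliding window. Maintaining a left pointer $\ell$, a frequency array indexed by $\Sigma$, and the number of distinct letters currently in $w[\ell:i]$, I advance $i$ from $1$ to $n$; after inserting $w[i]$ I advance $\ell$ as long as $w[\ell]$ occurs at least twice in $w[\ell:i]$ (so the set of distinct letters is unchanged). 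After this shrinking, the leftmost letter occurs exactly once, so if all $\sigma$ letters of $\letters(w[1:i])$ are present I set $\univ[i]=\ell$, and otherwise $\univ[i]=0$. Both pointers only move forward, so this is $O(n)$. Part $2$ is then immediate: a linear scan collects $V_w$, and bucketing each $i$ into $L_{\univ[i]}$ produces the lists; since $\univ[\cdot]$ is non-decreasing these lists are contiguous intervals.

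Parts $3$ and $6$ need only a single running table. For $\freq[\cdot]$ I keep a counter array over $\Sigma$, increment the entry of $w[i]$ at step $i$, and read off $\freq[i]$. For part $6$, keeping the last seen position of each letter yields $\last_{i-1}[w[i]]=\prev[i]$ directly, read just before updating the table at $i$. For the array $T[\cdot]$ (part $4$) I sweep right to left, maintaining suffix frequencies $\mathrm{cnt}[a]=|w[i+1:n]|_a$; the obstacle here is that recomputing $\min_a \mathrm{cnt}[a]$ naively costs $O(\sigma)$ per position, giving $O(n\sigma)$. I avoid this with a bucket structure: an array $B$ where $B[c]$ counts how many letters currently have suffix-frequency $c$, together with the running minimum $m$. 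Decreasing $i$ by one increments exactly one value $\mathrm{cnt}[a]$, moving a single letter from $B[c]$ to $B[c+1]$; since counts only grow by one at a time, the minimum can rise only by one, and exactly when $B[m]$ drops to $0$. Hence each step is $O(1)$ amortized and $T[i]=m$ after the update, giving $O(n)$ overall.

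The main obstacle is part $5$: computing $\last_{j-1}[w[i]]$ for $j=\univ[i]$ and every $i\in L_j$, i.e. the last occurrence of the letter $w[i]$ strictly before position $j$. Following previous-occurrence pointers from $i$ is not bounded, because $w[i]$ may occur many times inside the minimal window $w[j:i]$ (e.g. $w=\tb\ta\tc\ta$ has $\ta$ twice in the window ending at its last position). Instead I exploit that the query prefixes $\univ[i]-1$ are non-decreasing in $i$. I process $i$ from $1$ to $n$ while maintaining a pointer $q$ and a table $\mathrm{LastOcc}[a]$ holding the last occurrence of $a$ in $w[1:q]$. Before answering at $i$ I advance $q$ up to $\univ[i]-1$, updating $\mathrm{LastOcc}$ for each newly included position; this is legitimate since $\univ[i]\le i$ forces $q<i$, so $q$ never overtakes the current position. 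Then $\last_{\univ[i]-1}[w[i]]=\mathrm{LastOcc}[w[i]]$. Because $\univ[\cdot]$ is non-decreasing, $q$ never moves backward, so the total advancement cost is $O(n)$ and each of the $n$ lookups costs $O(1)$. The degenerate case $j=0$ carries no $\last_{-1}$ query and is simply skipped, matching the restriction $j\in V_w\setminus\{0\}$ used for the lists $L_j$. Summing the six sweeps, all data structures are built in $O(n)$ time and space.
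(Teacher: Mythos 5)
Your proposal is correct and follows essentially the same single-sweep strategy as the paper's proof: a two-pointer window for $\univ[\cdot]$ (the paper scans right-to-left, you scan left-to-right, both relying on the fact that the start of the shortest universal window is the unique occurrence of its first letter), counter and last-occurrence tables for parts 3 and 6, and for part 5 the same monotonicity-of-$\univ[\cdot]$ idea that the paper implements by answering the whole batch of queries $L_{i+1}$ when its sweep reaches position $i$. The only real deviation is part 4, where your right-to-left bucket structure $B[\cdot]$ is correct but avoidable: the paper sweeps left-to-right instead, decrementing the suffix counts, so the minimum is non-increasing and is maintained in $O(1)$ per step simply by $m\leftarrow\min(m,C[w[i]])$, with no buckets needed.
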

\begin{proof}
We present first an algorithm for items $1$ and $2$, then an algorithm for items $3$ and $4$, and finally an algorithm for items $5$ and $6$. 
\begin{figure}	
	\tikzset{every picture/.style={line width=0.75pt}} 
	
	\begin{tikzpicture}[x=0.75pt,y=0.75pt,yscale=-1,xscale=1]
	
	\draw    (117.5,120) -- (540,120) ;
	\draw    (200,100) .. controls (252,-4.67) and (438.5,19.5) .. (480,100) ;
	\draw    (201.95,96.69) .. controls (245.37,26.08) and (333.43,72.42) .. (340,100) ;
	\draw [shift={(200,100)}, rotate = 299.53] [fill={rgb, 255:red, 0; green, 0; blue, 0 }  ][line width=0.08]  [draw opacity=0] (10.72,-5.15) -- (0,0) -- (10.72,5.15) -- (7.12,0) -- cycle    ;
	\draw    (200,110) -- (200,130) ;
	\draw [color={rgb, 255:red, 83; green, 83; blue, 83 }  ,draw opacity=1 ][line width=3]    (340,120) .. controls (341.67,118.33) and (343.33,118.33) .. (345,120) .. controls (346.67,121.67) and (348.33,121.67) .. (350,120) .. controls (351.67,118.33) and (353.33,118.33) .. (355,120) .. controls (356.67,121.67) and (358.33,121.67) .. (360,120) .. controls (361.67,118.33) and (363.33,118.33) .. (365,120) .. controls (366.67,121.67) and (368.33,121.67) .. (370,120) .. controls (371.67,118.33) and (373.33,118.33) .. (375,120) .. controls (376.67,121.67) and (378.33,121.67) .. (380,120) .. controls (381.67,118.33) and (383.33,118.33) .. (385,120) .. controls (386.67,121.67) and (388.33,121.67) .. (390,120) .. controls (391.67,118.33) and (393.33,118.33) .. (395,120) .. controls (396.67,121.67) and (398.33,121.67) .. (400,120) .. controls (401.67,118.33) and (403.33,118.33) .. (405,120) .. controls (406.67,121.67) and (408.33,121.67) .. (410,120) .. controls (411.67,118.33) and (413.33,118.33) .. (415,120) .. controls (416.67,121.67) and (418.33,121.67) .. (420,120) .. controls (421.67,118.33) and (423.33,118.33) .. (425,120) .. controls (426.67,121.67) and (428.33,121.67) .. (430,120) .. controls (431.67,118.33) and (433.33,118.33) .. (435,120) .. controls (436.67,121.67) and (438.33,121.67) .. (440,120) .. controls (441.67,118.33) and (443.33,118.33) .. (445,120) .. controls (446.67,121.67) and (448.33,121.67) .. (450,120) .. controls (451.67,118.33) and (453.33,118.33) .. (455,120) .. controls (456.67,121.67) and (458.33,121.67) .. (460,120) .. controls (461.67,118.33) and (463.33,118.33) .. (465,120) .. controls (466.67,121.67) and (468.33,121.67) .. (470,120) .. controls (471.67,118.33) and (473.33,118.33) .. (475,120) .. controls (476.67,121.67) and (478.33,121.67) .. (480,120) -- (480,120) ;
	\draw    (340,110) -- (340,130) ;
	\draw    (480,110) -- (480,130) ;
	
	\draw (91,117) node [anchor=north west][inner sep=0.75pt]   [align=left] {$\displaystyle w$};
	\draw (475,140) node [anchor=north west][inner sep=0.75pt]   [align=left] {$\displaystyle i_{1}$};
	\draw (152,142) node [anchor=north west][inner sep=0.75pt]   [align=left] {$\displaystyle univ[ i] =j$};
	\draw (335,140) node [anchor=north west][inner sep=0.75pt]   [align=left] {$\displaystyle i_{0}$};

	\end{tikzpicture}
\caption{{ The set of elements with $\univ[i]=j$ forms an interval: $ L_{j} \ =[i_{0}:\ i_{1}]$.}} \label{fig_Ljs}
\end{figure}
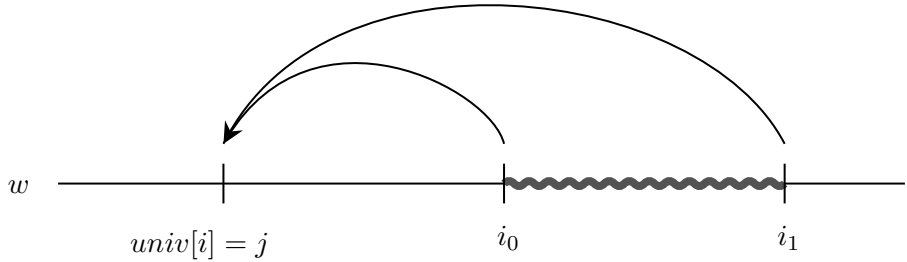

{\em Algorithm for 1,2.} (pseudocode: \autoref{alg:univ-v-l}) To compute $\univ$, $V_w$, and the lists $L_j$ we will use a {\em two pointer-strategy}. That is, we go through the positions of $w$ from $n$ to $1$ with two pointers $a$ and $b$. Initially, $a=n$ and $b=n+1$. We also define an array $C$ with $\sigma$ elements, all initially set to $0$, and a variable $f$, initialised with $0$. The elements of the array $\univ$ are initialised with $0$.

Now we repeat the following three-step procedure until $b$ is $0$. 

In the first step, we execute the following loop. While $b>1$ and $f<\sigma$, we do the following:  decrement $b$ by $1$, increment $C[w[b]]$ by $1$ and if $C[w[b]]=1$ then increment $f$ by $1$, too. 

In the second step, when the loop finished, if $f<\sigma$ and $b=1$, then set $b=0$. 
If $f=\sigma$, then we add $b$ to $V_w$.

In the third step, while $f=\sigma$, we do the following steps. First, set $\univ[a]=b$ and store $a$ in $L_b$.  Decrement $C[w[a]]$ by $1$. Now, if $C[w[a]]=0$, then decrement $f$ by $1$, too. Finally, decrement $a$ by $1$. 

The idea is relatively simple: we start with the factor $w[a]$ and then try to extend it to the left (i.e., produce the factors $w[b:a]$ with $b\leq a$) while keeping track in $f$ how many different letters of $\Sigma$ we met (i.e., $f=\Delta[b:a]$), and their respective counts in the array $C[\cdot]$. As soon as we have seen all letters (i.e., $f=\sigma$), we know that $w[b:a]$ is the shortest universal word ending on $a$. Thus, we can store $b$ in $V_w$, and set $\univ[a]=b$. Now, we try to identify all the universal words starting on $b$, ending to the left of $a$ (their respective ending positions and $a$ are exactly the elements of $L_b$). This is done similarly: we move the pointer $a$ now to the left one position at a time, and see which letters of $\Sigma$ are removed from the word $w[b:a]$, using the array $C$ where we counted the letter-occurrences. As soon as we have a letter that occurs $0$ times in $w[b:a]$ we stop, as $w[b:a]$ is no longer universal. We then repeat the procedure: move $b$ to the left first till we again have that $w[b:a]$ is universal, then move $a$ to the left till $w[b:a]$ is no longer universal, and so on. By the observations we made on the structure of the lists $L_j$, this approach is clearly correct. 

The complexity is linear, as each pointer visits each position of $w$ once. 

\SetKw{KwAnd}{and}
\SetKwFunction{add}{add}
\begin{algorithm}[h!] 
	\SetKwInOut{Input}{Input}
	\SetKwInOut{Output}{Output}
	\Input{word $w$, alphabet $\Sigma$}
	\Output{$univ$, $V_w$, $L_j$ for $j \leq \sigma$}
	\BlankLine
	
	\tcp{initialization}
	int $f\leftarrow 0$; int $a \leftarrow n$; int $b\leftarrow n+1$; int $n \leftarrow \lvert w \rvert$; int $\sigma \leftarrow \lvert \Sigma \rvert$\;
	int $C[1:\sigma] = 0$, $\univ[1:n] = 0$\;
	\BlankLine
	
	\While{$b > 0$}{
	\tcp{step one}
	\While{$f < \sigma$ \KwAnd $b > 1$}{
	    $b \leftarrow b -1$\;
	    \If{$C[w[b]] = 0$}{
	        $f \leftarrow f + 1$\;
        }
	    $C[w[b]] \leftarrow C[w[b]] + 1$\;
    }
    \BlankLine
    
    \tcp{step two}
    \uIf{$f = \sigma$}{
    	$V_w.\add(b)$\;
    }\ElseIf{$f < \sigma$ \KwAnd $b = 1$}{
        $b \leftarrow 0$\;
    }
    \BlankLine
    
    \tcp{step three}
    \While{$f = \sigma$}{
    	$univ[a] = b$\;
        $L_b.\add(a)$\;
        $C[w[a]] \leftarrow C[w[a]] - 1$\;
        \If{$C[w[a]] = 0$}{
        	$f \leftarrow f - 1$\;
        }
        $a \leftarrow a - 1$\;
    }}
	\caption{Calculation of $univ$, $V_w$, $L_j$}
	\label{alg:univ-v-l}
\end{algorithm}

{\em Algorithm for 3,4.}  (pseudocode: \autoref{alg:freq-t}) To compute $\freq[\cdot]$ we use a straightforward strategy. We use an array $C[\cdot]$ with $\sigma$ elements, initially set to $0$. Then, for $i$ from $1$ to $n$ we increment $C[w[i]]$ by $1$ and set $\freq[i]=C[w[i]]$. This clearly takes linear time. At the end of this traversal of the word, $C[a]$ is the number of occurrences of $a$ in $w$. We will show now how the array $T$ is computed. Let $x$ be the letter of $\Sigma$ such that $C[x]$ is the smallest value of $C$ and $m=C[x]$. We set $T[0]=m$. Now, for $i$ from $1$ to $n-1$ we do the following three steps. Firstly, we decrement $C[w[i]]$ by $1$. Secondly, if $C[w[i]]<m$ then we set $m=C[w[i]]$ and $x=w[i]$. Thirdly, we set $T[i]=m$. It is immediate that $T$ is correctly computed and that the computation takes linear time.

\SetKw{KwAnd}{and}
\SetKwFunction{add}{add}
\begin{algorithm}[h!]
	\SetKwInOut{Input}{Input}
	\SetKwInOut{Output}{Output}
	\Input{word $w$, alphabet $\Sigma$}
	\Output{$freq$, $T$}
	\BlankLine
	
	\tcp{initialization}
	int $n \leftarrow \lvert w \rvert$; int $\sigma \leftarrow \lvert \Sigma \rvert$\;
	int $C[1:\sigma] = 0$; int $freq[n]$; int $T[n]$\;
	\BlankLine
	
	\For{$i \leftarrow 1$ \KwTo $n$}{
	    $C[w[i]] \leftarrow C[w[i]] + 1$\;
        $freq[i] = C[w[i]]$\;
    }
	\BlankLine
	
	int $x$\;
	int $m \leftarrow n+1$\;
	\For{$i \leftarrow 1$ \KwTo $\sigma$}{
	    \If{$C[w[i]] < m$}{
            $m \leftarrow C[w[i]]$\;
            $x \leftarrow w[i]$\;
        }
    }
	\BlankLine
	
	$T[0] \leftarrow m$\;
	\For{$i \leftarrow 1$ \KwTo $n-1$}{
		\tcp{step one}
		$C[w[i]] \leftarrow C[w[i]] -1$\;
		\BlankLine
		
		\tcp{step two}
		\If{$C[w[i]] < m$}{
			$m \leftarrow C[w[i]]$\;
			$x \leftarrow w[i]$\;
		}
		\BlankLine
		
		\tcp{step three}
		$T[i] \leftarrow m$\;
	}
	\caption{Calculation of $freq$ and $T$}
	\label{alg:freq-t}
\end{algorithm}

{\em Algorithm for 5,6.} (pseudocode: \autoref{alg:last}) For the computation of all the values $\last_{j-1}[w[i]]$, for all $j\in V_w$ and all $i \in L_j$, and the values $\last_{i-1}[w[i]]$, for all $i \in [2:n]$, we do the following. We use an array $L[\cdot]$ with $\sigma$ elements, initially set to $n+1$. Then, for $i$ from $0$ to $n$ we do the following two steps. If $i>0$, set $\last_{i-1}[w[i]]=L[w[i]]$ and $L[w[i]]=i$. If $i+1\in V_w$, then we go through the elements $e$ of list $L_{i+1}$ and set $\last_{i}[w[e]]=L[w[e]]$. This takes linear time, as the time needed to execute the iteration of the loop for some $i$ is either $O(1)$ if $i+1\notin V_w$ or $1+O(|L_{i+1}|)$ if $i+1\in V_w$. This adds up to $O(\sum_{j\in V_w}|L_j|)=O(n)$.

It is important to note here that each $\last_i[\cdot]$ is implemented as a list (implemented statically) with exactly one element if $i+1\notin V_w$ (i.e., $\last_i[w[i+1]]$) and exactly $|L_{i+1}\cup\{w[i+1]\}|$ elements (i.e., $\last_i[w[i+1]]$ and $\last_i[w[j]]$ with $j\in L_{i+1}$). In the second case, the list is implemented as an array, indexed by the the letters in $L_{i+1}\cup \{w[i+1]\}$.

\SetKw{KwAnd}{and}
\SetKwFunction{add}{add}
\begin{algorithm}[h!]
	\SetKwInOut{Input}{Input}
	\SetKwInOut{Output}{Output}
	\Input{word $w$, alphabet $\Sigma$, $V_w$, lists $L_j$ for $j \in V_w$}
	\Output{values $\last_{j-1}[w[i]]$ for all $j \in V_w$ and $i \in L_j$, values $\last_{i-1}[w[i]]$ for all $i \leq n$}
	\BlankLine
	
	\tcp{initialization}
	int $n \leftarrow \lvert w \rvert$; int $\sigma \leftarrow \lvert \Sigma \rvert$\;
	int $L[1:\sigma] = n+1$\;
	\BlankLine
	
	\For{$i \leftarrow 0$ \KwTo $n-1$}{
		\If{$i + 1 \in V_w$}{
			\ForEach{$e \in L_{i+1}$}{
				$\last_i[w[e]] \leftarrow L[w[e]]$\;
			}
		}
		\BlankLine
		
		\If{$i < n$}{
			$\last_{i}[w[i+1]] \leftarrow L[w[i+1]]$\;
			$L[w[i+1]] \leftarrow i+1$\;
		}
	}
	\caption{Calculation of $\last_{i}[w[j]]$ for $i+1 \in V_w$ and $j \in L_{i+1}$, and $\last_{i}[w[i+1]]$ for $i \leq n$.}
	\label{alg:last}
\end{algorithm}

This concludes our proof.
 \end{proof}
 Consider again $w=\tb \ta \tn \ta \tn \ta \tb \ta \tn$. In \autoref{help1} we compute the following values. Firstly, $\univ[1]=\univ[2]=0$, $\univ[\ell]=1$ for $\ell\in [3:6]$, $\univ[7]=\univ[8]=5$, $\univ[9]=7$. Thus, $V_w=\{0,1,5,7\}$ and $L_0=[1:2]$, $L_1=[3:6]$, $L_5=[7:8]$, $L_7=[9:9]$. Secondly, $\freq[1]=1$, $\freq[2]=\freq[3]=1$, $\freq[4]=\freq[5]=2$, $\freq[6]=3$, $\freq[7]=2$, $\freq[8]=4$, $\freq[9]=3$. Moreover, $T[0]=2$, $T[\ell]=1$ for $\ell \in [1:6]$, and $T[\ell]=0$ for $\ell \in [7:9]$. Then, for $j=1$, we have $\last_0[a]=0$, for $a\in \{\ta,\tb,\tn\}$; for $j=5$, we have $\last_{4}[\tb]=1$ and  $\last_{4}[\ta]=4$; for $j=7$, we have $\last_{6}[\tn]=5$. Finally, $\last_0[\tb]=10$, $\last_1[\ta]=10$, $\last_2[\tn]=10$, $\last_{3}[\ta]=2$, $\last_4[\tn]=3$, $\last_5[\ta]=4$, $\last_6[\tb]=1$, $\last_7[\ta]=6$, $\last_8[\tn]=5$.
 

Together with the string-processing data structures we defined above, we need the following general technical data structures lemma. This lemma (combined with some combinatorial observations) will be used to speed up some of our dynamic programming algorithms.

In this lemma we process a list $A$ which initially has $\sigma$ elements, and in which we insert, in successive steps, $\sigma$ new elements, by appending them always at the same end. 
For simplicity, we can assume that the list $A$ is a sequence with $2\sigma$ elements (denoted $A[i]$, with $i\in [1:2\sigma]$), out of which the last $\sigma$ are initially undefined. The \nth{$i$} insertion would, consequently, mean setting $A[\sigma+i]$ to the actual value that we want to insert in the list $A$. In our lemma we will also repeatedly perform an operation which decrements the values of some elements of the list $A$. However, we will not require to be able to explicitly access, after every operation, all the elements of the list (so we will not need to retrieve the values $A[i]$). Consequently, we will not maintain explicitly the value of all the elements of $A$ (that is, we will not update the elements affected by decrements). We are only interested in being able to retrieve  (by value and position), at each moment, the smallest element and the last element of $A$. Thus, throughout the computation, we only maintain a subset of important elements of $A$, including the aforementioned two. We can now state our result, whose proof is based on \autoref{LemUnionFind}.

\begin{lemma}\label{speedUp}
Let $A$ be a list with $\sigma$ elements (natural numbers) and let $m=\sigma$. We can execute (in order) the sequence of $\sigma$ operations $o_1,\ldots, o_\sigma$ on $A$ in overall $O(\sigma)$ time, where $o_i$ consists of the following three steps, for $i\in [1:m]$:
\begin{enumerate}
\item Return $e=\argmin\{A[i]\mid i\in [1:m]\}$ and $A[e]$.
\item For some $j_i\in [1:m]$, decrement all elements $A[j_i], A[j_i+1], \ldots, A[m]$ by $1$.
\item For some natural number $x_i$, append the element $x_i$ to $A$ (i.e., set $A[m+1]$ to $x_i$), and increment $m$ by $1$ (i.e., set $m$ to $m+1$). 
\end{enumerate}
\end{lemma}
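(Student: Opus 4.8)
The plan is to maintain the list $A$ \emph{implicitly}, through a monotonic stack of what I will call \emph{records}: the positions $q_1<q_2<\cdots<q_s$ at which the suffix minimum of $A$ is attained non-strictly, i.e. $A[q_t]\le A[p]$ for all $p>q_t$. These satisfy $A[q_1]\le A[q_2]\le\cdots\le A[q_s]$, with $q_s=m$ always, and $q_1$ is precisely the \emph{leftmost} position realising $\min\{A[i]\mid i\in[1:m]\}$; hence Step~$1$ of each $o_i$ is answered in $O(1)$ by returning $q_1$ and $A[q_1]$. The crucial trick to avoid ever materialising the decremented array is to store, as satellite data, only the consecutive gaps $\delta_1=A[q_1]$ and $\delta_t=A[q_t]-A[q_{t-1}]\ge 0$ for $t\ge 2$, plus the current value $A[q_s]$ of the top record kept in one variable. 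Since every decrement in Step~$2$ has $j_i\le m=q_s$, the top is decremented at every step, so $A[q_s]$ is trivially maintained by subtracting $1$ each time; any individual record value needed later is recovered by walking down the stack and subtracting the stored gaps, at $O(1)$ per record visited.

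The decisive structural fact, which I would establish first, is that a suffix decrement $[j_i:m]$ affects the record set \emph{only at one boundary}. All records $q_t\ge j_i$ drop by $1$ simultaneously, so their mutual gaps and their record status are unchanged; all records $q_t<j_i$ keep their values and are still records; and, because a position $p\in[j_i:m]$ that was dominated from the right by some $p'>p\ge j_i$ is still dominated after both drop by $1$, no new record is created inside the decremented suffix. The only thing that can change is the single gap $\delta_t$ at the leftmost record $q_t\ge j_i$: it decreases by $1$, and if it reaches $0$ the preceding record $q_{t-1}$ loses its (non-strict) record status and is absorbed, with the two gaps combining additively. Consecutive ties can make this absorption cascade leftwards, but each cascade step simply destroys one record and merges two adjacent gaps. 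Thus the effect of Step~$2$ on the stack is: locate $q_t$, do one decrement of $\delta_t$, and perform a (possibly empty, possibly iterated) removal of boundary records.

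This is where I would invoke the interval union--find structure of \autoref{LemUnionFind}. I partition the current positions $[1:m]$ into the intervals $I_t=[q_{t-1}+1:q_t]$, one per record, so that $\find(j_i)$ returns the governing record $q_t$ in amortised constant time; destroying a boundary record in Step~$2$ is exactly a $\union$ of the two adjacent intervals, and popping records in Step~$3$ (the standard monotonic push: pop every record of value strictly larger than the appended $x_i$, then append the new interval $[\cdot:m+1]$ for position $m+1$) is again a sequence of $\union$s of adjacent intervals. The amortised accounting then closes the bound: at most $\sigma$ records exist initially and at most one is created per append, so at most $2\sigma$ records are ever created and hence at most $2\sigma$ are ever destroyed; together with one $\find$ per decrement this is $O(\sigma)$ union/find operations, which by \autoref{LemUnionFind} run in $O(\sigma)$ total, while the remaining per-step bookkeeping (the single gap update, the constant-time query, the value recovery charged to destroyed records) is linear overall. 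Building the initial stack from the $\sigma$ starting elements is a single monotonic-stack pass in $O(\sigma)$.

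The main obstacle I expect is not the union--find plumbing but the combination of the two invariants that make it sound: first, proving rigorously that a suffix decrement can neither create an interior record nor disturb any record away from the boundary (so that the expensive ``update many elements'' is always compressible into one gap decrement plus boundary merges), and second, showing that the lazy value encoding via gaps and a single top-value variable lets every comparison done while popping in Step~$3$ be performed in $O(1)$ without ever reconstructing the full array. Getting the non-strict/tie-breaking convention exactly right is the delicate point, since it is what guarantees that $q_1$ reports the \emph{leftmost} minimiser required by $\argmin$ and that cascading absorptions terminate with all gaps remaining non-negative.
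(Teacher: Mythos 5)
Your proposal is correct and takes essentially the same route as the paper's proof: the same monotonic stack of suffix-minima records encoded by consecutive gaps as satellite data, the same compression of a suffix decrement into one $\find$ plus a single gap decrement plus boundary merges, the same monotone pop on append, all implemented over the interval union--find of \autoref{LemUnionFind} with the identical amortised $O(\sigma)$ accounting. The only divergence is cosmetic --- the paper keeps \emph{strict} records (rightmost minimisers, strictly positive gaps, hence at most one merge per decrement and the rightmost minimiser reported), while your non-strict convention admits zero gaps and leftward cascades --- so just be careful that under your convention a merge must fire when a gap would go \emph{negative}, not when it reaches $0$, exactly the tie-breaking subtlety you flag yourself.
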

\begin{proof}
Firstly, we will {\em run a preprocessing} of $A$.

We begin by defining recursively a finite sequence of positions as follows:
\begin{itemize}
\item $a_1$ is the rightmost position of $A$ on which $\min\{A[i]\mid i\in [1:\sigma]\}$ occurs;
\item for $i\geq 2$, if $a_{i-1}<\sigma$, then $a_i$ is the rightmost position on which $\min\{A[i]\mid i\in [a_{i-1}+1:\sigma]\}$ occurs; 
\item for $i\geq 2$, if $a_{i-1}=\sigma$, then we can stop, our sequence will have $i-1$ elements.
\end{itemize}

Let $p$ be the number of elements in the sequence defined above, i.e., our sequence is $a_1,\ldots,a_p$. For convenience, let $a_0=0$. Then the sequence $a_1,\ldots,a_p$ fulfils the following properties:
\begin{itemize}
\item $a_p=\sigma$ and $a_i>a_{i-1}$, for all $i\in [1:p]$;
\item $A[a_i] > A[a_{i-1}]$ for all $i\in [2:p]$;
\item for all $i\in [1:p]$, we have $A[a_i]<A[t]$, for all $t\in [a_i+1:\sigma]$;
\item for all $i\in [1:p]$, we have $A[a_i]\leq A[t]$, for all $t\in [a_{i-1}+1:a_i]$.
\end{itemize}
By definition, for $i\in [1:p]$ we have $A[a_i]=\min\{A[t]\mid t\in [a_{i-1}+1:\sigma]\}$, $A[a_i]<\min\{A[t]\mid t\in [a_{i}+1:\sigma]\} $, and $a_1=\min\{A[i]\mid i\in[1:\sigma]\}$. Clearly, we have $a_p=\sigma$.

The positions $a_1,\ldots,a_p$ can be computed in linear time $O(\sigma)$, in reversed order. As we do not know from the beginning the value of $p$, we will compute a sequence $b_1, b_2, \ldots$ of positions as follows. We start with $b_1=\sigma$, $t=\sigma-1$, and $i=2$. Then, while $t\geq 1$ we do the following case analysis. If $A[t]<b_{i-1}$, then set $b_i=t$, increment $i$ by $1$, and decrement $t$ by $1$.  Otherwise, if $A[t]\geq b_{i-1}$, just decrement $t$ by $1$. It is straightforward that this process takes $O(\sigma)$ time, and, when we have finished it, the number $i$ is exactly the number $p$, and $a_i=b_{p-i+1}$.

Another observation is that, for $a_0=0$, the intervals $[a_{i-1}+1,a_{i}]$, for $i\in[1,p]$, define a partition of the interval $[1:\sigma]$ into $p$ intervals. Therefore, we can define a partition of the interval $[1:2\sigma]$ into the intervals $[a_{i-1}+1:a_{i}]$, for $i\in[1,p]$, and $[t:t]$, for $t\in [\sigma+1:2\sigma]$. Thus, we construct in linear time, according to \autoref{UnionFind}, an interval union-find data structure for the interval $[1:2\sigma]$, as induced by the intervals $[1:a_1]$, $[a_{1}+1:a_{2}], \ldots$ $[a_{p-1},a_p]$, $[\sigma+1:\sigma+1]$, $[\sigma+2:\sigma+2], \ldots$ $[2\sigma:2\sigma]$. 

Let us now take $m=\sigma$ (and assume the convention $A[0]=0$). We associate as satellite data to each interval $[x:y]$ with $y\leq m$ from our interval union-find data structure the value $A[y]-A[x-1]$. 

This entire preprocessing takes clearly $O(\sigma)$ time.

In order to explain how the operations are implemented, we assume as invariant that the following properties are fulfilled before $o_i$ is executed, for $i\in[1:\sigma]$:
\begin{itemize}
\item $A$ contains $m$ elements;
\item all intervals $[x:y]$ with $y>m$ from our interval union-find data structure are singletons (i.e., $x=y$);
\item for each interval $[x:y]$ with $y\leq m$, we have the associated satellite data $A[y]-A[x-1]$;
\item for each interval $[x:y]$ with $y\leq m$, we have that $A[y]\leq A[t]$ for $t\in [x:m]$ and $A[y]< A[t]$ for $t\in [y+1:m]$;
\item we have stored in a variable $\ell$ the value $A[m]$.
\end{itemize}

This clearly holds after the preprocessing step, so before executing $o_1$. 

Let us now explain {\em how the operation $o_i$ is executed}. 

{\em The first step} of $o_i$ is to return $e=\min\{A[i]\mid i\in [1:m]\}$ and $i_e$ the rightmost position of the list $A$ such that $A[i_e]=e$. We execute $\find(1)$ to return the first interval $[1:i_e]$ stored in our interval union-find data structure; $A[i_e]$ is the satellite data associated to this interval (by convention, $A[i_e]-A[1-1]=A[i_e]-A[0]=A[i_e]$). The fact that the invariant property holds shows that $i_e$ is correctly computed.

{\em The second step} of $o_i$ is to decrement all elements $A[j_i], A[j_i+1], \ldots, A[m]$ by $1$, for some $j_i \in [1:m]$. We will make no actual change to the elements of the list $A$, as this would be too inefficient, but we might have to change the state of the union-find data structure, as well as the satellite data associated to some intervals of this structure.

So, let $[x:y]$ be the interval containing $j_i$, returned by $\find(j_i)$, and assume first that $x\neq 1$. 

According to the invariant, $A[j_i]\geq A[y]$ and $A[y]>A[x-1]$. After decrementing the elements $A[j_i], A[j_i+1], \ldots, A[m]$ by $1$, the difference $A[t]-A[t']$ is exactly the same as before, for all $t,t'\in [j_i:m]$. In consequence, the relative order between the elements of the suffix $A[j_i:m]$ of the list $A$ is preserved. Also, for all $t\in [x:j_i-1]$, we have now $A[t]>A[y]$ (before decrementing $A[y]$ we had only $A[t]\geq A[y]$). However, the difference $A[y]-A[x-1]$ is now decreased by $1$. If it stays strictly positive, we just update the satellite data of the respective interval (by decrementing it accordingly by $1$). If $A[y]-A[x-1]=0$, then we make the $\union$ of the interval $[z:x-1]$ (returned by $\find(x-1)$) and $[x:y]$ to obtain the new interval $[z:y]$. Its satellite data is $A[y]-A[z-1]=A[x-1]-A[z-1]$, so the same as the satellite data that was before associated to $[z:x-1]$. The invariant is clearly preserved, as, even after decrementing it, $A[y]$ (which is now equal to $A[x-1]$) is strictly greater than $A[z-1]$, strictly smaller than $A[t]$, for $t\in [y+1:m]$, and smaller than or equal to $A[t]$, for $t\in [z:y]$. 

If the interval containing $j_i$ is $[1:y]$, then we just update the satellite data of the respective interval by decrementing it by $1$. 

{\em The third step} of $o_i$ is to append the element $x_i$ to $A$ (i.e., set $A[m+1]=x_i$), for some natural number $x_i$, and increment $m$ by $1$. 

We implement this as follows. Let $t=m$ and $q=A[m]$ (this value is stored and maintained using the variable $\ell$). While $t\geq 1$ do the following. Let $[z,t]$ be the interval returned by $\find (t)$; we have $q=A[t]$. If $q\geq x_i$, make the union of $[z:t]$ and $[t+1:m+1]$; update $q=q-(A[t]-A[z-1])=A[z-1]$ (using the satellite data $A[t]-A[z-1]$ associated to $[z,t]$), update $t=z-1$, and reiterate the loop.  If $q< x_i$, exit the loop. After this, we set $m$ to $m+1$ and $\ell=x_i$. 

It is not hard to see that after running this third step, so before executing operation $o_{i+1}$, the invariant is preserved. 

Performing operation $o_i$ takes an amount of time proportional to the sum of the number of $\union$ and the number of $\find $ operations executed during its three steps. By \autoref{LemUnionFind}, this means that executing all operations $o_1,\ldots, o_\sigma$ takes in total at most $O(\sigma)$ time. 
\end{proof} 

\newcommand\ua{\uparrow}

\subsection{Examples}
\label{examples}

These examples are based on (and supposed to be a companion in understanding) the algorithms and proofs in \autoref{toolbox}.

Most of the algorithms which we exemplify in this section use a temporary array $C$ to keep track of the letters occurring in $w$, but in slightly different ways. We will explain in each case what is the semantic of the elements in the array $C[\cdot]$.

For convenience, we assume $\Sigma=\{\ta,\tb,\tn\}$ for the examples instead of $\Sigma=\{1,2,3\}$. Let $w=\tb\ta\tn\ta\tn\ta\tb\ta\tn$ and thus $n=9$ and $\sigma=3$. 

In \autoref{init} we want to compute $\Delta(1,1),\dots,\Delta(1,9)$. Therefore, we traverse the word from left to right, i.e. from $\ell=1$ to $\ell=9$ and we maintain an array $C$ of length $3$ as well as a counter $f$. In this lemma, when reaching position $i$ of the word, $C[a]=1$ if and only if $|w[1:i]|_{a}\neq 0$, for $a\in \{\ta,\tb,\tn\}$. This results in the following computation:
\begin{center}
\scalebox{1}{\begin{tabular}{c|ccccccccc}
 & $\tb$ & $\ta$ & $\tn$ & $\ta$ & $\tn$ & $\ta$ & $\tb$ & $\ta$ & $\tn $ \\
 \hline
$\ell$           & 1 & 2 & 3 & 4 & 5 & 6 & 7 & 8 & 9\\ \hline
$C[\ta]$         & 0 & 1 & 1 & 1 & 1 & 1 & 1 & 1 & 1\\
$C[\tb]$         & 1 & 1 & 1 & 1 & 1 & 1 & 1 & 1 & 1\\
$C[\tn]$         & 0 & 0 & 1 & 1 & 1 & 1 & 1 & 1 & 1\\ \hline
$f$              & 1 & 2 & 3 & 3 & 3 & 3 & 3 & 3 & 3 \\ 
\hline
$\Delta(1,\ell)$ & 1 & 2 & 3 & 3 & 3 & 3 & 3 & 3 & 3 
\end{tabular}}
\ .
\end{center}

Notice that we have $\sigma=3$. Hence for \autoref{delta-sigma}, we only consider $i\in[3:9]$ and we want to compute $\Delta(1,3)$, $\Delta(2,4)$, $\Delta(3,5)$, $\Delta(4,6)$, $\Delta(5,7)$, $\Delta(6,8)$, and $\Delta(7,9)$. In this case, when processing position $i$, $C[a]=|w[i-\sigma+1,i]|_a,$ for $a\in \{\ta,\tb,\tn\}$.
\begin{center}
\scalebox{1}{\begin{tabular}{c|ccccccccc}
 & $\tb$ & $\ta$ & $\tn$ & $\ta$ & $\tn$ & $\ta$ & $\tb$ & $\ta$ & $\tn $ \\
 \hline
$i$                    & 1 & 2 & 3 & 4 & 5 & 6 & 7 & 8 & 9  \\ \hline
$C[\ta]$               & 0 & 1 & 1 & 2 & 1 & 2 & 1 & 2 & 1\\
$C[\tb]$               & 1 & 1 & 1 & 0 & 0 & 0 & 1 & 1 & 1 \\
$C[\tn]$               & 0 & 0 & 1 & 1 & 2 & 1 & 1 & 0 & 1\\ \hline
$f$                    & 1 & 2 & 3 & 2 & 2 & 2 & 3 & 2 & 3\\ 
$\Delta(1,i)$          & 1 & 2 & 3 & 3 & 3 & 3 & 3 & 3 & 3\\ \hline
$\Delta(i-\sigma+1,i)$ & - & - & 3 & 2 & 2 & 2 & 3 & 2 & 3
\end{tabular}}
\end{center}

In \autoref{last_occs_ins} we determine for all $j\leq\frac{n-1}{\sigma}=\frac{5}{3}$ the values $\last_{j\sigma+1}[a]$ and $d_{j\sigma+1}[a]$ for all $a\in\Sigma$. The way the array $C[\cdot]$ is used in this case is a bit different: when processing position $i$,  $C[a]$ is the position of the last occurrence of $a$ in $w[1:i],$ for $a\in \{\ta,\tb,\tn\}$.

{
\begin{center}
\scalebox{1}{\begin{tabular}{c|ccccccc}
 & $\tb$ & $\ta$ & $\tn$ & $\ta$ & $\tn$ & $\ta$ & $\tb$ \\
 \hline
$i$              & 1             & 2           & 3               & 4               & 5               & 6               & 7\\ \hline
$C[\ta]$         & $\infty$      & 2           & 2               & 4               & 4               & 6               & 6\\
$C[\tb]$         & 1             & 1           & 1               & 1               & 1               & 1               & 7 \\
$C[\tn]$         & $\infty$      & $\infty$    & 3               & 3               & 5               & 5               & 5 \\ \hline
$f$              & 1             & 2           & 3               & 3               & 3               & 3               & 3\\
$R$              & $(\tb)$       & $(\tb,\ta)$ & $(\tb,\ta,\tn)$ & $(\tb,\tn,\ta)$ & $(\tb,\ta,\tn)$ & $(\tb,\tn,\ta)$ & $(\tn,\ta,\tb)$\\
$P[\cdot]$       & $(\infty,1,\infty)$ & $(2,1,\infty)$ & $(2,1,3)$       & $(3,1,2)$       & $(2,1,3)$       & $(3,1,2)$       & $(2,3,1)$\\
\hline
$\last_{i}[\cdot]$ &             &             &                 & $[4,1,3]$       &                 &                 & $[6,7,5]$\\
$d_{i}[\cdot]$   &               &             &                 & $[1,3,2]$       &                 &                 & $[2,1,3]$
\end{tabular}}
\end{center}
}

In the table above, $P[a] $ is a pointer to the position of the list $R$ where $a\in \{\ta,\tb,\tn\}$ is stored.

Finally we have a look at the algorithms for \autoref{help1}. For the first algorithm (\autoref{alg:univ-v-l}) we get the following table. We show the state of the arrays $C$ after each iteration of the while-loop from {\tt step one}. In this case, after each iteration of the while-loop from {\tt step one}, $C[x]$ stores the number of occurrences of $x$ in $w[b:a]$, for $x\in \{\ta,\tb,\tn\}$.
With $k$ we simply count how many times the while-loop from {\tt step one} was executed:
\begin{center}
\scalebox{1}{\begin{tabular}{c|cccccccccc}
$k$                               & 1 & 2 & 3       & 4       & 5         & 6         & 7         & 8         & 9         & 10\\ \hline
$a$                               & 9 & 9 & 9       & 8       & 8         & 7         & 6         & 6         & 6         & 6 \\
$b$                               & 9 & 8 & 7       & 6       & 5         & 5         & 4         & 3         & 2         & 1\\ \hline
$C[\ta]$                          & 0 & 1 & 1       & 2       & 2         & 1         & 2         & 2         & 3         & 3\\
$C[\tb]$                          & 0 & 0 & 1       & 1       & 1         & 1         & 0         & 0         & 0         & 1\\
$C[\tn]$                          & 1 & 1 & 1      & 0       & 1         & 1         & 1         & 2         & 2         & 2\\ \hline
$f$                               & 1 & 2 & 3       & 2       & 3         & 3         & 2         & 2         & 2         & 3\\ \hline
$V_{\tb\ta\tn\ta\tn\ta\tb\ta\tn}$ &   &   & $\{7\}$ & $\{7\}$ & $\{7,5\}$ & $\{7,5\}$ & $\{7,5\}$ & $\{7,5\}$ & $\{7,5\}$ & $\{7,5,1\}$\\
$\univ[9]$                        &   &   & 7       & 7       & 7         & 7         & 7         & 7         & 7         & 7\\
$\univ[8]$                        &   &   &         &         & 5         & 5         & 5         & 5         & 5         & 5\\
$\univ[7]$                        &   &   &         &         &           & 5         & 5         & 5         & 5         & 5\\
$\univ[6]$                        &   &   &         &         &           &           &           &           &           & 1 \\ \hline
$L_7$                             &   &   & $\{9\}$ & $\{9\}$ & $\{9\}$   & $\{9\}$   & $\{9\}$   & $\{9\}$   & $\{9\}$   & $\{9\}$\\
$L_5$                             &   &   &         &         & $\{8\}$   & $\{8,7\}$ & $\{8,7\}$ & $\{8,7\}$ & $\{8,7\}$ & $\{8,7\}$\\
$L_1$                             &   &   &         &         &           &           &           &           &           & $\{6\}$
\end{tabular}}\ .
\end{center}

The while-loop in {\tt step three} is now used and we will obtain $\univ[5]=\univ[4]=\univ[3]=1$, and $5,4,3$ are all added in $L_1$. So $L_1=\{3,4,5,6\}$. The rest of the values in the $\univ$ array are left as initialized, namely $0$, and $L_0=\{1,2\}$.

For the second algorithm (\autoref{alg:freq-t}), when reaching position $i$ of the word, $C[a]=|w[1:i]|_{a}$, for $a\in \{\ta,\tb,\tn\}$. Thus, we compute:
\begin{center}
\scalebox{1}{
\begin{tabular}{c|ccccccccc}
 & $\tb$ & $\ta$ & $\tn$ & $\ta$ & $\tn$ & $\ta$ & $\tb$ & $\ta$ & $\tn $ \\
 \hline
$i$            & 1 & 2 & 3 & 4 & 5 & 6 & 7 & 8 & 9\\ \hline
$C[\ta]$       & 0 & 1 & 1 & 2 & 2 & 3 & 3 & 4 & 4\\
$C[\tb]$       & 1 & 1 & 1 & 1 & 1 & 1 & 2 & 2 & 2\\
$C[\tn]$       & 0 & 0 & 1 & 1 & 2 & 2 & 2 & 2 & 3\\ \hline
$\freq[\cdot]$ & 1 & 1 & 1 & 2 & 2 & 3 & 2 & 4 & 3
\end{tabular}}\ .
\end{center}
For the computation of $T$ set $x=\tb$ and $m=2$ (as determined by the fact that $C[\tb]$ is the minimum in $C$). This implies $T[0]=2$. Thus, we get for the computation of $T$:
\begin{center}
\scalebox{1}{
\begin{tabular}{c|cccccccc}
$i$            & 1     & 2     & 3     & 4     & 5     & 6     & 7     & 8 \\ \hline
$C[\ta]$       & 4     & 3     & 3     & 2     & 2     & 1     & 1     & 0 \\
$C[\tb]$       & 1     & 1     & 1     & 1     & 1     & 1     & 0     & 0 \\
$C[\tn]$       & 3     & 3     & 2     & 2     & 1     & 1     & 1     & 1 \\ \hline
$m$            & 1     & 1     & 1     & 1     & 1     & 1     & 0     & 0 \\
$x$            & $\tb$ & $\tb$ & $\tb$ & $\tb$ & $\tb$ & $\tb$ & $\tb$ & $\tb$\\
$T$            & 1     & 1     & 1     & 1     & 1     & 1     & 0     & 0
\end{tabular}}\ .
\end{center}
Note that $T[9]$ is also set to $0$. Now, in each step, when considering the letter $a$, we decrement $C[a]$ by $1$, and then compute again the minimum of $C$.

For the last algorithm (\autoref{alg:last}) we get, with $V_{\tb\ta\tn\ta\tn\ta\tb\ta\tn}=\{7,5,1\}$, $L_1=\{3,4,5,6\}$ and $w[3]=\tn$, $ w[4]=\ta$, $w[5]=\tn$, $w[6]=\ta$, $L_5=\{7,8\}$ and $w[7]=\tb$ and $w[8]=\ta$, $L_7=\{9\}$ and $w[9]=\tn$:
\begin{center}
\scalebox{1}{\begin{tabular}{c|cccccccccc}
$i$       &    & 0  & 1  & 2 & 3 & 4 & 5 & 6 & 7 & 8\\ \hline
$\last_i [\cdot]$ & & [10,10,10] & [10,-,-] & [-,-,10] & [2,-,-] & [4,1,3] & [4,-,-] & [-,1,5] & [6,-,-] & [-,-,5] \\ \hline
$L[\ta]$  & 10 & 10 & 2  & 2 & 4 & 4 & 6 & 6 & 8 & 8\\
$L[\tb]$  & 10 & 1  & 1  & 1 & 1 & 1 & 1 & 7 & 7 & 7\\
$L[\tn]$  & 10 & 10 & 10 & 3 & 3 & 5 & 5 & 5 & 5 & 9
\end{tabular}}.
\end{center}
 In the table above, $\last_i [\cdot]=(\last_i[\ta], \last_i[\tb], \last_i[\tn])$. This representation is chosen for the ease of understanding. However, note that each $\last_i[\cdot]$ is implemented as a list with exactly one element if $i+1\notin V_w$ (i.e., $\last_i[w[i+1]]$) and exactly $\min\{|\Sigma|, |L_{i+1}|+1\}$ elements (i.e., $\last_i[w[i+1]]$ and $\last_i[w[j]]$ for $j\in L_{i+1}$) if $i+1\in V_w$. 


\section{Edit Distance}\label{editDist}

We are interested in computing the minimal number of edit operations we need to apply to a word $w$, with $|w|=n$, $\letters(w)=\Sigma$, with universality index $\iota(w)$, so that it is transformed into a word with universality index $k$, w.r.t. the same alphabet $\Sigma$. The edit operations considered are the usual ones (insertion, deletion, substitution), and the number we want to compute can be seen as the {\em edit distance} between $w$ and the set of $k$-universal words over $\Sigma$. 

However, when we want to obtain a $k$-universal word with $k>\iota(w)$, then it is enough to consider only insertions. Indeed, deleting a letter of a word can only restrict the set of subsequences of the respective word, while in this case we are interested in enriching it. Substituting a letter might make sense, but it can be simulated by an insertion: assume one wants to substitute the letter $\ta$ on position $i$ of a word $w$ by a $\tb$. It is enough to insert a $\tb$ next to position $i$, and the set of subsequences of $w$ is enriched with all the words that could have appeared as subsequences of the word where $\ta$ was actually replaced by $\tb$. We might have some extra words in the set of subsequences, which would have been eliminated through the substitution, but it does not affect our goal of reaching $k$-universality. So, to increase the universality index of a word it is enough to use insertions.

When we want to obtain a word with universality index $k$, for $k<\iota(w)\leq n/\sigma$, then it is enough to consider only deletions. Assume that we have a sequence of edit operations that transforms the word $w$ into a word $w'$ with universality index $k$. Now, remove all the insertions of letters from that sequence. The word $w''$ we obtain by executing this new sequence of operations clearly fulfils $\iota(w'')\leq \iota(w')$. Further, in the new sequence, replace all substitutions with deletions. We obtain a word $w'''$ with a set of subsequences strictly included in the one of $w''$, so with $\iota(w''')\leq \iota(w'')$. As each deletion changes the universality index by at most $1$, it is clear that (a prefix of) this new sequence of deletion operations witnesses a shorter sequence of edit operations which transforms $w$ into a word of universality index $k$. Thus, to decrease the universality index of a word it is enough to use deletions.

Finally, in a third case, one might be interested in what happens if we only 
use substitutions. In this way, we can both decrease and increase the 
universality index of a word. Moreover, one can see the minimal number of 
substitutions needed to transform $w$ into a $k$-universal word as the {Hamming distance} between $w$ and the set of $k$-universal words.

We will discuss each of these cases separately. 

\subsection{Insertions}\label{inst}

\begin{theorem}\label{ins-distance}
Let $w$ be a word, with $|w|=n$, $\letters(w)=\Sigma$, and $\Sigma = \{1,2,\ldots,\sigma\}$. Let $k\geq \iota(w)$ be an integer. We can compute the minimal number of insertions needed to apply to $w$ in order to obtain a $k$-universal word (w.r.t. $\Sigma$) in $O(nk)$~time if $k\leq n$ and $O(n^2 + T(n,\sigma,k))$ time otherwise, where $T(n,\sigma,k)$ is the time needed to compute $(k-n)\sigma $. 
\end{theorem}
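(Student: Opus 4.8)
The plan is to reduce the problem to a combinatorial optimisation over factorisations of $w$ and then solve the resulting dynamic program using the data structure of \autoref{speedUp}.

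First I would establish the key reduction. By \autoref{theodlt}, a word $v$ is $k$-universal if and only if $v=v_1\cdots v_k$ with $\letters(v_i)=\Sigma$ for all $i$. Since we only insert letters, any reachable $k$-universal word $v$ contains $w$ as a subsequence, and the factorisation $v=v_1\cdots v_k$ induces a factorisation $w=y_1\cdots y_k$ (the $y_i$ being the letters of $w$ embedded into $v_i$, some possibly empty). In block $i$ every letter of $\Sigma\setminus\letters(y_i)$ must be created by an insertion, so at least $\sigma-|\letters(y_i)|$ insertions occur inside $v_i$; summing gives at least $k\sigma-\sum_i|\letters(y_i)|$ insertions. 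Conversely, for any factorisation $w=y_1\cdots y_k$ we may insert the $\sigma-|\letters(y_i)|$ missing letters into each $y_i$, producing a $k$-universal word with exactly $\sum_i(\sigma-|\letters(y_i)|)$ insertions. Hence the sought minimum equals $k\sigma-M_k$, where
\[
M_k=\max\Bigl\{\textstyle\sum_{i=1}^{k}|\letters(y_i)|\;\Bigm|\;w=y_1\cdots y_k\Bigr\}.
\]

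So I need $M_k$. Writing $f_j[i]$ for the maximum of $\sum_{t\le j}|\letters(y_t)|$ over factorisations of $w[1:i]$ into $j$ (possibly empty) factors, we have $f_1[i]=\Delta(1,i)$ (computable in $O(n)$ by \autoref{init}) and $f_j[i]=\max_{0\le i'\le i}\bigl(f_{j-1}[i']+\Delta(i'+1,i)\bigr)$, with $M_k=f_k[n]$. Evaluating this directly is $O(n^2k)$, so the crux is to compute each layer $f_j[\cdot]$ from $f_{j-1}[\cdot]$ in $O(n)$ total. Fix $j$ and set $h_i[i']=f_{j-1}[i']+\Delta(i'+1,i)$, so that $f_j[i]=\max_{i'\le i}h_i[i']$. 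When $i$ increases by one and $c=w[i]$ has previous occurrence at position $q$ (with $q=0$ if $c\notin w[1:i-1]$), the window $[i'+1,i]$ gains $c$ exactly for $i'\in[q,i-1]$; thus passing from $h_{i-1}$ to $h_i$ increments every entry on the suffix $[q,i-1]$ by one and then appends the new entry $h_i[i]=f_{j-1}[i]$. This is precisely the interface of \autoref{speedUp} (after negating values, so that suffix-increments become suffix-decrements and the requested maximum becomes a minimum): a list supporting suffix-decrements, appends, and extremum queries. Its union-find engine, by \autoref{LemUnionFind}, handles $O(n)$ such operations on a list of length $O(n)$ in $O(n)$ time, so one layer is processed in $O(n)$; the previous-occurrence positions $q$ do not depend on $j$ and are precomputed once by a single left-to-right scan. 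Over the $k$ layers this yields $M_k$, and hence $k\sigma-M_k$, in $O(nk)$ time when $k\le n$.

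For $k>n$ I would give a closed form. Each factor contributes at most its own length, so $\sum_i|\letters(y_i)|\le\sum_i|y_i|=n$, whence $M_k\le n$; and for $k\ge n$ the factorisation placing each of the $n$ letters in its own singleton block, with the remaining $k-n$ blocks empty, attains $\sum_i|\letters(y_i)|=n$. Thus $M_k=n$ and the answer is $k\sigma-n$. Running the layered dynamic program up to layer $n$ costs $O(n^2)$ (certifying $M_n=n$), after which only the product $k\sigma$, equivalently $(k-n)\sigma$, must be formed for the possibly huge integer $k$, accounting for the $O(n^2+T(n,\sigma,k))$ bound.

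The \emph{main obstacle} is the second paragraph: making each layer transition run in amortised $O(1)$ per position. The recurrence couples both endpoints of the window through $\Delta(i'+1,i)$, so naive maximisation rescans the whole prefix; the decisive point is recognising that advancing $i$ acts on $h$ only as a single suffix-increment followed by an append, which is exactly what \autoref{speedUp} was built to support. Verifying that this reformulation is faithful — that the affected range is always the suffix $[q,i-1]$ and that the maximum is preserved under the lazy (non-)updates the data structure performs — is the delicate part of the argument.
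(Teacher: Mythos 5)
Your proof is correct, and although it evaluates the same recurrence as the paper --- your $f_t[\ell]$ is exactly $t\sigma-M[\ell][t]$ in the paper's notation, so the complement reformulation via \autoref{theodlt} is an affine change of variables rather than a new decomposition --- your efficient evaluation takes a genuinely different route. The paper first proves a domination observation (if $w[\ell'+1]$ occurs twice in $w[\ell'+1:\ell]$, the candidate $\ell'$ is dominated), restricting the minimisation to the at most $\sigma$ last-occurrence positions $S_\ell$, enlarged to a superset $S'_\ell$ of size at most $2\sigma$; it then computes each column in phases of $\sigma$ positions, re-initialising a $\sigma$-length list per phase from the arrays of \autoref{last_occs_ins}, so that \autoref{speedUp} applies literally as stated, and it verifies a per-operation invariant (its Claims 1 and 2). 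You instead keep all candidates $i'\le i$ alive and observe that advancing $i$ acts on the whole list as one suffix update plus one append; this eliminates the domination argument, the sets $S'_\ell$, the phase bookkeeping and the invariant, at the price that \autoref{speedUp} cannot be cited verbatim: it is stated for $\sigma$ initial elements and $\sigma$ operations (and for minima under suffix-decrements of naturals, whereas you need the negated maximum version). This is a presentational rather than a mathematical gap: nothing in its proof uses that the size parameter equals $|\Sigma|$, and \autoref{LemUnionFind} already gives $O(n)$ total time for $O(n)$ operations on a universe of size $O(n)$, so your rescaled variant holds by the same argument --- but you should state and prove that variant explicitly instead of citing the lemma directly. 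For $k>n$, your closed form $M_k=n$ (singleton blocks attain it, and $\sum_i|\letters(y_i)|\le\sum_i|y_i|=n$ bounds it) is cleaner than the paper's argument embedding $w$ into at most $n$ arches of an optimal solution, and it even shows the $O(n^2)$ dynamic-programming phase is unnecessary in that case, the answer being simply $k\sigma-n=n(\sigma-1)+(k-n)\sigma$; running the DP to layer $n$, as you suggest, merely keeps you within the stated $O(n^2+T(n,\sigma,k))$ bound.
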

\begin{proof}
{\bf Case 1.} Let us assume first that $k\leq n$. We structured our proof in such a way that the idea of the solution, as well as the actual computation steps, and the arguments supporting their correctness are clearly marked. See also \autoref{alg:insertions}.

\smallskip

{\emph{$\S$ General approach.}} We want to transform the word $w$ into a $k$-universal word with a minimal number of insertions. Assume that the word we obtain this way is $w'$, and $|w'|=m$. Thus, $w'$ has a prefix $w'[1:m']$ which is $k$-universal, but $w'[1:m'-1]$ is not $k$-universal. Moreover, $w'[1:m']$ is obtained from a prefix $w[1:\ell]$ of $w$, and $w'[m'+1:m]=w[\ell+1:n]$. Indeed, any insertion done to obtain $w'[m'+1:m]$ can be simply omitted and still obtain a $k$-universal word from $w$, with a lower number of insertions. 

Consequently, it is natural to compute the minimal number of insertions needed to transform $w[1:\ell]$ into a $t$-universal word, for all $\ell \leq n$ and $t\leq k$. Let $M[\ell][t]$ denote this number. By the same reasoning as above, transforming (with insertions) $w[1:\ell]$ into a $t$-universal word means that there exists a prefix $w[1:\ell']$ of $w[1:\ell]$ which is transformed into a $(t-1)$-universal word and $w[\ell'+1:\ell]$ is transformed into a $1$-universal word. Clearly, the number of insertions needed to transform $w[\ell'+1:\ell]$ into a $1$-universal word is $\sigma-\Delta(\ell'+1,\ell)$, i.e., the number of distinct letters not occurring in $w[\ell'+1:\ell]$. As we are interested in the minimal number of insertions needed to transform $w[1:\ell]$ into a $t$-universal word, we need to find a position $\ell'$ such that the total number of insertions needed to transform $w[1:\ell']$ into a $(t-1)$-universal word and $w[\ell'+1:\ell]$ into a $1$-universal word is minimal.

\smallskip

\SetKw{KwAnd}{and}
\begin{algorithm}[h!]
	\SetKwInOut{Input}{Input}
	\SetKwInOut{Output}{Output}
	\Input{word $w$, alphabet $\Sigma$, int $k$}
	\Output{minimal number of insertions}
	\BlankLine
	
	\tcp{initialization}
	int $n \leftarrow \lvert w \rvert$; int $\sigma \leftarrow \lvert \Sigma \rvert$;  int $d\gets 0$\;
	int $M[n][k]$\;
	\BlankLine
	
	\tcp{initialise first column of $M$}
	\For{$l = 1$ \KwTo $n$}{
		$M[l][1] \leftarrow \sigma - \Delta(1,l)$\;
	}
	\BlankLine
	
	\tcp{efficient variant}
	\For{$t = 2$ \KwTo $k$}{
		\tcp{$\leq (n-1)/\sigma$ phases}
		\For{$j=0$ \KwTo $\lceil (n-1)/\sigma\rceil $}{
			int $A[\sigma][3]$ (list of triples including satellite data); int $\pos[\sigma]$\;
			\For{$a = 1$ \KwTo $\sigma$}{
				\If{$d_{j\sigma + 1}[a] > 0$}{
					int $i \leftarrow \sigma - d_{j\sigma + 1}[a]$; $d\gets d+1$\; 
					$A[i+1][1] \leftarrow M[\last_{j\sigma+1}[a]-1][t-1]+i$\;
					$\pos[a] \leftarrow i + 1$\;
					\tcp{satellite data for $A[i+1]$}
					$A[i+1][2] \leftarrow \last_{j\sigma+1}[a]$\;
					$A[i+1][3] \leftarrow a$\;
				}
				
				\If{$d_{j\sigma + 1}[a] = 0$ \KwAnd $\last_{j\sigma+1}[a] = n+1$}{
					$\pos[a] = 0$\;
				}
			}
			\BlankLine
			
			\tcp{exactly $d$ letters of $\Sigma$,  occur in $w[1:j\sigma+1]$}
			\tcp{(ordered increasingly by their last occurrence):}
			\tcp{ $A[\sigma-d+1][3], A[\sigma-d+2][3],\ldots, A[\sigma-1][3], A[\sigma-][3].$ }
			set all elements in $A[1:\sigma - d]$ to $\infty$ (they cannot be changed)\;
			$m = \sigma$\;
			\BlankLine
			
			\tcp{apply sequence of operations as in \autoref{speedUp}}
			\For{$i = 1$ \KwTo $\sigma$}{
				$q \leftarrow$ minimum of $A$\;
				$M[j\sigma+i][t] \leftarrow \min\{q, M[j\sigma +i][t-1]+\sigma\}$\;
				$a = w[j\sigma + i + 1]$\;
				decrement positions $\pos[a]+1, \pos[a]+2, \ldots, m$ by $1$\;
				append $M[j\sigma + i][t-1]+(\sigma -1)$ to $A$ (i.e., set $A[m+1][1]$ to this value)\;
				\tcp{and add satellite data} 
				$A[m+1][2] \leftarrow j\sigma + i + 1$\;
				$A[m+1][3] \leftarrow a$\;
				$m \leftarrow m + 1$\;
				$\pos[a] \leftarrow m$\;
			}
		}
	}
	
	\Return $M[n][k]$\;
	\caption{The efficient algorithm from Case $1$ of \autoref{ins-distance} (on insertions).}
	\label{alg:insertions}
\end{algorithm}

{\emph{$\S$ Algorithm - initial idea.}} So, for $\ell\in [1:n]$ and $t\in [1:k]$, $M[\ell][t]$ is the minimal number of insertions needed to make $w[1:\ell]$ $t$-universal. By the explanations above, we get the following recurrence 
$M[\ell][t]=\min \{M[\ell'][t-1]+(\sigma-\Delta(\ell'+1,\ell)) \mid \ell'\leq \ell\}$. Clearly, $M[\ell][1]=\sigma - \Delta(1,\ell)$. 
Also, it is immediate to note that $M[\ell][t]\geq M[\ell''][t]$ for all $\ell\leq \ell''$. Indeed, transforming a word into a $t$-universal word can always be done with at most as many insertions as those used in transforming any of its prefixes into a $t$-universal word.

\begin{figure}[h!]
    \centering
    \begin{minipage}[c]{0.5\linewidth}
    	\begin{tikzpicture}[scale=0.6,every node/.style={scale=0.6}]
          
            \node[circle] (t0) at (0,0) {$w$};
            \node[wens, minimum height=6mm, minimum width=5mm, align=center, right= 0mm of t0] (t1) {$1$};
            \node[wens, minimum height=6mm, minimum width=40mm, align=center, right= 0mm of t1] (t2) {$\ldots$};
            \node[ens, minimum height=6mm, minimum width=10mm, align=center, right= 0mm of t2] (t3) {$\ell'$};
            \node[ens, minimum height=6mm, minimum width=10mm, align=center, right= 0mm of t3] (t4) {$\ell'+1$};
            \node[ns, minimum height=6mm, minimum width=20mm, align=center, right= 0mm of t4] (t5) {$\ldots$};
            \node[wns, minimum height=6mm, minimum width=10mm, align=center, right= 0mm of t5] (t6) {$\ell$};
            \node[wens, minimum height=6mm, minimum width=15mm, align=center, right= 0mm of t6] (t7) {$\ldots$};
            
            \draw[decoration={brace,raise=5pt, amplitude=5pt},decorate] (t1.north) -- node[above=10pt, xshift=-5mm]{$(t-1)$-universal $\Rightarrow M[\ell',t-1]$} (t3.north);

            \draw[decoration={brace,raise=5pt, amplitude=5pt},decorate] (t4.north) -- node[above=10pt, xshift=5mm]{universal $\Rightarrow \sigma - \Delta(\ell'+1,\ell)$} (t6.north);

        \end{tikzpicture}
    \end{minipage}
    \caption{Illustration of the formula developed for the computation of $M[\ell][t]$. }
    \label{fig:insidea}
\end{figure}

We now want to compute the elements of matrix $M$. Before this, we produce the data structures of \autoref{last_occs_ins} (and we use the notations from its framework). That is, we compute in $O(n)$ time $\last_{j\sigma+1}[a]$ and $d_{j\sigma+1}[a]=\Delta(\last_{j\sigma+1}[a],j\sigma +1)$, for all $a\in \Sigma$ and all $j\leq \frac{(n-1)}{\sigma}$. 

By \autoref{init}, we can compute the values $M[\ell][1]$, for all $\ell\in [1:n]$ in $O(n)$ time. However, a direct computation of the values $M[\ell][t]$, for $t>1$, according to the recurrence above would not be efficient. So we will analyse this recurrence further.

\smallskip

{\emph{$\S$ A useful observation.}} Assume that to transform $w[1:\ell]$ into a $t$-universal word we transform $w[1:\ell']$ into a $(t-1)$-universal word and $w[\ell'+1:\ell]$ into a $1$-universal word. The number of insertions needed to do this is $M[\ell'][t-1]+(\sigma - \Delta(\ell'+1,\ell)) $. If $w[\ell'+1]$ occurs twice in $w[\ell'+1:\ell]$, then $M[\ell'][t-1]+(\sigma - \Delta(\ell'+1,\ell)) \geq M[\ell'+1][t-1]+(\sigma - \Delta(\ell'+2,\ell))$. Thus, we can rewrite our recurrence in the following way, using the framework of \autoref{last_occs_ins}:
$M[\ell][t]=\min \{M[\ell'][t-1]+(\sigma-\Delta(\ell'+1,\ell)) \mid \ell'+1\in S_\ell\cup\{\ell+1\}\}$ (recall the definition of $S_\ell=\{\last_\ell [a]\mid a\in \letters(w[1:\ell])\}$ from \autoref{toolbox}). 
\begin{figure}[h!]
    \centering
    \begin{minipage}[c]{0.5\linewidth}
    	\begin{tikzpicture}[scale=0.6,every node/.style={scale=0.6}]
          
            \node[circle] (t0) at (0,0) {$w$};
            \node[wens, minimum height=6mm, minimum width=15mm, align=center, right= 0mm of t0] (t1) {$\ldots$};
            \node[wens, fill=black!25, minimum height=6mm, minimum width=5mm, align=center, right= 0mm of t1] (t2) {$ $};
            \node[wens, minimum height=6mm, minimum width=10mm, align=center, right= 0mm of t2] (t3) {$\ldots$};
            \node[wens, fill=black!25, minimum height=6mm, minimum width=5mm, align=center, right= 0mm of t3] (t4) {$ $};
            \node[wens, minimum height=6mm, minimum width=10mm, align=center, right= 0mm of t4] (t5) {$\ldots$};
            \node[wens, fill=black!25, minimum height=6mm, minimum width=5mm, align=center, right= 0mm of t5] (t7) {$ $};
            \node[wens, minimum height=6mm, minimum width=15mm, align=center, right= 0mm of t7] (t8) {$\ldots$};
            \node[wens, fill=black!25, minimum height=6mm, minimum width=5mm, align=center, right= 0mm of t8] (t9) {$ $};
            \node[wens, minimum height=6mm, minimum width=15mm, align=center, right= 0mm of t9] (t10) {$\ldots$};
            \node[wens, fill=black!40, minimum height=6mm, minimum width=10mm, align=center, right= 0mm of t10] (t11) {$\ell+1$};
            \node[wens, minimum height=6mm, minimum width=10mm, align=center, right= 0mm of t11] (t12) {$\ldots$};
            \node[minimum height=6mm, minimum width=35mm, align=center, below= 5mm of t7] (t13) {$S_l \cup \lbrace \ell +1 \rbrace$};

            \draw[->] (t13) -- (t4.south);
            \draw[->] (t13) -- (t2.south);
            \draw[->] (t13) -- (t7.south);
            \draw[->] (t13) -- (t9.south);
            \draw[dotted, thick] (t13) -- (t11.south);
            
            
        \end{tikzpicture}
    \end{minipage}
    \caption{Only the positions $\ell'+1 \in \{\last_\ell [a]\mid a\in \letters(w[1:\ell])\} \cup \lbrace \ell +1 \rbrace=  S_\ell \cup \lbrace \ell +1 \rbrace $ are needed to compute $M[\ell][t]$ by dynamic programming. These positions are depicted here in grey. }
    \label{fig:inssl}
\end{figure}
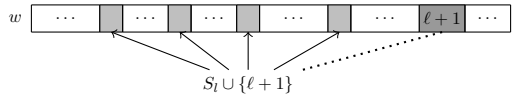

In fact, in the efficient version of our algorithm we will use a slightly weaker formula, where the minimum is computed for all elements $\ell'+1$ from a set $S'_\ell\cup\{\ell+1\}$, instead of the set $S_\ell\cup\{\ell+1\}$, where $S'_\ell$ is a superset of size at most $2\sigma$ of $S_\ell$ defined as follows. If $\ell=j\sigma+i$, for some $j\leq (n-1)/\sigma$ and $i\in [1:\sigma]$, then 
$
S'_\ell =
\left\{
	\begin{array}{ll}
		S_\ell  & \mbox{if } i=1 \\
		S_{j\sigma+1}\cup\{j\sigma+2,\ldots,j\sigma+i\} & \mbox{if } i \in [2:\sigma]
	\end{array}
\right.
$.

\smallskip

{\emph{$\S$ Algorithm - the efficient variant.}} Using the observation above, together with \autoref{speedUp}, we can compute the elements of the matrix $M$ efficiently using dynamic programming. 

So, let us consider a value $t\geq 2$. Assume that we have computed the values $M[\ell][t-1]$, for all $\ell\in [1:n]$. We now want to compute the values $M[\ell][t]$, for all $\ell\in [1:n]$.  
The main idea in doing this efficiently is to split the computation of the elements on column $M[\cdot][t]$ of the matrix $M$ in phases. In phase $j$ we compute the values $M[j\sigma+1][t],$ $M[j\sigma+2][t], \ldots, $ $M[(j+1)\sigma][t]$, for $j\leq (n-1)/\sigma$. 

We now consider some $j$, with $0\leq j\leq (n-1)/\sigma$. We want to apply \autoref{speedUp}, so we need to define the list $A$ of size $\sigma$. This is done as follows. 

We will keep an auxiliary array $\pos[\cdot]$ with $\sigma$ elements. Moreover, the element on each position $i$ of $A$, namely $A[i]$, will be accompanied by two satellite data: a position of $w$ and the letter on that position. For $a$ from $1$ to $\sigma$, if $d_{j\sigma+1}[a]=\sigma - i$ for some $i<\sigma$ then we set $A[i+1]=M[\last_{j\sigma+1}[a]-1][t-1]+i$ and $\pos[a]=i+1$; the satellite data of $A[i+1]$ is the pair $(\last_{j\sigma+1}[a],a)$. If, for some letter $a$, $\last_{j\sigma+1}[a]=n+1$ and $d_{j\sigma+1}[a]=0$ (i.e., $a$ does not occur in $w[1:j\sigma+1]$) we simply set $\pos[a]=0$. Intuitively, one can see the elements contained now in $A$ as triples: $(A[e],\last_{j\sigma+1}[a],a)$ where $A[e]=M[\last_{j\sigma+1}[a]-1][t-1]+e-1$, with $e\in[1:\sigma]$ and $a\in \Sigma$. 

Let $a_d,a_{d-1},\ldots,a_1$ be the letters of $\Sigma$ that occur in $w[1:j\sigma+1]$, ordered such that $\last_{j\sigma+1}[a_e]<\last_{j\sigma+1}[a_f]$ if and only if $e>f$. At this point, we have defined only~the~last~$d$ elements of $A$ and, for $i\in [1:d]$, the element on position $\sigma-i+1$ is $A[\sigma-i+1]=M[\last_{j\sigma+1}[a_i]-1][t-1]+(\sigma-i)$ and has the satellite data $(\last_{j\sigma+1}[a_i],a_i)$.  Also, $\pos[a_i]=\sigma-i+1$. 
The first $\sigma-d$ elements of $A$ are set to $\infty$; as convention, applying arithmetic operations to $\infty$ leaves it unchanged. We also set $m$ to $\sigma$. 

We can now define and apply a sequence of operations $o_1,\ldots,o_\sigma$ as in \autoref{speedUp}. 

{\bf An invariant:} We want to ensure that the list $A$ fulfils the following invariant properties before the execution of each operation $o_i$.
\begin{itemize}
\item For $e\in [1:d]$, the triple on position $\sigma-e+1$ of $A$ is: \\ 
$(M[\last_{j\sigma+1}[a_e]-1][t-1]+(\sigma-\Delta(\last_{j\sigma+1}[a_e], j\sigma+i)), \last_{j\sigma+1}[a_e], a_e)$.
That is, $A[\sigma-e+1]=M[\last_{j\sigma+1}[a_e]-1][t-1]+(\sigma-\Delta(\last_{j\sigma+1}[a_e], j\sigma+i))$. 
\item For $g\in [1:i-1]$, the triple on position $\sigma+g$ of $A$ is:\\
 $(M[j\sigma+g][t-1] +(\sigma-\Delta(j\sigma+g+1, j\sigma+i)), j\sigma+g+1,w[j\sigma+g+1])$.
 That is $A[\sigma+g]=M[j\sigma+g][t-1] +(\sigma-\Delta(j\sigma+g+1, j\sigma+i))$.
\item $\pos[a]$ is the position of the rightmost position $i$ storing a triple $(A[i],\ell,a)$. 
\end{itemize}
That is, the list $A$ contains all the values $M[\ell][t-1]+(\sigma-\Delta(\ell+1, j\sigma+i))$, for $\ell+1\in S_{j\sigma+1}\cup \{j\sigma+2,\ldots,j\sigma+i\}$, and $\pos[a]$ indicates the rightmost position of the list $A$ where we store a value $M[\ell][t-1]+(\sigma-\Delta(\ell+1, j\sigma+i))$ with $w[\ell+1]=a$. A consequence of this is that $A[\pos[a]]=M[\last_{j\sigma+i}[a]-1][t-1]+(\sigma-\Delta(\last_{j\sigma+i}[a], j\sigma+i))$. 

The invariant clearly holds for $i=1$. 

\smallskip

{\emph{$\S$ Algorithm - application of \autoref{speedUp}}}. In $o_i$, we extract the minimum $q$ of $A$. Then set $M[j\sigma+i][t]= \min \{q, M[j\sigma+i][t-1]+\sigma \}$. We decrement by $1$ all elements of $A$ on the positions $\pos[a]+1,\pos[a]+2, \ldots, m$, where $a=w[j\sigma+i+1]$. Then, we append to $A$ the element $M[j\sigma+i][t-1]+(\sigma-1)$, with the satellite data $(j\sigma+i+1,a)$, which implicitly increments $m$ by $1$, and set $\pos[a]=m$. 

{\bf Claim 1.} The invariant holds after operation $o_i$. 

{\bf Proof of Claim 1.} We now need to show that the invariant is preserved after this step. If $a=w[j\sigma+i+1]$ then the number of distinct letters occurring after each position $g>\last_{j\sigma+i}[a]$ in $w[1:j\sigma+i]$ is exactly one smaller than the number of distinct letters occurring after $g$ in $w[1:j\sigma+i+1]$. This means that $M[g-1][t-1]+ (\sigma-\Delta(g,j\sigma+i+1))$ is one smaller than $M[g-1][t-1]+ (\sigma-\Delta(g,j\sigma+i))$. Consequently, all values occurring on positions greater than $\pos[a]$ in the list $A$, which stored some values $M[g-1][t-1]+ (\sigma-\Delta(g,j\sigma+i+1))$ with $g>\last_{j\sigma+i}[a]$, should be decremented by $1$. Also, the number of distinct letters occurring after each position $g\leq \last_{j\sigma+i}[a]$ in $w[1:j\sigma+i]$ is exactly the same as number of distinct letters occurring after $g$ in $w[1:j\sigma+i+1]$. Thus, all values occurring on positions smaller or equal to $\pos[a]$ in the list $A$, which stored some values $M[g-1][t-1]+ (\sigma-\Delta(g,j\sigma+i+1))$ with $g\leq \last_{j\sigma+i}[a]$, should stay the same. So, the invariant holds for the first $\sigma+i-1$ positions of $A$. After appending $M[j\sigma+i][t-1]+(\sigma-1)$ to $A$ and incrementing $m$, then the invariant holds for the position $\sigma+i$ (which is also the last position) of $A$ too, so the invariant still holds for all positions of $A$. 

Furthermore, the only position of the $\pos$ array that needs to be updated after operation $o_i$ is $\pos[a]$, and it needs to be set to the new value of $m$. This is exactly what we do.  \qqed

%

{\bf Claim 2.} $M[j\sigma+i][t]$ is correctly computed, for all $i\in [1:\sigma]$. 

{\bf Proof of Claim 2.} According to the invariant, before executing operation $o_i$, $A$ contains the values $M[\ell][t-1]+(\sigma-\Delta(\ell+1, j\sigma+i))$, for $\ell+1\in S_{j\sigma+1}$, and  $M[j\sigma+g][t-1] +(\sigma-\Delta(j\sigma+g+1, j\sigma+i))$, for $g\in [1:i-1]$. As $S'_{j\sigma+i}=S_{j\sigma+1} \cup \{j\sigma+g+1\mid g\in[1:i-1]\}$ is a superset of size at most $2\sigma$ of $S_{j\sigma+i}$, we obtain that $M[j\sigma+i][t]$ is correctly computed as the minimum between the smallest value in $A$ and $M[j\sigma+i][t-1]+\sigma$. \qqed 

\smallskip

{\emph{$\S$ Algorithm - the result of applying \autoref{speedUp}}}.  After executing the $\sigma$ operations $o_1,\ldots, o_\sigma$, we have computed the values $M[j\sigma+1][t],$ $M[j\sigma+2][t], \ldots, $ $M[(j+1)\sigma][t]$ correctly. We can move on to phase $j+1$ and repeat this process. 

\smallskip

{\emph{$\S$ The result and complexity.}} The minimal number of insertions needed to make $w$ $k$-universal is, according to the observations we made, correctly computed as $M[n][k]$. 

 By \autoref{speedUp}, computing $M[j\sigma+1][t],$ $M[j\sigma+2][t], \ldots, $ $M[(j+1)\sigma][t]$ takes $O(\sigma)$ for each $j$. Overall, computing the entire column $M[\cdot][t]$ takes $O(n)$ time. We do this for all $t\leq k$, so we use $O(nk)$ time in total to compute all elements of $M$. 
This concludes Case 1. 

\smallskip

{\bf Case 2.} Assume $k>n$. This case can also be solved using the algorithm from Case~1. 

\smallskip

\emph{$\S$ The reduction.} Let $w'$ be a $k$-universal word obtained from $w$ by a minimal number of insertions, then, by \autoref{theodlt}, there exist the $1$-universal words $w'_1,\ldots,w'_k$, and the word $w'_{k+1}$ such that $w'=w'_1\cdots w'_kw'_{k+1}$ and $\letters(w'_{k+1})$ is a strict subset of $\Sigma$. That is, $w'_1, \ldots, w'_k$ are the arches of $w$. As $k>n$, there will be at most $n$ arches $w'_{i_1},\ldots,w'_{i_n}$, with $i_j\in [1:k]$ and $i_{j}< i_{j+1}$ for all $j$, which contain original letters of $w$ (i.e., all other letters of $w'$ were inserted). That is, $w$ is a subsequence of $w''=w'_{i_1}\ldots w'_{i_n}$, and $w''$ is an $n$-universal word obtained from $w$ by a minimal number of insertions. So, to compute the minimal number of insertions needed to transform $w$ into a $k$-universal word we can first compute the minimal number
of insertions needed to transform $w$ into an $n$-universal word $w''$, and then add $(k-n)\sigma$, the minimal number of insertions needed to transform $w''$ into $w'$ (i.e., the length
of the missing $1$-universal factors $w'_i$ from the decomposition of $w'$). Using the algorithm described in the case $k\leq n$, we can compute in $O(n^2)$ time the minimal number of insertions needed to transform $w$ into an $n$-universal word (which is clearly at most $n\sigma$), and then we can compute in $T(n,\sigma,k)$ time the value $(k-n)\sigma$. By this, Case 2 is now complete.
This also concludes the proof of the theorem.
\end{proof}

\begin{remark}
If $k$ is in $O(c^n)$, for some constant $c$, then $T(n,\sigma,k)\in O(n^2)$ so the algorithm from Case 2 of \autoref{ins-distance} will run in $O(n^2)$~time. 
\end{remark}


\subsection{Deletions}\label{del}


 \begin{theorem} \label{deletions}
Let $w$ be a word, with $|w|=n$, $\letters(w)=\Sigma$, and $\Sigma = \{1,2,\ldots,\sigma\}$. 
Let $k$ be an integer with $k \leq \iota(w)\leq n/\sigma$. We can compute in $O(nk)$ time the minimal number of deletions needed to obtain a word of universality index~$k$ (w.r.t. $\Sigma$) from $w$.
\end{theorem}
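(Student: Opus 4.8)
The plan is to first pin down the shape of an optimal solution. Since $k\le\iota(w)$ and, as argued at the start of \autoref{editDist}, only deletions are needed, let $v$ be a subsequence of $w$ with $\iota(v)=k$ obtained by a minimum number of deletions. By \autoref{theodlt} (via Hebrard's arch factorisation) we may write $v=a_1\cdots a_k r$, where each $a_t$ is an arch and $\letters(r)\subsetneq\Sigma$. Reading this back inside $w$, there is a position $p$ of $w$ — the position carrying the last letter of $a_k$ — so that $a_1\cdots a_k$ is a subsequence of the prefix $w[1:p]$ ending at $p$, while $r$ is a subsequence of the suffix $w[p+1:n]$. This decouples the cost: in the suffix we only need to destroy every further arch, and the cheapest way to do so is to delete all occurrences of the globally rarest letter of $w[p+1:n]$, which costs exactly $T[p]$ (from \autoref{help1}); in the prefix we must realise exactly $k$ arches ending at $p$ with as few deletions as possible. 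Hence the answer equals $\min_p\big(D[p][k]+T[p]\big)$, where $D[i][t]$ denotes the minimum number of deletions turning $w[1:i]$ into a word that is exactly $t$-universal with empty rest and whose last kept letter sits at position $i$.

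The second step is to set up the recurrence for $D[i][t]$. The $t$-th (last) arch is a universal factor $w[i'+1:i]$ of the kept word ending at $i$, so it requires $\Delta(i'+1,i)=\sigma$, i.e. $i'\le\univ[i]-1$ (using $\univ$ from \autoref{help1}). Because $w[i]$ is the final, and hence new, letter of this arch, every earlier occurrence of $w[i]$ inside $w[i'+1:i-1]$ must be deleted, whereas all other letters of the window can be kept; this contributes exactly $|w[i'+1:i-1]|_{w[i]}$ deletions and nothing more. Writing $|w[i'+1:i-1]|_{w[i]}=(\freq[i]-1)-|w[1:i']|_{w[i]}$ (with $\freq$ from \autoref{help1}), I obtain
\[
D[i][t]=(\freq[i]-1)+\min_{0\le i'\le\univ[i]-1}\big(D[i'][t-1]-|w[1:i']|_{w[i]}\big),
\]
with base case $D[i][1]=\freq[i]-1$ when $\Delta(1,i)=\sigma$ and $D[i][1]=\infty$ otherwise. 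Correctness of $\min_p\big(D[p][k]+T[p]\big)$ then follows because every index-exactly-$k$ subsequence is counted by some $p$, and conversely every configuration counted produces a valid subsequence of index exactly $k$.

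Computing $D$ directly costs $O(n^2k)$, so the main obstacle is evaluating each column in $O(n)$ time. The difficulty is the letter-dependent term $|w[1:i']|_{w[i]}$, which couples the candidate $i'$ with the query letter $w[i]$ and prevents a single running minimum from working. Two features make an efficient evaluation possible: $\univ[\cdot]$ is non-decreasing, so the candidate range $[0:\univ[i]-1]$ only grows as $i$ increases; and advancing past an occurrence of a letter $a$ uniformly decreases the contribution $-|w[1:i']|_{a}$ by one on a suffix of the candidate list. This last pattern — extract the current minimum, decrement a suffix by one, append a new value — is exactly the interface supported in amortised linear time by \autoref{speedUp}. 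I would therefore mirror the insertion algorithm: process each column in phases of $\sigma$ consecutive positions, use the $\last_{j\sigma+1}[\cdot]$ and $d_{j\sigma+1}[\cdot]$ arrays of \autoref{last_occs_ins} to restrict, at the start of each phase, the relevant candidates to a set of size $O(\sigma)$, maintain their values under the suffix-decrements forced by the letters swept inside the phase via \autoref{speedUp}, and fall back on a range-minimum query (\autoref{RMQ}) on the already computed column $D[\cdot][t-1]$ where a block between consecutive occurrences of the query letter must be summarised. Each phase is then handled in $O(\sigma)$ amortised time, each of the $k$ columns in $O(n)$, for the claimed $O(nk)$ total. I expect the bookkeeping that reconciles the per-letter occurrence counts with the single suffix-decrement structure of \autoref{speedUp} to be the most delicate part of the argument.
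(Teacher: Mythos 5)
Your modelling phase coincides with the paper's proof of \autoref{deletions}: your $D[i][t]$ is exactly the paper's $N[i][p]$ (a subsequence of $w[1:i]$ that is $t$-universal with empty rest and whose last kept letter sits at position $i$ is precisely what the paper calls a weak-$t$-universal word obtained without deleting $w[i]$), your recurrence $D[i][t]=(\freq[i]-1)+\min_{i'\le \univ[i]-1}\bigl(D[i'][t-1]-|w[1:i']|_{w[i]}\bigr)$ is an algebraic rewriting of the paper's, and the final answer $\min_p\bigl(D[p][k]+T[p]\bigr)$ is identical. The genuine gap is in the efficiency step, where your plan to ``mirror the insertion algorithm'' via \autoref{speedUp} does not go through. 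In \autoref{ins-distance}, the value attached to a fixed candidate $\ell'$, namely $M[\ell'][t-1]+(\sigma-\Delta(\ell'+1,\ell))$, genuinely evolves as the right endpoint $\ell$ advances, and each new position decrements a suffix of the candidate list; that is the interface of \autoref{speedUp}. In your recurrence, the value attached to a candidate $i'$ for a fixed query letter $a$, namely $D[i'][t-1]-|w[1:i']|_a$, is frozen once $i'$ exists and never changes as $i$ grows; what changes from step $i$ to step $i+1$ is \emph{which} letter indexes the correction, and switching from $w[i]$ to $w[i+1]$ replaces the entire value profile rather than decrementing a suffix. Your observation that passing an occurrence of $a$ decreases $-|w[1:i']|_a$ on a suffix describes the shape of the profile along the candidate axis, not an update over time, so there is no sequence of \autoref{speedUp} operations to execute.

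The phase machinery fails for a second reason: the insertion algorithm prunes the candidates to the $O(\sigma)$ set $S'_\ell$ using the monotonicity $M[\ell][t]\geq M[\ell''][t]$ for $\ell\leq\ell''$, and $D[\cdot][t-1]$ has no such monotonicity (it contains $\infty$ entries and the $\freq$ terms oscillate), so all of $[1:\univ[i]-1]$ remains relevant. What actually yields $O(n)$ per column in the paper is the half of your plan you left undeveloped: since $|w[1:i']|_{w[i]}$ is constant on each block between consecutive occurrences of $w[i]$, each block is summarised by an RMQ over the previous column, and the blocks are accumulated across occurrences by the chained auxiliary array $M'[i]=1+\min\{M'[\ell],\, N[\RMQ_{p-1}(\ell+1,i-1)][p-1]\}$ with $\ell=\last_{i-1}[w[i]]$; the cap $i'\leq \univ[i]-1$ is then handled at query time via $t=\last_{j-1}[w[i]]$ for $j=\univ[i]$ (both precomputed in \autoref{help1}), splitting the minimum into $M'[t]$ plus one RMQ on $[t+1:j-1]$, each shifted by $\freq[i]-\freq[t]-1$. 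This needs no phases, no $\last_{j\sigma+1}[\cdot]$ arrays, and no \autoref{speedUp}; until you replace the union-find bookkeeping by this chaining argument (or an equivalent one), the $O(nk)$ bound is asserted but not proved.
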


\begin{proof}
The proof is again structured in such a way that the idea of the solution, as well as the actual computation steps, and the arguments supporting their correctness are clearly marked. See also \autoref{alg:deletions}.

\smallskip

{\em $\S$ General approach.} The case $k=0$ is trivial. We just need to count how many times each letter occurs and then remove the letter that occurs the least number of times. This takes $O(n)$ time. So let us assume that $k>0$. To simplify the presentation, we will call a word $u$ a weak-$p$-universal word if $u$ is $p$-universal and by deleting the last letter of $u$ we obtain a word of universality index $p-1$.

Assume that $w'$ is a $k$-universal word that can be obtained by applying the sequence of deletions of minimal length to $w$. 
Clearly, $w'$ is a subsequence of $w$, 
and, by the decomposition defined in the context of \autoref{decomp}, 
there exist the arches $w'_1, \ldots, w'_k$, all of universality index exactly $1$, 
and $w'_{k+1}$, of universality index $0$, 
such that $w' = w'_1\cdots w'_k w'_{k+1}$. Moreover, each arch $w'_i$, with $i\in[1:k]$, is a weak-$1$-universal word. 

It follows that actually each of the words $w'_i$ is a subsequence of $w$ too. 
So we will try to identify the factors  $w[i_{j-1}+1:i_{j}]$, with $j\in [1:k+1]$ and $i_0=0$, from 
which $w'_j$ is obtained by deletions, with the important condition that the last letter of $w[i_{j-1}+1:i_{j}]$ is not deleted (so the last letters of $w'_j$ and $w[i_{j-1}+1:i_{j}]$ coincide).

Note that, to obtain the word $w'_1\cdots w'_k w'_{k+1}$ from $w$ with a minimal number of deletions, we need to find a position $i_k$ of $w$ such that $w'_1\cdots w'_k $ is obtained from $w[1:i_{k}]$ and $w'_{k+1}$ from $w[i_{k}+1:n]$, with the restrictions that $w[i_k]$ is not deleted and the overall number of deletions made in this process is the smallest (over all possible choices of the position $i_k$). If we now have $i_k$, we can then search for $i_{k-1}$: we need to find a position of $w$ such that $w'_1\cdots w'_{k-1} $ is obtained from $w[1:i_{k-1}]$ and $w'_{k}$ from $w[i_{k-1}+1:i_k]$, with the similar restrictions that $w[i_{k-1}]$ is not deleted and the overall number of deletions made in this process is the smallest (over all possible choices of the position $i_k$). And then we continue with $i_{k-2}$, $i_{k-3}$, and so on. 

Thus, it seems natural to consider and solve the following type of subproblems: what is the minimal number of deletions we need to apply to transform a prefix $w[1:i]$ into a weak-$p$-universal word $v$, without deleting $w[i]$. 

\smallskip

\SetKw{KwAnd}{and}
\begin{algorithm}[h!]
	\SetKwInOut{Input}{Input}
	\SetKwInOut{Output}{Output}
	\Input{word $w$, alphabet $\Sigma$, int $k$}
	\Output{minimal number of deletions}
	\BlankLine
	
	\tcp{initialization}
	int $n \leftarrow \lvert w \rvert$; int $\sigma \leftarrow \lvert \Sigma \rvert$\;
	int $N[n][k] = \infty$\;
	\BlankLine
	
	\tcp{initialise first column of $N$}
	\For{$i = 1$ \KwTo $n$}{
		\If{$\Delta(1,i)=\sigma$ (i.e., $w[1:i]$ is $1$-universal)}{
			$N[i][1] \leftarrow \freq[i]-1$\;
		}
	}
	\BlankLine
	
	\tcp{efficient variant}
	\For{$p = 2$ \KwTo $k$}{
		int $M'[n]$\;
		\For{$i=1$ \KwTo $(p-1)\sigma$}{
			$M'[i] \leftarrow \infty$\;
		}
	
		\For{$i=(p-1)\sigma$ \KwTo $n$}{
			int $l \leftarrow \last_{i-1}[w[i]]$\;
			\uIf{$l = n+1$}{
				$M'[i] \leftarrow \infty$\;
			} \Else {
				int $r \leftarrow \RMQ_{p-1}(l+1, i-1)$\;
				$M'[i] \leftarrow 1+\min\{M'[l], N[r][p-1]\}$\;
			}
		}
		\BlankLine
	
		\tcp{compute $N[\cdot][p]$ using $M'$}
		\For{$i = 1$ \KwTo $n$}{
			int $j \leftarrow \univ[i]$\;
			\uIf{$j = 0$}{
				$N[i][p] \leftarrow \infty$\;
			}\Else{
				int $t \leftarrow \last_{j-1}[w[i]]$\;
				\uIf{$t = n+1$}{
					$N[i][p] \leftarrow \infty$\;
				}\Else{
					int $r \leftarrow \RMQ_{p-1}(t+1, j-1)$\;
					$N[i][p] \leftarrow \min\{M'[t] + \freq[i] - \freq[t] - 1, N[r][p-1] + \freq[i] - \freq[t] - 1 \}$\;
				}
			}
		}
	}
	
	\Return $\min\{N[i][k] + T[i] \mid 1 \leq i \leq n\}$\;
	
	\caption{The efficient algorithm from \autoref{deletions} (on deletions).}
	\label{alg:deletions}
\end{algorithm}

{\em $\S$ Algorithm - initial idea.} We first compute all the data structures defined in \autoref{help1}.

We define the $n\times k$ matrix $N$, 
where $N[i][p]$ is the minimal number of deletions we need to apply to $w[1:i]$, without deleting $w[i]$, to obtain a weak-$p$-universal word $v$ from it (for $1 \leq i \leq n$ and $1 \leq p \leq k$). If $w[1:i]$ is not $p$-universal, $N[i][p]$ will be set to $\infty$. 

To compute this matrix, we note that if $N[i][1]\neq \infty$, then $N[i][1]=\freq[i]-1$. Indeed, in order to produce from $w[1:i]$ a weak $1$-universal word, which fulfils the conditions stated above, we have to delete all occurrences of the letter $w[i]$ except for the one on position $i$. 

In general, the minimal number of deletions needed to transform a factor $w[i:j]$ of universality index at least $1$ (so with $i\leq \univ[j]$) into a weak-$1$-universal word $v$, such that $w[j]$ is not deleted, is $|w[i:j]|_{w[j]}-1$.

Accordingly, we can define the elements of the matrix $N$ as $N[i][p] = \min \{N[i'][p-1] + |w[i'+1:i]|_{w_i}\mid i'\in [1:\univ[i]-1]\}$, for $i\in [1:n], p\in [2:k]$.  

A straightforward implementation of the above formula is not efficient, so we will explore alternative and more efficient ways to compute $N[i][p]$. 

\smallskip

{\em $\S$ Algorithm - an efficient implementation.} We will show how the elements of the column $N[\cdot][p]$ can be computed efficiently for a given $p\geq 2$, 
assuming that we have computed them for $N[\cdot][p-1]$. 

Firstly, we define the data structures of \autoref{RMQ} allowing us to answer range minimum queries for the column $N[\cdot][p-1]$ of $N$, denoted by $\RMQ_{p-1}$ in the following. 
Note that the query $\RMQ_{p-1}(j,i)$ returns, for some $j<i$, the position $k$ with $k\in [j:i]$, 
such that $w[1:k]$ is the prefix of $w$, ending between $j$ and $i$, which can be transformed into a weak-$(p-1)$-universal word with the least number of deletions (among all other prefixes ending between $i$ and $j$), such that its last letter $w[k]$ is not deleted.

We define an auxiliary array $M'[\cdot]$ with $n$ elements, where if $i\geq (p-1)\sigma$, we have $M'[i]=\min\{N[j][p-1] + |w[j+1:i]|_{w[i]} \mid j<i\}$, and if $i<(p-1)\sigma$, we have $M'[i]=\infty$ (that is, a large enough value). 

Intuitively, $M'[i]$ is the minimal number of deletions we need to make in $w[1:i]$ in order to obtain a word $v'v''$, where $v'$ is a weak-$(p-1)$-universal word and $v''$ is a word with universality index $0$ which contains no occurrence of $w[i]$.

We observe that the values $M'[i]$ can be computed as follows. Firstly, we set $M'[i]=\infty$ if $i<(p-1)\sigma$. Then, for $i\geq (p-1)\sigma$, let $\ell= \last_{i-1}[w[i]]$.

\smallskip

{\bf Claim 1.} We claim that $M'[i]= 1 + \min \{M'[\ell], N[\RMQ_{p-1}(\ell+1,i-1)][p-1]\}$, if $\last_{i-1}[w[i]]\neq n+1$, or $M'[i]=\infty$ otherwise. 

{\bf Proof of Claim 1.} The case when $w[i]$ does not occur in $w[1:i-1]$ is trivial. The first part of the claim holds because the minimal number of deletions we need to make in $w[1:i]$ in order to obtain a word $v'v''$, where $v'$ is a weak-$(p-1)$-universal word and $v''$ is a word with universality index $0$ which contains no occurrence of $w[i]$ is:
\begin{itemize}
\item either the minimal number of deletions we need to apply to $w[1:\ell]$ in order to obtain a word $v_0'v_0''$, where $v_0'$ is a weak-$(p-1)$-universal word and $v''$ is a word with universality index $0$ which contains no occurrence of $w[\ell]=w[i]$, and then remove the last occurrence of $w[i]$ from $w[1:i]$, 
\item or the minimal number of deletions we need to apply to $w[1:i]$ in order to obtain a word $v'v''$, where $v'$ is a weak-$(p-1)$-universal word obtained from a prefix $w[1:j]$, with $j>\ell$, by deleting some letters, but not $w[j]$, and $v''$ is a word with universality index $0$ which contains no occurrence of $w[i]$, obtained from $w[j+1:i]$ by the single deletion of $w[i]$. 
\end{itemize}
This {\bf concludes the proof of Claim 1.}  

\smallskip 

Computing $M'$ takes linear time. Now, we can use the array $M'$ to compute the values stored in $N$. 
Assume that we want to compute $N[i][p]$. 

Let $j=\univ[i]$. If $j=0$, then set $N[i][p]=\infty$. Assume in the following that $j\neq 0$.  
Clearly, $i \in L_j$. 
Therefore, the structures we computed with \autoref{help1} provide the value $t= \last_{j-1}[w[i]]$. If $t=n+1$ (i.e., $w[i]$ does not occur in $w[1:j-1]$), and because $p\geq 2$, we set $N[i][p]=\infty$ (we cannot transform the word $w[1:i]$ into a $p$-universal word by deletions when it only contains one occurrence of $w[i]$). 
If $t<n+1$, let $r=\RMQ_{p-1}(t+1,j-1)$. 

\smallskip 

{\bf Claim 2.} We claim that the following holds $N[i][p]=\min \{M'[t]+|w[t+1:i]|_{w[i]}-1,$ $ N[r][p-1]+|w[r+1:i]|_{w[i]} -1\}$. 

{\bf Proof of Claim 2.} Indeed, this is true because in order to transform (with a minimal number of deletions) $w[1:i]$ into a weak-$p$-universal word without deleting $w[i]$ we can 
\begin{itemize}
\item either transform (with a minimal number of deletions) a prefix $w[1:t']$ of $w[1:t]$ into a weak-$(p-1)$-universal word and remove all the occurrences of $w[i]$ from $w[t'+1:t]$, and, then, all occurrences of $w[i]$ from $w[t+1:i]$, except $w[i]$, 
\item  or we transform (again with a minimal number of deletions) a prefix $w[1:t']$ of $w[1:j]$, with $t'>t$ and $j=\univ(i)$, into a weak-$p-1$-universal word, and then remove all the occurrences of $w[i]$ from $w[t'+1:i]$, except $w[i]$.
\end{itemize}
This {\bf concludes the proof of Claim 2.} 

\smallskip

Let us now note that $M'[t]+|w[t+1:i]|_{w[i]}-1=M'[t]+\freq[i]-\freq[t]-1$ (because $w[i]=w[t]$). 
Also, $N[r][p-1]+|w[r+1:i]|_{w[i]} -1 = N[r][p-1]+|w[t+1:i]|_{w[i]} -1 = N[r][p-1]+\freq[i]-\freq[t]-1$ because $w[i]$ does not occur in $w[t+1:r]$ (as $w[i]$ does not occur between $w[t+1:j-1]$). 

This gives us a way to compute each $N[i][p]$ in constant time (once $M'[\cdot]$ was computed).

Thus, computing the entire column $N[\cdot][p]$ takes overall linear time $O(n)$ (including here the computation of the array $M'$). 

Consequently, in total, we can compute the elements of the matrix $N$ in $O(nk)$ time. 

\smallskip

{\em $\S$ Collecting the results.} We are not done yet, as it is not clear which is the minimal number of deletions we need in order to transform $w$ into a $k$-universal word. 

Recall that \autoref{help1} also computes the array $T[\cdot]$ with $T[i]=\min \{|w[i+1:n]|_{a} \mid a \in \Sigma \}$, for $i\in [0:n]$.

Now, the minimal number of deletions we need in order to transform $w$ into a $k$-universal word is clearly $\min \{N[i][k] + T[i] \mid 1 \leq i \leq n\}$: we check which is the minimal number of deletions we need in order to both transform a prefix $w[1:i]$ into a weak-$p$-universal word, without deleting $w[i]$, and the word $w[i+1:n]$ into a word with universality index $0$.

\smallskip

{\em $\S$ Complexity.} According to the above, the answer returned by our algorithm can be computed in $O(n)$ time, after the matrix $N$ was computed. So, overall, the time complexity of the algorithm is $O(nk)$.  

The correctness of this approach follows from the observations we made during the explanation of the algorithm. So, the statement follows. 
\end{proof}

\subsection{Substitutions}\label{sub}

\begin{theorem}\label{edit-subs}
Let $w$ be a word, with $|w| = n$, $alph(w)=\Sigma $, and $\Sigma= \lbrace 1, 2, \ldots, \sigma \rbrace$. Let $k$ be an integer $0\leq k \leq \lfloor \frac{n}{\sigma} \rfloor $. We can compute the minimal number of substitutions needed to apply to $w$ in order to obtain a $k$-universal word (w.r.t. $\Sigma$) in $O(nk)$ time.
\end{theorem}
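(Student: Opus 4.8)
The plan is to follow the same dynamic-programming strategy as in the insertion case (\autoref{ins-distance}), adapted to the fact that substitutions preserve the length of the word. The starting point is \autoref{theodlt}: a length-$n$ word $w'$ is $k$-universal if and only if $[1:n]$ can be split into $k$ consecutive factors each containing all $\sigma$ letters. Since a substitution changes the letter at a single position, making a factor $w[a:b]$ universal is possible exactly when $b-a+1\geq\sigma$, and then it costs precisely $\sigma-\Delta(a,b)$ substitutions: we keep one occurrence of each of the $\Delta(a,b)$ letters already present and overwrite $\sigma-\Delta(a,b)$ of the remaining (at least $\sigma-\Delta(a,b)$) positions with the missing letters. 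As the factors are disjoint, the total cost is the sum of the per-factor costs, so minimising the number of substitutions reduces to choosing the factorisation.

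Accordingly I would define $D[i][p]$ as the minimal number of substitutions needed to turn $w[1:i]$ into a word that splits into exactly $p$ universal factors, with $D[i][p]=\infty$ when $i<p\sigma$. The base case is $D[i][1]=\sigma-\Delta(1,i)$ for $i\geq\sigma$, computable via \autoref{init}, and the recurrence is
\[
D[i][p]=\min\{\,D[i'][p-1]+\sigma-\Delta(i'+1,i)\mid (p-1)\sigma\leq i'\leq i-\sigma\,\},
\]
where $i'\leq i-\sigma$ forces the last factor $w[i'+1:i]$ to have length at least $\sigma$. I would first argue that $D[\cdot][p]$ is non-increasing (extending the last factor by one position never increases its cost). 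This gives two things at once: the reduction that only ``last occurrence'' start positions $i'$ are relevant (if $w[i'+1]$ reappears in $w[i'+2:i]$ and $i'+1\leq i-\sigma$, then advancing $i'$ to $i'+1$ preserves $\Delta$ and does not increase $D[\cdot][p-1]$), exactly as in \autoref{ins-distance}; and the fact that the sought answer is simply $D[n][k]$, since leaving an arbitrary untouched suffix after a $k$-universal prefix is never cheaper than absorbing it into the last factor.

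The efficient computation of each column $D[\cdot][p]$ from $D[\cdot][p-1]$ then mirrors the phased application of \autoref{speedUp} used for insertions. Processing positions $i$ from left to right, I maintain the list $A$ of the relevant candidate values $D[i'][p-1]+\sigma-\Delta(i'+1,i)$: advancing the right endpoint from $i-1$ to $i$ decrements by one exactly the suffix of candidates $i'\geq \last_{i-1}[w[i]]$ (those whose factor did not yet contain $w[i]$, located in $A$ via a $\pos[\cdot]$ array and the values from \autoref{help1}), which is the suffix-decrement of \autoref{speedUp}, and the running minimum yields $D[i][p]$. The one genuinely new ingredient compared to insertions is the length constraint: the candidate $i'=i-\sigma$ becomes admissible only at step $i$, and it enters with an already length-$\sigma$ factor rather than an empty one. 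I would handle this by appending, at step $i$, the single value $D[i-\sigma][p-1]+\sigma-\Delta(i-\sigma+1,i)$, reading $\Delta(i-\sigma+1,i)$ off the sliding-window values precomputed in $O(n)$ by \autoref{delta-sigma}; the later growth of this factor is then taken care of automatically by the subsequent suffix-decrements.

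I expect this last point --- reconciling the $\sigma$-lag of the length constraint with the append-only interface of \autoref{speedUp} --- to be the main obstacle, both in fixing the phase boundaries and the invariant on $A$ exactly and in checking that the candidate list stays of size $O(\sigma)$ per phase (so each column costs $O(n)$ and the whole table $O(nk)$). Once the invariant ``before step $i$, $A$ holds precisely the relevant values $D[i'][p-1]+\sigma-\Delta(i'+1,i)$ for admissible $i'$'' is established, correctness of the recurrence and the $O(nk)$ bound follow as in \autoref{ins-distance}, and the algorithm returns $D[n][k]$.
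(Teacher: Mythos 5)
Your dynamic program for the case $\iota(w)<k$ is essentially the paper's Case 1: the same recurrence $M[\ell][t]=\min\{M[\ell'][t-1]+\sigma-\Delta(\ell'+1,\ell)\mid (t-1)\sigma\leq \ell'\leq \ell-\sigma\}$, the same restriction of candidates to last-occurrence positions together with the boundary position $\ell-\sigma+1$, and the same resolution of the $\sigma$-lag that you correctly single out as the delicate point --- at step $\ell$ one appends the single newly admissible value $M[\ell-\sigma][t-1]+\sigma-\Delta(\ell-\sigma+1,\ell)$, reading $\Delta(\ell-\sigma+1,\ell)$ from the sliding-window values of \autoref{delta-sigma}, and lets the subsequent suffix-decrements of \autoref{speedUp} maintain it. Your monotonicity argument and the per-factor cost $\sigma-\Delta(a,b)$ (valid since $b-a+1\geq\sigma$ guarantees enough duplicate positions to overwrite) are also sound. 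So this half is correct and matches the paper.

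However, there is a genuine gap: you treat only the increasing direction, and your final answer $D[n][k]$ is the distance to the set of words that split into $k$ universal factors, i.e., of universality index \emph{at least} $k$ --- which is $0$ whenever $\iota(w)\geq k$. The problem the theorem actually solves (see the opening of \autoref{editDist}, ``transformed into a word with universality index $k$'', and note the range $0\leq k\leq\lfloor n/\sigma\rfloor$ permits $k<\iota(w)$, even $k=0$, where your table is undefined) is reaching universality index \emph{exactly} $k$, so substitutions must also be able to \emph{decrease} universality. Concretely, for $w=(\ta\tb)^3$ and $k=1$ your algorithm outputs $D[6][1]=0$, yet every single substitution leaves a word of universality index $2$, and two substitutions are needed (e.g., producing $\ta\ta\ta\ta\ta\tb$). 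The paper's Case 2 ($\iota(w)>k$) supplies the missing ingredient: it proves that the minimal number of substitutions equals the minimal number of deletions. One inequality is easy (replace each substitution by a deletion; since each deletion changes the index by at most one, a prefix of the deletion sequence reaches index exactly $k$). For the converse, take an optimal deletion sequence yielding $w'$ with arch factorisation $\ar_{w'}(1)\cdots\ar_{w'}(k)r_{w'}$, associate each deleted position of $w$ to the arch $\ar_{w'}(j)$ whose span it falls into, and instead \emph{substitute} that letter by any letter different from the last letter of $\ar_{w'}(j)$; by \autoref{archfact} the resulting word still has exactly $k$ arches. The $O(nk)$ bound then follows from \autoref{deletions}. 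Without this reduction (or some substitute argument for decreasing universality by substitutions), your proposal proves only half of the theorem.
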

\begin{proof}
Recall that $\iota(w)$ is the initial universality index of $w$. We will distinguish between the cases $\iota(w) < k$ and $\iota(w) > k$. While the former allows for an argumentation similar to \autoref{ins-distance} for insertions, the latter will fall back to the \autoref{deletions} for deletions. 

{\bf Case 1.} Let us assume first that $\iota(w) < k$.

\smallskip

{\emph{$\S$ General approach.}} At a high level, the algorithm and data structures we use here are similar to those used in the case of changing the universality of a word by insertions, described in \autoref{ins-distance} (the finer details are, however, different). As in the respective algorithm, we will compute $M[\ell][t]$ the minimal number of substitutions one needs to apply to $w[1:\ell]$ in order to make it $t$-universal, for all $\ell\in [1:n]$ and all $t\in [1:k]$. Clearly, to edit $w[1:\ell]$ into a $t$-universal word using substitutions, we first create a $(t-1)$-universal word from a prefix $w[1:\ell']$ of $w[1:\ell]$, and then a $1$-universal word from $w[\ell'+1:\ell]$. As in the case of insertions, the number of substitutions used in this process has to be minimal among all the numbers we obtain when choosing $\ell'$ in all possible ways. 
The main differences are that, in the case of substitutions, we need to have that $|w[\ell'+1:\ell]|\geq \sigma$, or we would not be able to obtain a $1$-universal word from $w[\ell'+1:\ell]$, and $|w[1:\ell']|\geq (t-1)\sigma$.
For the first case of this proof see also \autoref{alg:substitutions-case-one}.

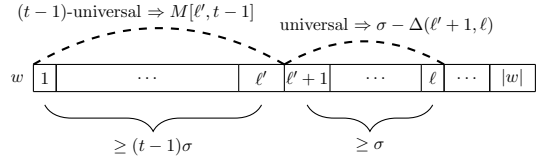
\begin{figure}[h!]
    \centering
    \begin{minipage}[c]{0.5\linewidth}
    	\begin{tikzpicture}[scale=0.6,every node/.style={scale=0.6}]
          
            \node[circle] (t0) at (0,0) {$w$};
            \node[wens, minimum height=6mm, minimum width=5mm, align=center, right= 0mm of t0] (t1) {$1$};
            \node[wens, minimum height=6mm, minimum width=40mm, align=center, right= 0mm of t1] (t2) {$\ldots$};
            \node[ens, minimum height=6mm, minimum width=10mm, align=center, right= 0mm of t2] (t4) {$\ell'$};
            \node[ns, minimum height=6mm, minimum width=10mm, align=center, right= 0mm of t4] (t5) {$\ell'+1$};
            \node[wns, minimum height=6mm, minimum width=20mm, align=center, right= 0mm of t5] (t7) {$\ldots$};
            \node[wens, minimum height=6mm, minimum width=5mm, align=center, right= 0mm of t7] (t8) {$\ell$};
            \node[wens, minimum height=6mm, minimum width=10mm, align=center, right= 0mm of t8] (t9) {$\ldots$};
            \node[wens, minimum height=6mm, minimum width=10mm, align=center, right= 0mm of t9] (t10) {$|w|$};

            \draw[bend left, thick, dashed] (t1.north west) edge node[above, xshift=-5mm]{$(t-1)$-universal $\Rightarrow M[\ell',t-1]$} (t4.north east);
            
            \draw[bend left, thick, dashed] (t5.north west) edge node[above, xshift=5mm]{universal $\Rightarrow \sigma - \Delta(\ell'+1,\ell)$} (t8.north east);
            
            \draw[decoration={brace,mirror,raise=5pt, amplitude=10pt},decorate] (t1.south) --  node[below=15pt]{$ \geq (t-1)\sigma$} (t4.south);
            
            \draw[decoration={brace,mirror,raise=5pt, amplitude=10pt},decorate] (t5.south) --  node[below=15pt]{$ \geq \sigma$} (t8.south);
            
        \end{tikzpicture}
    \end{minipage}
    \caption{Illustration of the formula developed for the computation of $M[\ell][t]$.}
    \label{fig:substidea}
\end{figure}

{\emph{$\S$ Algorithm - initial idea.}}
As described informally above, we compute the $n\times k$ matrix $M[\cdot][\cdot]$, where $M[\ell][t]$ denotes the minimal number of substitution needed to transform $w[1:\ell]$ into a $t$-universal word, for all $\ell \in [1:n]$ and $t\in [1:k]$. We will achieve this using dynamic programming. By the remarks we made above, it is not hard to see that:
$$M[\ell][t] = \min \lbrace M[\ell'][t-1] + (\sigma - \Delta(\ell'+1,\ell)) \mid \sigma (t-1) < \ell'+1 \leq \ell-\sigma+1 \rbrace . $$

In order to compute $M$ efficiently, we will run the algorithms from \Cref{init,last_occs_ins}. These will provide us with the data structures $\last_{j\sigma+1}[\cdot]$, and $d_{j\sigma+1}[\cdot]$, for all $j$, as well as all $M[\cdot][1]$ in $O(n)$ time. More precisely, $M[\ell][1]=\sigma - \Delta(1,\ell)$, if $\ell\geq \sigma$, and $M[\ell][1]=\infty$, if $\ell<\sigma$. As in the case of insertions, the direct computation of $M[\ell][t]$ would be too costly. Therefore, we will need further observations.

\smallskip

{\emph{$\S$ Observations.}} Assume that the letter on position $\ell'+1<\ell-\sigma+1$ in $w$, namely $w[\ell'+1]$, occurs at least twice in $w[\ell'+1:\ell]$. Then $(\sigma - \Delta(\ell'+1,\ell)) = (\sigma - \Delta(\ell'+2,\ell))$. Thus, it clearly follows that  $M[\ell'][t-1]+(\sigma - \Delta(\ell'+1,\ell)) \geq M[\ell'+1][t-1]+(\sigma - \Delta(\ell'+2,\ell))$. So, once more, we only need to consider in our recurrence that $\ell'+1$ is the rightmost occurrence of some letter inside the factor $w[1:\ell]$. The observation made above does not include the case when $\ell'+1=\ell-\sigma+1$, so we will just add this position in the set of  the relevant positions for our recurrence too. 

As we did in the proof of \autoref{ins-distance}, we can now rewrite our recurrence in a way that will enable us to apply \autoref{last_occs_ins}. For $\mathfrak{S}_\ell = (S_\ell\cap[(t-1)\sigma:\ell-\sigma])\cup\{\ell-\sigma+1\}:$
$$ M[\ell][t]= \min \{M[\ell'][t-1]+(\sigma-\Delta(\ell'+1,\ell)) \mid \ell'+1\in \mathfrak{S}_\ell\}.$$

\begin{figure}[h!]
    \centering
    \begin{minipage}[c]{0.5\linewidth}
    	\begin{tikzpicture}[scale=0.6,every node/.style={scale=0.6}]
          
            \node[circle] (t0) at (0,0) {$w$};
            \node[wens, minimum height=6mm, minimum width=25mm, align=center, right= 0mm of t0] (t1) {$\ldots$};
            \node[wens, fill=black!25, minimum height=6mm, minimum width=5mm, align=center, right= 0mm of t1] (t2) {$ $};
            \node[wens, minimum height=6mm, minimum width=10mm, align=center, right= 0mm of t2] (t3) {$\ldots$};
            \node[wens, fill=black!25, minimum height=6mm, minimum width=5mm, align=center, right= 0mm of t3] (t4) {$ $};
            \node[wens, minimum height=6mm, minimum width=10mm, align=center, right= 0mm of t4] (t5) {$\ldots$};
            \node[wens, fill=black!25, minimum height=6mm, minimum width=5mm, align=center, right= 0mm of t5] (t7) {$ $};
            \node[wens, minimum height=6mm, minimum width=15mm, align=center, right= 0mm of t7] (t8) {$\ldots$};
            \node[wens, fill=black!25, minimum height=6mm, minimum width=5mm, align=center, right= 0mm of t8] (t9) {$ $};
            \node[wens, minimum height=6mm, minimum width=15mm, align=center, right= 0mm of t9] (t10) {$\ldots$};
            \node[wens, minimum height=6mm, minimum width=5mm, align=center, right= 0mm of t10] (t11) {$\ell$};
            \node[wens, minimum height=6mm, minimum width=10mm, align=center, right= 0mm of t11] (t12) {$\ldots$};
            \node[minimum height=6mm, minimum width=35mm, align=center, below= 5mm of t4] (t13) {$S_\ell \cap [(t-1)\sigma:\ell-\sigma]$};
            \node[e,minimum height=6mm, minimum width=15mm, align=center, below= 0mm of t1] (t1sub) {$(t-1)\sigma$};
            \node[w,minimum height=6mm, minimum width=15mm, align=center, below= 0mm of t9, xshift=5mm] (t9sub) {$\ell-\sigma+1$};

            \draw[->] (t13) -- (t4.south);
            \draw[->] (t13) -- (t2.south);
            \draw[->] (t13) -- (t7.south);
            
            \draw[bend left, thick, dashed] (t1.north west) edge node[above, xshift=-5mm]{$\rightarrow (t-1)$-universal} (t4.north west);
            
            \draw[bend left, thick, dashed] (t4.north west) edge node[above, xshift=5mm]{$\rightarrow$ universal} (t11.north east);
            
            
            
        \end{tikzpicture}
    \end{minipage}
    \caption{Illustration of $\mathfrak{S}_\ell$ and how it is used to compute $M[\ell][t]$. Here we only select from $S_\ell=\{\last_\ell [a]\mid a\in \letters(w[1:\ell])\}$ the elements which are in $[(t-1)\sigma:l-\sigma]$. To obtain $\mathfrak{S}_\ell$ we also have to consider the position $\ell-\sigma+1$. The elements of $\mathfrak{S}_\ell$ are depicted with grey in the figure. }
    \label{fig:subsl}
\end{figure}
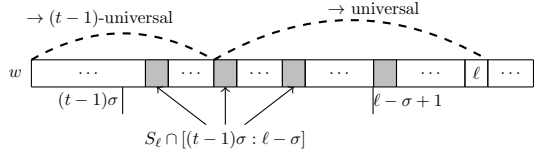

{\emph{$\S$ Algorithm - the efficient variant.}} Using these observations in combination with \autoref{speedUp}, we can compute the elements of the matrix $M$ efficiently. The algorithm is similar to the one from Case 1 of \autoref{ins-distance}. Firstly, by \autoref{delta-sigma}, we can compute the values $\Delta(i-\sigma+1, i)$, for all $i$, in $O(n)$ time.

So, let us consider a value $t\geq 2$. Assume that we have computed the values $M[\ell][t-1]$, for all $\ell\in [1:n]$. We now want to compute the values $M[\ell][t]$, for all $\ell\in [1:n]$.  
The main idea in doing this efficiently is to split the computation of the elements on column $M[\cdot][t]$ of the matrix $M$ in phases. In phase $j$ we compute the values $M[j\sigma+1][t],$ $M[j\sigma+2][t], \ldots, $ $M[(j+1)\sigma][t]$, for $j\leq (n-1)/\sigma$. 

Now we consider some $j$, with $0\leq j\leq (n-1)/\sigma$. We want to apply \autoref{speedUp}, so we need to define the list $A$ of size $\sigma$. This is done as follows. 

We will keep an auxiliary array $\pos[\cdot]$ with $\sigma$ elements. Moreover, the element on each position $i$ of $A$, namely $A[i]$, will be accompanied by two satellite data: a position of $w$ and the letter on that position. 
Now, for $a$ such that $\last_{j\sigma+1}[a] \in \mathfrak{S}_{j\sigma+1} \setminus \{(j-1)\sigma+2\}$, we have that $d_{j\sigma+1}[a]=\sigma - i$ for some $i\in[0:\sigma-2]$ (as $w[\last_{j\sigma+1}[a]:j\sigma+1]$ contains at least $a$ and $w[(j-1)\sigma+2]$). We set $A[i+1]=M[\last_{j\sigma+1}[a]-1][t-1]+i$ and $\pos[a]=i+1$; the satellite data of $A$ is the pair $(\last_{j\sigma+1}[a],a)$. 
We also set $A[\sigma]=M[(j-1)\sigma+1][t-1]+(\sigma-\Delta((j-1)\sigma+2, j\sigma+1))$ and $\pos[w[(j-1)\sigma+2]]=\sigma$; the satellite data of $A$ is the pair $((j-1)\sigma+2,w[(j-1)\sigma+2])$.
If, for some letter $a$, $\last_{j\sigma+1}[a]=n+1$ (i.e., $a$ does not occur in $w[1:j\sigma+1]$), we simply set $\pos[a]=0$. 

The elements of $A$ which are not defined above will store $\infty$ (i.e., a large enough value, at least $(k+1)\sigma+1$); for simplicity, if we apply any arithmetic operation to $\infty$, we get $\infty$. 
We also set $m$ to $\sigma$.  Now we are in the position of defining and applying a sequence of operations $o_1,\ldots,o_\sigma$ from \autoref{speedUp}.

\smallskip

{\emph{$\S$ Algorithm - application of \autoref{speedUp}}}.  In $o_i$, we extract the minimum $q$ of $A$. Then set $M[j\sigma+i][t]= q$. We decrement by $1$ all elements of $A$ on the positions $\pos[a]+1,\pos[a]+2, \ldots, m$, where $a=w[(j-1)\sigma+i+2]$. Then, we append to $A$ the element $M[(j-1)\sigma+i+1][t-1]+(\sigma-\Delta((j-1)\sigma+i+2,j\sigma+i+1))$, with the satellite data $(j\sigma+i+2,a)$ (and implicitly increment $m$ by $1$), and set $\pos[a]=m$. 

\smallskip

{\emph{$\S$ Algorithm - the result of applying \autoref{speedUp}}}.  One can show exactly as in the case of \autoref{ins-distance} that after executing the $\sigma$ operations $o_1,\ldots, o_\sigma$, we have computed the values $M[j\sigma+1][t],$ $M[j\sigma+2][t], \ldots, $ $M[(j+1)\sigma][t]$ correctly. We can move on to phase $j+1$ and repeat this process. 

\smallskip

{\emph{$\S$ The result.}} The minimal number of substitutions needed to make $w$ $k$-universal is correctly computed as $M[n][k]$. 

\smallskip 

{\emph{$\S$ Complexity.}} By \autoref{speedUp}, computing $M[j\sigma+1][t],$ $M[j\sigma+2][t], \ldots, $ $M[(j+1)\sigma][t]$ takes $O(\sigma)$ for each $j$. Overall, computing the entire column $M[\cdot][t]$ takes $O(n)$ time. We do this for all $t\leq k$, and we obtain $O(nk)$ time in total to compute all elements of the matrix $M$.

\smallskip

\SetKw{KwOr}{or}
\SetKw{KwAnd}{and}
\begin{algorithm}[H]
	\SetKwInOut{Input}{Input}
	\SetKwInOut{Output}{Output}
	\Input{word $w$, alphabet $\Sigma$, int $k$}
	\Output{minimal number of substitutions}
	\BlankLine
	
	\tcp{initialisation}
	int $n \leftarrow \lvert w \rvert$; int $\sigma \leftarrow \lvert \Sigma \rvert$\;
	int $M[n][k]=\infty$\;
	\tcp{initialise first column of $M$}
	\For{$l = \sigma$ \KwTo $n$}{
		$M[l][1] \leftarrow \sigma - \Delta(1,l)$\;
	}
	\tcp{efficient variant}
	\For{$t = 2$ \KwTo $k$}{
		
		\tcp{$\leq (n-1)/\sigma$ phases}
		\For{$j=0$ \KwTo $(n-1)/\sigma$}{
			int $A[\sigma][3]$ (list of triples including satellite data); int $\pos[\sigma]$\;
			\For{$a = 1$ \KwTo $\sigma$}{
				\tcp{$S_\ell=\{\last_\ell [a]\mid a\in \letters(w[1:\ell])\}$ }
				\tcp{$\mathfrak{S}_\ell = (S_\ell\cap[(t-1)\sigma:\ell-\sigma])\cup\{\ell-\sigma+1\}$}
				\If{$a \in \mathfrak{S}_{j\sigma+1} \setminus \{(j-1)\sigma + 2\}$}{
					int $i \leftarrow \sigma - d_{j\sigma + 1}[a]$\;
					$A[i+1][1] \leftarrow M[\last_{j\sigma+1}[a]-1][t-1]+i$\;
					$\pos[a] \leftarrow i + 1$\;
					\tcp{satellite data for $A[i+1]$}
					$A[i+1][2] \leftarrow \last_{j\sigma+1}[a]$\;
					$A[i+1][3] \leftarrow a$\;
					
					$A[\sigma][1] \leftarrow M[(j-1)\sigma + 1][t-1] + (\sigma - \Delta((j-1) \sigma + 2, j\sigma + 1))$\;
					$\pos[w[(j-1)\sigma +2]] \leftarrow \sigma$\;
					\tcp{satellite data for $A[\sigma]$}
					$A[\sigma][2] \leftarrow (j-1)\sigma + 2$\;
					$A[\sigma][3] \leftarrow w[(j-1)\sigma + 2]$\;
				}
				
				\If{ $\last_{j\sigma+1}[a] = n+1$}{
					$\pos[a] = 0$\;
				}
			}
			
			\tcp{apply sequence of operations as in \autoref{speedUp}}
			\For{$i = 1$ \KwTo $\sigma$}{
				$q \leftarrow$ minimum of $A$\;
				$M[j\sigma+i][t] \leftarrow q$\;
				$a = w[(j-1)\sigma + i + 2]$\;
				decrement positions $\pos[a]+1, \pos[a]+2, \ldots, m$ by $1$\;
				append $M[(j-1)\sigma + i+1][t-1]+(\sigma - \Delta((j-1)\sigma + i+2, j\sigma + i + 1))$ to $A$ (that is, set $A[m+1][1]$ to that value)\;
				\tcp{and add satellite data}
				$A[m+1][2] \leftarrow j\sigma + i + 2$\;
				$A[m+1][3] \leftarrow a$\;
				$m \leftarrow m + 1$\;
				$\pos[a] \leftarrow m$\;
			}
		}
	}
	
	\Return $M[n][k]$\;
	
	\caption{The efficient algorithm from Case 1 of \autoref{edit-subs} (on substitutions).}
	\label{alg:substitutions-case-one}
\end{algorithm}
\newpage

{\bf Case 2.} Now let us assume that $\iota(w) > k$.

\smallskip

{\emph{$\S$ General approach.}} 
We will show that the minimal number of substitutions needed to obtain a $k$-universal word from $w$ equals the minimal number of deletions needed to obtain a $k$-universal word from $w$, and then use the algorithm from \autoref{deletions} to compute it.

{\emph{$\S $ The proof.}} First of all, the comments made in the opening of \autoref{editDist} explain why the minimal number of substitutions needed to obtain a $k$-universal word from $w$ is lower bounded by the minimal number of deletions needed to obtain a $k$-universal word from $w$.  
Indeed, in a transformation of $w$, of universality index $\iota(w)$, into a word $w'$, of universality index $k$, using $s$ substitutions, we can replace all these substitutions by deletions and get a word $w''$ that has universality index lower or equal to $k$. Thus reaching a $k$-universal word from $w$ requires $s$ deletions or less. 

To see that, in fact, the minimal number of substitutions needed to obtain a $k$-universal word from $w$ equals the minimal number of deletions needed to obtain a $k$-universal word from $w$, we proceed as follows.

Let $ar_w(1) \cdots ar_w(\iota(w))r_w$ be the arch factorisation of $w$. Let $w'$ be a $k$-universal word that is obtained from $w$ using a minimal number of deletions, and let $w'=\ar_{w'}(1) \cdots \ar_{w'}(k)r_{w'}$ be its arch factorisation. Clearly, $w'$ is a subsequence of $w$, so we can identify the list of deleted positions of $w$, as well as the position $i_j$ of $w$ which corresponds to the last symbol of $\ar_{w'}(j)$ for $j\in [1:k]$; that is, $\ar_{w'}(1)\cdots \ar_{w'}(j) $ is obtained using the respective deletions from $w[1:i_j]$. Let $i_0=0$. 
Now, we can associate the deleted positions to the arches of $w'$ in the following way: if $i$ is a position of $w$ such that $w[i]$ was deleted and $i\in [i_{j-1}+1:i_j]$, then $i$ is associated to $\ar_{w'}(j)$. 

Now, in the case of substitutions, instead of deleting letters of $w$, we will replace any deleted letter $w[i]$ of $w$ by a letter different from the last letter of $\ar_{w'}(j)$, namely $\ar_{w'}(j)[|\ar_{w'}(j)|]$, where $\ar_{w'}(j)$ is the arch of $w'$ associated to position $i$. Let $w''$ be the word obtained in this way. By \autoref{archfact} $w''$ has an arch factorisation with exactly $k$ archs (the \nth{$j$} arch of this factorisation ends with the letters $\ar_{w'}(j)[|\ar_{w'}(j)|]$ from the corresponding arch of $w'$). 

This shows that the minimal number of substitutions needed to obtain a $k$-universal word from $w$ is lower or equal to the minimal number of deletions needed to obtain a $k$-universal word from $w$, and, as we have also shown the opposing inequality, these numbers must be equal. 

\smallskip

With this, the analysis of both cases is finished, and it follows that the statement of the theorem holds.
\end{proof}


\begin{remark}
While substitutions and deletions can be used similarly to decrease the universality index of a word, we always need at least as many substitutions as insertions to increase it. To see that this inequality can also be strict, note that one insertion is enough to make $\ta\ta\tb\tb$ $2$-universal, but we need two substitutions to achieve the same result.
\end{remark}

\section{Extensions, Conclusions and Future Work}
In this paper, we presented a series of algorithms computing the minimal number of edit operations one needs to apply to a word $w$ in order to reach $k$-subsequence universality. In fact (see \autoref{construction}, \autoref{proofs}), one can extend our algorithms and, using additional $O(k|\letters(w)|)$ time, we can effectively construct a $k$-universal word which is closest to $w$, with respect to the edit distance. All our algorithms can be implemented in linear space (see \autoref{space} and \autoref{construction}, \autoref{proofs}) using a technique called {\em Hirschberg's trick}~\cite{Hirschberg75}.

The algorithms we presented work in a general setting: the processed words are over an integer alphabet. 
It seems natural to ask whether we can devise faster solutions for inputs over a binary alphabet. To this end, we can show (using a greedy strategy) that the result of
\autoref{ins-distance} can be improved: 
we can compute in $O(|w|)$ time the minimal number of insertions needed to obtain a $k$-universal~word from a given binary word $w$ (\autoref{distance-binary}, \autoref{proofs}). 
It is open whether this holds for substitutions or deletions. 
Moreover, it is open whether we can extend this result to the case of constant size input alphabets, and obtain that we can compute in $O(nf(\sigma))$ time the minimal number of insertions (substitutions, or, respectively, deletions) needed to apply to $w$ to obtain a $k$-universal~word, where $\sigma=|\letters(w)|$ and $f$ is some function. Note that the time complexities of our algorithms from \autoref{editDist} do not depend on $\sigma$, so probably a new approach would be needed.

%
The main open direction of research remains computing the edit distance from a given word $w$ to the set of words whose $k$-spectrum equals (or includes) a given set $S$. Depending on how $S$ is specified (e.g., as the $k$-spectrum of some word, or represented in other compact form), it is expected that novel techniques will need to be developed to solve such problems.

\newpage

\bibliography{scatteredUniversalityArxiv}

\newpage

\renewcommand{\thesection}{\Alph{section}}

\begin{appendices}

\section{Appendix: Universality Alphabet} \label{alphabet}
Our restriction that we want to obtain from the input word $w$ a new word with universality index $k$ {\em with respect to the alphabet $\letters(w)$} is crucial for our approach and results. If we drop this restriction, and instead we want to obtain a word $w'$ from $w$ whose universality index is still $k$ but, this time, with respect to $\letters(w')$, then the opening remarks made in \autoref{editDist} do not hold anymore. For instance the word $\tb\ta\tn\ta\tn\ta\tn$ can be transformed into a $3$-universal word  w.r.t. $\{\tb,\ta,\tn\}$ (e.g., $\tb\ta\tn\tb\ta\tn\ta\tn\tb$) with $2$ insertions, but also with only one deletion, e.g., into the word $\ta\tn\ta\tn\ta\tn$ which is $3$-universal w.r.t. $\{\ta,\tn\}$. Investigating how the edit-distance to $k$-universality can be computed in this setting  seems a possible extension of this work to us. However, it seems more natural to define the alphabet used as reference for universality as the alphabet of the input, rather than computing dynamically it during the algorithm. 

\section{Appendix: Additional Results and Proofs}\label{proofs}

\begin{remark}\label{space}
Assume $k\leq n$. In the proofs of \Cref{ins-distance,edit-subs} and, respectively, in the proof of \autoref{deletions} the main part is computing the matrices $M$ and respectively $N$. This requires $O(nk)$ time and space. However, as computing the columns $M[\cdot][p]$ and $N[\cdot][p]$ only requires knowing the values on columns $M[\cdot][p-1]$ and, respectively, $N[\cdot][p-1]$, we can reduce the space consumption to $O(n)$. So, if we are only interested in computing the minimal number of insertions, deletions, or substitutions required to transform $w$ into a $k$-universal word, $O(n)$ space and $O(nk)$ time are enough.

The case $k>n$ is only relevant when we want to compute the minimal number of insertions required to transform $w$ into a $k$-universal word. As explained in the proof of \autoref{ins-distance}, this can be computed in $O(n^2)$ time to which we add the time needed to compute $\sigma(k-n)$. Similarly, by the observations made above, we can compute the minimal number of insertions required to transform $w$ into a $k$-universal word in $O(n)$ space (the computation of the matrix $M$) to which we add the space needed to compute $\sigma(k-n)$. \qed
\end{remark}

\begin{theorem}\label{construction}
Let $w$ be a word, with $|w|=n$, $\letters(w)=\Sigma$, and $\Sigma = \{1,2,\ldots,\sigma\}$. Let $k\neq \iota(w)$ be an integer. We can construct one of the $k$-universal words which are closest to~$w$ w.r.t. edit distance in $O(kn)$  time, if $k\leq n$, and $O(n^2+k\sigma)$ time, otherwise. The space needed for this construction is $O(n+k\sigma)$
\end{theorem}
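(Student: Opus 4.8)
The plan is to observe that, since the statement assumes $k\neq\iota(w)$, the edit distance from $w$ to the set of $k$-universal words is realised either purely by insertions (if $k>\iota(w)$) or purely by deletions (if $k<\iota(w)$), exactly as argued in the opening of \autoref{editDist}. Hence a closest $k$-universal word can be read off from the dynamic-programming matrix already built in the proof of the corresponding distance theorem: the matrix $M$ of \autoref{ins-distance} when $k>\iota(w)$, and the matrix $N$ of \autoref{deletions} when $k<\iota(w)$. First I would use \autoref{decomp} to compute $\iota(w)$ in $O(n)$ time and decide which case applies; the substitution case is never needed here, since edit distance never prefers a substitution over the cheaper of an insertion or a deletion when only the universality index is to be changed.

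In the insertion case ($k>\iota(w)$, $k\le n$) I would recover the word by tracing back through $M$. Starting from $M[n][k]$ and using $M[\ell][t]=\min\{M[\ell'][t-1]+(\sigma-\Delta(\ell'+1,\ell))\mid \ell'+1\in S'_\ell\cup\{\ell+1\}\}$, at each level $t$ with current right endpoint $\ell$ I would identify a minimising split $\ell'$ by testing the $O(\sigma)$ candidates in $S'_\ell\cup\{\ell+1\}$, and then complete the factor $w[\ell'+1:\ell]$ to a $1$-universal block by inserting the $\sigma-\Delta(\ell'+1,\ell)$ letters of $\Sigma$ absent from it, which can be listed in $O(\sigma)$ time from the toolbox data (the arrays $\last$ and $d$ of \autoref{last_occs_ins}). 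Emitting the original letters of $w$ interleaved with these insertions, in order, yields a $k$-universal word attaining exactly $M[n][k]$ insertions; the additional work is $O(\sigma)$ per level, hence $O(k\sigma)$ in total on top of emitting the $n$ original letters, which keeps the whole construction within $O(kn)$. The deletion case ($k<\iota(w)$) is symmetric: tracing back through $N$ together with the array $T$ used when collecting the result of \autoref{deletions}, one recovers, arch by arch, which occurrences of each letter are removed and outputs the surviving subsequence of $w$, again at cost $O(\sigma)$ per arch.

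The genuinely delicate point is the space bound $O(n+k\sigma)$: a naive traceback stores the full $O(nk)$ matrix. Here I would apply Hirschberg's trick along the universality-level axis. Splitting at level $\lfloor k/2\rfloor$, I would sweep one column of the prefix recurrence forward from level $0$ to $\lfloor k/2\rfloor$ and one column of the mirror suffix recurrence (the minimal number of insertions, resp. deletions, turning a suffix $w[\ell+1:n]$ into a $(k-\lfloor k/2\rfloor)$-universal word) backward, pick the meeting position $\ell^\ast$ minimising the sum of the two values, and recurse independently on $w[1:\ell^\ast]$ with target $\lfloor k/2\rfloor$ and on $w[\ell^\ast+1:n]$ with the remaining target. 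Each sweep keeps only two columns, i.e.\ $O(n)$ space, while the emitted word, of length $O(n+k\sigma)$, is assembled incrementally; the standard Hirschberg analysis preserves the $O(kn)$ running time.

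Finally, for $k>n$ (possible only in the insertion case) I would follow the reduction of Case 2 of \autoref{ins-distance}: construct in $O(n^2)$ time and $O(n)$ space a closest $n$-universal word $w''$ as above, then adjoin $(k-n)$ copies of the fixed $1$-universal block $1\,2\cdots\sigma$; by \autoref{theodlt} any placement of these fully inserted blocks yields a $k$-universal word at the same (optimal) insertion cost, so this adds $O((k-n)\sigma)\subseteq O(k\sigma)$ time and space. Summing gives $O(n^2+k\sigma)$ time and $O(n+k\sigma)$ space, as claimed. The main obstacle I anticipate is setting up the mirror suffix recurrence so that the Hirschberg split is provably valid and the meeting condition is stated correctly; once the prefix and suffix costs are shown to combine additively at the split level, the remaining bookkeeping for emitting the inserted or deleted letters within the stated bounds is routine.
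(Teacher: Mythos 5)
Your proposal is correct and follows the paper's overall plan --- compute $\iota(w)$ via \autoref{decomp}, reduce to pure insertions ($k>\iota(w)$) or pure deletions ($k<\iota(w)$), recover the word by tracing back through the dynamic programs of \autoref{ins-distance} and \autoref{deletions}, and handle $k>n$ by first building an $n$-universal word and appending $(1\cdot 2\cdots\sigma)^{k-n}$ --- but it realises Hirschberg's trick differently. You use the classical meet-in-the-middle variant: a forward prefix sweep to level $\lfloor k/2\rfloor$, a backward mirror suffix sweep, and a meeting position $\ell^\ast$ minimising the sum. The paper instead runs \emph{only} the forward DP and carries, in an auxiliary matrix $H$ computed column-by-column alongside $M$ (resp.\ $N$), the level-$\lfloor k/2\rfloor$ split position of an optimal solution: $H[\ell][t]=j_\ell$ at $t=\lfloor k/2\rfloor$ and $H[\ell][t]=H[j_\ell][t-1]$ above, so that $H[n][k]=i_{\lfloor k/2\rfloor}$; the recursion then proceeds exactly as in your sketch, with the same $T(n,k)=O(nk)+T(i_{\lfloor k/2\rfloor},\lfloor k/2\rfloor)+T(n-i_{\lfloor k/2\rfloor},\lceil k/2\rceil)$ analysis and the same reused-space induction giving $O(n)$ working space plus the $O(n+k\sigma)$ output. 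The trade-off between the two variants is precisely the obstacle you flag at the end: your version needs a suffix recurrence whose values are available for \emph{all} split points within the same $O(n)$-per-level budget. For insertions this is unproblematic, since $k$-universality is invariant under reversal (by \autoref{theodlt}) and the mirror DP is the same algorithm run on the reversed word; for deletions it is genuinely delicate, because the weak-$p$-universal DP of \autoref{deletions} is anchored at a kept last letter, the rest is handled separately by the array $T[\cdot]$ at the right end, and the target is universality index \emph{exactly} $k$. To make your split valid there, you would need a rest-first mirrored DP on the reversed word (using the observation that a weak-$q$-universal word necessarily has an empty rest, so the arch-side of the split is exactly the quantity $N[\cdot][\cdot]$ after reversal), and you would have to verify it meets the per-column bounds; the paper's split-index propagation via $H$ avoids any mirror recurrence at all, which is exactly what that design buys. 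The remaining ingredients of your write-up --- the $O(\sigma)$-per-level completion of each block using the toolbox data, the output-size accounting, and the $O(n^2+k\sigma)$ treatment of $k>n$ --- match the paper's proof.
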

\begin{proof}

{\em $\S$ The initial algorithm.} We first explain how one of the $k$-universal words which are closest to~$w$ w.r.t. edit distance can be constructed in $O(kn)$ time, without fulfilling the space complexity restriction. We will split the discussion in two cases. 

{\bf Case 1.} $k> \iota(w)$. In this case, we only need insertions to produce one of the $k$-universal words which are closest to $w$ w.r.t. edit distance. So we will use the algorithm in \autoref{ins-distance} to construct this word. We assume that we use the same notations as in the proof of the respective theorem. In the referenced algorithm we compute, for each position $\ell$ of $w$ and each $t\leq \min\{n,k\}$ a value $j_\ell$ such that 
$$M[\ell][t]=M[j_\ell][t-1]+(\sigma-\Delta(j_\ell+1,\ell) = \min \{M[\ell'][t-1]+(\sigma-\Delta(\ell'+1,\ell)) \mid \ell'\leq \ell\}.$$
We define $Sol[\ell][t]\gets j_{\ell}$. 

Let us assume first $k\leq n$. Define the sequence $i_{k-1}=Sol[n][k]$ and, for $j\in [2,k-1]$, $i_{j-1}=Sol[i_j][j]$. Let $i_0=0$ and $i_k=n$. It is not hard to see that for the decomposition $w=v_1\cdots v_k$, where, for $j\in [1:k-1]$, $v_j=w[i_{j-1}+1:i_j]$, the following holds: $\sum_{j\in [1:k]}(\sigma-\Delta(i_{j-1}+1,i_j))=M[n][k]$. In other words, if we insert the minimal number of letters in each of the words $v_1,\ldots,v_k$ such that they become universal, then we obtain one of the $k$-universal words which are closest to $w$ w.r.t. the edit distance. 

Clearly, the sequence of words $v_1=[1:i_1]$, $v_2=[i_1+1:i_2], \ldots $,$v_k=[i_{k-1}+1:i_k]$ can be computed in linear time $O(n)$ once we have the matrix $Sol$. 

Now, to compute a $k$-universal word obtained from $w$ by making each of the words $v_1,\ldots, v_k$ universal with a minimal number of insertions, we do the following. For each $i\in [1:k]$, by traversing the word $v_i$ left to right we can identify in $O(|v_i|+\sigma)$ the subset $V_i$ of $\Sigma$ containing the letters which do not occur in $v_i$ (e.g., using a counting vector like in the proof of \autoref{init}). We produce a word $u_i$ from $v_i$ by appending the letters from $V_i$ at the end of $v_i$; this takes $O(|v_i|+\sigma)$ time. Then we concatenate the words $u_1,\ldots, u_k$ and obtain a word $u$ which is $k$-universal and the number of insertions needed to obtain $u$ from $w$ is $M[n][k]$. The total time needed to produce $u$ is $O(k\sigma)$.

Let us now assume that $k> n$. Just like above, we obtain an $n$-universal word $u$ from $w$. Then we concatenate at the end of $u$ the word $(1\cdot 2\cdots \cdot \sigma)^{k-n}$. In this way, we obtain a word $u'$ which is $k$-universal and the number of insertions needed to obtain $u$ from $w$ is minimal, i.e., $M[n][k]+(k-n)\sigma$. The total time needed to produce $u'$ is $O(n^2+k\sigma)$.

This concludes the analysis of Case 1. 

\smallskip

{\bf Case 2.} $k< \iota(w)$. In this case, we only need deletions to produce one of the $k$-universal words which are closest to $w$ w.r.t. edit distance. So we will use the algorithm in \autoref{deletions} to construct this word. We assume that we use the same notations as in the proof of the respective theorem. In the algorithm described in the proof of \autoref{deletions} we compute, for $i\in [1:n], p\in [2:k]$, a value $j_i$ such that 
$$N[i][p] =N[j_i][p-1] + |w[j_i+1:i]|_{w_i} = \min \{N[i'][p-1] + |w[i'+1:i]|_{w_i}\mid i'\in [1:\univ[i]-1]\}.$$

We define $Sol[i][p]\gets j_{i}$. We have $Sol[i][1]=0$.

Finally, to return the minimal number of deletions needed to transform $w$ into a $k$-universal word, we compute $m=\argmin\{N[i][k] + T[i] \mid 1 \leq i \leq n\}$.

We now define the sequence $i_k=m$, $i_{j-1}=Sol[i_j][j]$, with $j\geq 2$, and $i_0=1$. Let $i_{k+1}=n$. It is not hard to see that for the decomposition $w=v_1\cdots v_kv_{k+1}$, where, for $j\in [1:k+1]$ and $v_j=w[i_{j-1}+1:i_j]$, the following holds:  $\sum_{i\in [1:k]} (|v_i|_{v_i[|v_i|]}+1) + |v_k|_{w[m]} = \min\{N[i][k] + T[i] \mid 1 \leq i \leq n\}$. 

Clearly, the sequence of words $v_1, \ldots, v_k, v_{k+1}$ can be computed in linear time $O(n)$ once we have the matrix $Sol$. To compute a $k$-universal word obtained from $w$ by making each of the words $v_1,\ldots, v_k$ universal with a minimal number of deletions, we just remove from $v_i$ all the occurrences of their last letter (i.e., $v_i[|v_i|]$), except the rightmost one. This takes $O(n)$ time, and we can output the word obtained by this procedure as one of the $k$-universal words which are closest w.r.t. edit distance to $w$.

\smallskip

{\em $\S $ A space efficient implementation.} We will only discuss in detail how the case when $n\geq k>\iota(w)$ is implemented, as all the other cases can be approached in exactly the same manner. 

We know by \autoref{space} that the matrix $M$ can be computed in linear space $O(n)$. However, it is unclear how we can compute the sequence $i_0,i_1,\ldots, i_k$ in linear space. In particular, in the approach described above we explicitly need to compute and store all the elements in the matrix. 

Fortunately, there exists a standard way to deal with this problem (known as {\em Hirschberg's trick} \cite{Hirschberg75}). 

We will need to define more formally the problem that we want to solve in this framework. 

We want to solve the problem ${\mathcal P}(w)$ which requires, for the input word $w$ of length $n$, to compute the smallest number $m$ of insertions needed to transform $w$ into a $k$-universal word {\bf and} the sequence of positions $i_0=1,i_1,\ldots, i_{k-1}, i_k=n$ such that we can transform each of the words $w[i_{j-1}:i_j]$, with $j\in [1:k]$, into a universal word, using in total exactly $m$ insertions. 

The solution of ${\mathcal P}(w)$ is the following.

Firstly, we will compute the value $m$ as described in the proof of \autoref{ins-distance}, using only linear space as described in \autoref{space}, but, alongside $m$, we will also compute the value $i_{\lfloor k/2 \rfloor}$. 

This can be done by computing (the columns of) an additional matrix  $H$, simultaneously with (the columns of) the matrix $M$. Recall that computing $M[\ell][t]$ is based on identifying a position $j_\ell$ such that 
$M[\ell][t]=M[j_\ell][t-1]+(\sigma-\Delta(j_\ell+1,\ell).$ Then $H[\ell][t]$ is defined as follows: 
$
H[\ell][t] =
\left\{
	\begin{array}{ll}
		\infty  & \mbox{if } t<\lfloor k/2 \rfloor,\\
		H[j_\ell][t-1] & \mbox{if } t> \lfloor k/2 \rfloor, \\
		j & \mbox {if } t=\lfloor k/2 \rfloor.
	\end{array}
\right.
$

Intuitively, given that $M[\ell][t]$ is the minimal number of insertions needed to transform $w[1:\ell]$ into a $t$-universal word, then $H[\ell][t]$ is the ending position of the prefix of $w[1:\ell]$ which was transformed in a $\lfloor k/2 \rfloor$-universal word when the respective sequence of length $M[\ell][t]$ of insertions is applied to $w[1:\ell]$ .

Clearly, to compute the elements of the column $H[\cdot][t]$ of the matrix $H$ we only need column $H[\cdot][t-1]$ (and the columns $M[\cdot][t]$ and $M[\cdot][t-1]$ of matrix $M$). So, we can compute $H[n][k]$ in linear space and $O(nk)$ time. Also, it is not hard to see that $H[n][k]$ is exactly the value $i_{\lfloor k/2 \rfloor}$. (In fact, there may be more solutions to our problem ${\mathcal P}(w,k)$, so $H[n][k]$ corresponds to the position $i_{\lfloor k/2 \rfloor}$ in one of these solutions)

To compute the rest of the values $i_0,\ldots,i_{\lfloor k/2 \rfloor-1},i_{\lfloor k/2 \rfloor+1},\ldots,i_k $ we proceed in a divide and conquer manner. We solve ${\mathcal P}(w[1:i_{\lfloor k/2\rfloor}],{\lfloor k/2\rfloor})$ and obtain 
the sequence $i_0,\ldots,i_{\lfloor k/2 \rfloor-1},i_{\lfloor k/2 \rfloor}$, and then we solve ${\mathcal P}(w[i_{\lfloor k/2\rfloor}+1:n],{\lceil k/2\rceil})$ and we obtain the sequence $i_{\lfloor k/2 \rfloor},i_{\lfloor k/2 \rfloor+1},\ldots,i_k$. Then we simply return $i_0,\ldots,i_{\lfloor k/2 \rfloor-1},i_{\lfloor k/2\rfloor},i_{\lfloor k/2 \rfloor+1},\ldots,i_k $ and the value $M[n][k]$ as a solution to ${\mathcal P}(w,k)$.

The correctness of the algorithm is based on the following simple remark: if $w[1:n]$ can be transformed into a $k$-universal word by a sequence of $m$ insertions, such that in the respective sequence of insertions transforms $w[1:i_{\lfloor k/2\rfloor}]$ into a ${\lfloor k/2\rfloor}$-universal word, then the following hold:
\begin{itemize}
\item  $w[1:i_{\lfloor k/2\rfloor}]$ can be transformed into a ${\lfloor k/2\rfloor}$-universal word by $p$ insertions;
\item  $w[i_{\lfloor k/2\rfloor}+1:n]$ can be transformed into a ${\lceil k/2\rceil}$-universal word by $m-p$ insertions.
\end{itemize} 
Thus, solving ${\mathcal P}(w,k)$ can be reduced to computing $i_{\lfloor k/2\rfloor} $ and solving recursively ${\mathcal P}(w[1:i_{\lfloor k/2\rfloor}],{\lfloor k/2\rfloor})$ and ${\mathcal P}(w[i_{\lfloor k/2\rfloor}+1:n],{\lceil k/2\rceil})$.

The time complexity $T(n,k)$ of solving ${\mathcal P}(w,k)$ is then $T(n,k)=O(nk) + T(i_{\lfloor k/2\rfloor},\lfloor k/2\rfloor) + T(n-i_{\lfloor k/2\rfloor}, {\lceil k/2\rceil})$. It is easy to show that $T(n,k)\in O(nk)$. 

Assume that the algorithm computing $H[n][k]$ and $M[n][k]$ needs $cn$ space for some constant $c$. 

One can show by induction on $n+k$ that the space complexity $S(n,k)$ of solving ${\mathcal P}(w,k)$ is upper-bounded by $dn$, where $d\geq c$ is a constant, $n=|w|$, and $k\leq n$. The case $n+k=2$ is trivial. Assume that this is true for $n+k\leq r$. We show that it is true for $n+k=r+1$. To solve ${\mathcal P}(w,k)$ we first compute the values $M[n][k]$ and $H[n][k]$ in $cn\leq dn$ space, for some constant $c$. The main observation we make here is that the space we used in this computation can then be reused. Then we solve ${\mathcal P}(w[1:i_{\lfloor k/2\rfloor}],\lfloor k/2\rfloor)$ using $di_{\lfloor k/2\rfloor}\leq dn$ space (which is actually reused space). Finally, we solve ${\mathcal P}(w[i_{\lfloor k/2\rfloor}+1:n],\lfloor k/2\rfloor)$ using $d(n-i_{\lfloor k/2\rfloor})\leq dn$ space (which is, once more, reused space). Thus, $S(n,k)$ is upped bounded by $dn$. 

This concludes our proof, and shows the statement of the theorem for the case of increasing the universality index of a word $w$ by insertions to a value $k$ with $k\leq |w|$. 

We immediately get that in the case of increasing the universality index of a word $w$ by insertions to a value $k$ with $k> |w|$ we need $O(n)$ space to reach $n$-universality, and the rest of the construction can be done trivially in $O(k\sigma)$ space (we just need to write down the output). 

The case of decreasing the universality index of a word $w$ by deletions to a value $k<\iota(w)$ can be treated in an identical way (using the same divide-and-conquer trick). 
\end{proof}

\begin{theorem}\label{distance-binary}
Let $w$ be a word, with $|w| = n$ and $alph(w)=\{\ta,\tb\}$. Let $k\geq \iota(w)$ be a positive integer.
We can compute in $O(|w|)$ time the minimal number of insertions needed to obtain a $k$-universal~word from a given binary word $w$. 
\end{theorem}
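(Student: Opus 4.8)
The plan is to reduce the problem to a purely combinatorial optimisation over the run structure of $w$ and to derive a closed form, which is then computable in $O(|w|)$ time. The starting point is \autoref{theodlt}: a binary word is $k$-universal if and only if it can be cut into $k$ consecutive factors, each containing both $\ta$ and $\tb$. Since increasing universality requires only insertions (as argued at the opening of \autoref{editDist}) and insertions never delete original letters, a shortest insertion sequence producing a $k$-universal word corresponds to a shortest $k$-universal supersequence $W$ of $w$, and the answer is $|W|-n$ with $n=|w|$. Writing $W=B_1\cdots B_k$ with each $B_j$ an arch, the original letters landing in $B_j$ form a (possibly empty) factor $f_j$ with $f_1\cdots f_k=w$. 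Thus the answer equals the minimum over all partitions of $w$ into exactly $k$ consecutive, possibly empty, factors $f_1,\dots,f_k$ of $\sum_j c(f_j)$, where $c(f_j)=0$ if $f_j$ contains both letters, $c(f_j)=1$ if $f_j$ is nonempty and single-coloured, and $c(f_j)=2$ if $f_j$ is empty (since we must insert one, respectively two, letters to turn $f_j$ into an arch).

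First I would rewrite this objective. Let $b,g,e$ be the numbers of both-coloured, single-coloured, and empty factors; then $b+g+e=k$ and the total cost is $g+2e=2k-(2b+g)$, so minimising the number of insertions is equivalent to maximising $2b+g$ over partitions of $w$ into $b+g\le k$ nonempty consecutive factors, $b$ of which are both-coloured. The key inequalities are: (i) $2b+g\le n$, since each both-coloured factor consumes at least two original letters, each single-coloured factor at least one, and the factors are disjoint and cover $w$; (ii) $b\le\iota(w)$, because (when $b\ge 1$) by absorbing each single-coloured and empty factor into a neighbouring both-coloured factor one obtains a full partition of $w$ into $b$ arches, so $w$ is $b$-universal; and (iii) $b+g\le k$. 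From (ii) and (iii), $2b+g=b+(b+g)\le \iota(w)+k$, which together with (i) gives $2b+g\le\min\{n,\iota(w)+k\}$; hence the number of insertions is at least $\max\{2k-n,\ k-\iota(w)\}$.

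The main work, and the main obstacle, is the matching upper bound, which is precisely where the \emph{order} of the letters (not merely their counts) must be handled: the count-only lower bound $\max\{k-|w|_\ta,0\}+\max\{k-|w|_\tb,0\}$ can be strictly smaller, as witnessed by $\ta\ta\tb\tb$ with $k=2$. Here I would exploit that, over a binary alphabet, every arch of the greedy factorisation of \autoref{decomp} has the shape $x^{a}\bar x$ and the rest is a single run $c^{s}$. Splitting an arch $x^{a}\bar x$ as $x\mid\cdots\mid x\mid x\bar x$ produces $a-1$ single-coloured factors and one both-coloured factor, and splitting the rest into single letters yields $s$ further single-coloured factors; summing over all arches and the rest shows that, keeping $b=\iota(w)$ both-coloured factors, one can create up to $n-2\iota(w)$ single-coloured factors while still partitioning all of $w$. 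I would then check both regimes: if $k\le n-\iota(w)$, take $b=\iota(w)$ and $g=k-\iota(w)\le n-2\iota(w)$ (cost $k-\iota(w)$); if $k>n-\iota(w)$, take $b=\iota(w)$, $g=n-2\iota(w)$ using all letters, and the remaining $k-(n-\iota(w))$ factors empty (cost $2k-n$). In either case $2b+g=\min\{n,\iota(w)+k\}$, matching the lower bound.

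Finally I would conclude that the minimal number of insertions is exactly $\max\{2k-n,\ k-\iota(w)\}$. Computing this takes $O(|w|)$ time: $n=|w|$ is immediate and $\iota(w)$ is obtained in linear time by \autoref{decomp} (or by a direct left-to-right scan counting completed arches), after which only a constant number of arithmetic operations remains.
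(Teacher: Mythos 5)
Your proposal is correct, and it takes a recognizably different route from the paper's. The paper argues directly on the word: it extracts from the greedy arch factorisation (\autoref{decomp}) $\ell=\iota(w)$ adjacent pairs $w_1,\dots,w_\ell\in\{\ta\tb,\tb\ta\}$, inserts the opposite letter $\overline{u[j]}$ after leftover letters of the rest-word $u$ (and, for large $k$, appends copies of $\ta\tb$), and proves minimality by classifying, inside a distinguished subsequence $w_1\cdots w_k$ of the target word, how many pairs were fully or half inserted. You instead reduce the problem via \autoref{theodlt} to an optimisation over ordered partitions of $w$ into $k$ possibly empty blocks with block costs $0$, $1$, $2$, and derive the closed form $\max\{2k-n,\;k-\iota(w)\}$ from three counting constraints ($2b+g\le n$; $b\le\iota(w)$, obtained by merging blocks and invoking \autoref{theodlt} a second time; $b+g\le k$), matched by a construction that splits each binary arch $x^{a}\bar{x}$ into $a-1$ singletons plus one two-letter block. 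Your lower bound is arguably cleaner than the paper's pair-classification count, and your constraint (ii) is exactly the point where the order of letters, not just their multiplicities, enters; both proofs use the greedy arch structure for the matching construction.

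Worth flagging: your closed form does not merely restate the paper's piecewise formula, it corrects its case boundaries. The paper claims $i=k-\ell$ for $\ell<k\le n-2\ell$ and $i=2k-n$ for $k>n-2\ell$, but the breakpoint should be $n-\ell$, as your formula implies: the paper's own construction, inserting after the first $k-\ell$ letters of $u$, works precisely while $k-\ell\le |u|=n-2\ell$, i.e.\ $k\le n-\ell$, and in the window $n-2\ell<k<n-\ell$ the paper's value $2k-n$ falls below the valid lower bound $k-\ell$, while its construction would append $(\ta\tb)^{k-n+\ell}$ with a negative exponent. Concretely, for $w=\ta\ta\tb\ta\ta\tb$ and $k=3$ we have $n=6$, $\ell=2$, so the paper's cases give $2k-n=0$, whereas the true answer is $1$ (insert a $\tb$ after the first $\ta$), in agreement with your $\max\{2k-n,\,k-\ell\}=\max\{0,1\}=1$. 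Your complexity analysis is also sound: $\iota(w)$ is computed in $O(n)$ time by \autoref{decomp}, after which only constantly many arithmetic operations are needed.
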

\begin{proof}
We show that, if $\ell$ is the universality index of $w$, we have that the number $i$ of insertions, that we want to compute, is $i=0$ if $k\leq \ell$, $i=k-\ell$ if $\ell <k\leq n-2\ell$, and $i=2k-n$ if $n-2\ell <k$.

Define the mapping $\overline{\cdot}:\Sigma\rightarrow\Sigma$ by $\overline{\ta}=\tb$ and $\overline{\tb}=\ta$.
The claim holds immediately for $k\leq \ell$ by \autoref{theodlt}. Assume $k>\ell$. Since $\{\ta\tb,\tb\ta\}^{\ell}\subseteq\ScatFact_{2\ell}(w)$ there exists $w_1,\dots,w_{\ell}\in\{\ta\tb,\tb\ta\}$ and $r_1,\dots,r_\ell\in[n]$ with $w[r_j,r_j+1]=w_j$ for all $j\in[\ell-1]$.
Notice that by the choice of $\ell$ we have $r_j+1<r_{j+1}$.
Set $u_1=w[1:r_1-1]$, $u_{\ell+1}=w[r_{\ell}+2:n]$, and $u_j=w[r_{j-1}+2:r_j-1]$ with $u_s=\varepsilon$ if the first index is strictly greater than the second, for $2\leq j\leq \ell$ and $s\in[\ell]$. This implies $w=u_1w_1u_2\dots u_{\ell}w_{\ell}u_{\ell+1}$ and $u=u_1\dots u_{\ell+1}$ is of length $n-2\ell$. If $n-2\ell\geq k>\ell$ set
\[
u'=u[1]\overline{u[1]} u[2]\overline{u[2]}\dots 
u_{k-\ell}\overline{u[k-\ell]} u[k-\ell+1]\dots u[n-2\ell]
\]
and $u_s'$ accordingly for $s\in[\ell+1]$. Then $w'=u_1'w_1u_2'\dots u_{\ell}'w_{\ell}u_{\ell+1}'$ is obtained from $w$ by $k-\ell$ insertions and by the definition of $u'$ we have $\{\ta\tb,\tb\ta\}^k\subseteq\ScatFact_{2k}(w')$. This implies that $w'$ is $k$-universal. If $k>n-2\ell$ set
\[
u'=u[1]\overline{u[1]} u[2]\overline{u[2]}\dots 
u_{k-\ell}\overline{u[k-\ell]} u[k-\ell+1] \overline{u[k-\ell+1]} \dots u[n-2\ell]\overline{u[n-2\ell]}
\]
and $u_s'$ accordingly for $s\in[\ell+1]$. Then $w'=u_1'w_1u_2'\dots u_{\ell}'w_{\ell}u_{\ell}'(\ta\tb)^{k-n+\ell}$ is obtained from $w$ by $n-2\ell+2(k-n+\ell)=2k-n$ insertions. By definition of $u'$ and the appended $(\ta\tb)^{k-n+\ell}$ we get $\{\ta\tb,\tb\ta\}^{\ell+n-2\ell+k-n+\ell}=\{\ta\tb,\tb\ta\}^k$ is a subset of $\ScatFact_k(w')$ and by \autoref{theodlt}, $w'$ is $k$-universal. This proves that with $i$ insertions a $k$-universal word can be obtained from $w$.

\smallskip

We prove now that $i$ is minimal. Suppose that $i$ is not minimal and let $i'<i$ be the minimal number of insertions such that the obtained word $w'$ is $k$-universal. By \autoref{theodlt} we have $\{\ta\tb,\tb\ta\}^{k}\subseteq\ScatFact_{2k}(w')$ and there exists $w_1,\dots,w_k\in\{\ta\tb,\tb\ta\}$ such that $w_1\dots w_k$ is a subsequence of $w'$. Let $j''$ be the number of $w_s$ which were inserted completely and $j'$ be the number of $w_s$ in which one letter was already in $w$ and one is inserted, i.e. $i'=j'+2j''$. This implies $k\leq\ell+j'+j''$. By the first part of the proof we have $k= i+\ell$. If $\ell<k\leq n-2\ell$ we get $i'+\ell<i+\ell= k\leq\ell+j'+j''$ and thus $i'<j'+j''$ which contradicts $i'=j'+2j''$. If $k>n-2\ell$ we get with $n+i'\geq 2\ell+2j''+2j'$ (for $w'$'s length)
\[
2(\ell+j'+j'')\leq n+i'<n+i=2k=2(\ell+j'+j'').
\]
Hence, $i$ is minimal.

\smallskip

By \autoref{decomp} the decomposition into $w_1,\dots, w_{\ell}$ can be found in run-time $\mathcal{O}(n)$. The word $u_s'$ for $s\in[\ell+1]$ can be constructed by going once one left to right while inserting after each letter $x$ the {\em opposite} letter $\overline{x}$ until $k-\ell$ insertions are reached. If $k>n-2\ell$ for each letter a new one is inserted and the remaining $(\ta\tb)^{k-n+\ell}$ letter are simply appended. 
\end{proof}

\section{Appendix: The computational model}\label{RAM}

{\bf General algorithmic framework:} Our results are of algorithmic nature. The computational model we use is the standard unit-cost RAM with logarithmic word size: for an input of size $n$, each memory word can hold $\log n$ bits. Arithmetic and bitwise operations with numbers in $[1:n]$ are, thus, assumed to take $O(1)$ time. Numbers larger than $n$, with $\ell$ bits, are represented in $O(\ell/\log n)$ memory words, and working with them takes time proportional to the number of memory words on which they are represented. In all the problems, we assume that we are given a word $w$, with $|w|=n$, over an alphabet $\Sigma=\{1,2,\ldots,\sigma\}$, with $|\Sigma|=\sigma\leq n$. That is, we assume that the processed words are sequences of integers (called letters or symbols, each fitting in $O(1)$ memory words). This is a common assumption in string algorithms: the input alphabet is said to be {\em an integer alphabet}. For a more detailed general discussion on this model see, e.g.,~\cite{crochemore}, and for its use in the particular case of algorithms related to subsequences see~\cite{mfcs2020}.

It is interesting to see whether we can extend our results for more general input alphabets, so when we drop the assumption that if the input word is $w$, then $w$ is over the alphabet $\Sigma=\{1,\ldots,
\sigma\}$ with $\sigma\leq |w|$. In the next paragraph we will follow the similar discussion made in \cite{mfcs2020}.

A computational model used in string algorithms assumes that the input is over general ordered alphabets (see \cite{dimaRuns,dimaCF,pawelLCE} and the references therein). More precisely, the input is a sequence of elements from a totally ordered set ${\mathcal U}$ (i.e., string over ${\mathcal U}$). The operations allowed in this model are those of the standard Word RAM model, with one important restriction: the elements of the input cannot be directly accessed nor stored in the memory used by the algorithms; instead, we are only allowed to {\em compare} (w.r.t. the order in ${\mathcal U}$) any two elements of the input, and the answer to such a comparison-query is retrieved in $O(1)$ time. In this model, it holds that sorting the elements of an input sequence requires at least $\Omega(n \log n)$ comparisons. An implementation of our algorithms, where we first sort the letters of the input word, map them to words over $\{1,\ldots,n\}$, and then use the same strategies as the ones described for the case of integer alphabets, would require some additional $O(n \log n)$ computational time, due to the sorting.

In fact, one cannot hope to go under $\Omega(n \log n)$ comparisons in the respective model of computation. Indeed, in this framework, it also holds that testing the equality of two sets of size $O(n)$ requires $\Omega(n \log n)$ comparisons~\cite{dobkin}. We can show the following lower bounds.

 \begin{theorem} \label{lower}
Let $w$ be a word, with $|w|=n$, $\letters(w)=\Sigma$, and universality index $\iota(w)$. Let $k$ be an integer with $n\geq k$. Computing the minimal number of insertions (respectively, deletions, or substitutions) needed to transform $w$ into a $k$-universal word requires $\Omega(n \log n)$ comparisons (so $\Omega(n \log n)$ time as well). 
\end{theorem}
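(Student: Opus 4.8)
The plan is to prove the lower bound by a reduction from the \emph{set equality} problem, which is known to require $\Omega(n\log n)$ comparisons in the ordered-alphabet model~\cite{dobkin}. Given two sets $A=\{a_1,\ldots,a_n\}$ and $B=\{b_1,\ldots,b_n\}$, each of size $n$, whose elements come from a totally ordered universe, I would build in $O(n)$ time, and crucially without performing any comparison between input elements, the word $w=a_1\cdots a_n b_1\cdots b_n$ of length $2n$ with $\letters(w)=A\cup B$. The whole reduction consists of arranging the given elements into this word, running the hypothetical edit-distance algorithm on $w$ for a suitable $k$, and inspecting whether the returned value is $0$; any comparison of the output with the constant $0$ is a numeric operation and does not count as a comparison between input elements.

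The combinatorial heart of the argument is the claim that $\iota(w)=2$ if $A=B$ and $\iota(w)=1$ if $A\neq B$. If $A=B$, then $\sigma=|A\cup B|=n$ and each of the two blocks $a_1\cdots a_n$ and $b_1\cdots b_n$ contains all $\sigma$ letters exactly once; hence, by \autoref{decomp} together with \autoref{archfact}, each block is a single arch and $\iota(w)=2$. If $A\neq B$, then $\sigma=n+d$ with $d=|A\cup B|-n\geq 1$; the prefix $a_1\cdots a_n$ only contains the $n$ letters of $A$, so the first (greedy) arch must extend past position $n$ to collect the $d$ letters of $B\setminus A$, and the remaining suffix is then a proper suffix of $b_1\cdots b_n$ of length at most $n-1<\sigma$, which cannot contain all $\sigma$ letters. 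Thus $w$ has exactly one arch and $\iota(w)=1$. I would prove this dichotomy directly from the greedy arch construction recalled after \autoref{decomp}.

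With this dichotomy in hand, I would instantiate the reduction separately for each operation. For insertions I take $k=2$: since $w$ is $k$-universal exactly when $\iota(w)\geq 2$, the minimal number of insertions is $0$ iff $A=B$, so computing it decides set equality. For deletions I take $k=1$: here $k\le\iota(w)$ always holds, so the problem is well posed, and (since deletions cannot raise the index) the minimal number of deletions reaching index exactly $1$ is $0$ iff $\iota(w)=1$ iff $A\neq B$; one checks directly that $w$ satisfies the hypothesis $\iota(w)\le |w|/\sigma$ of \autoref{deletions} (it holds with equality when $A=B$ and strictly when $A\neq B$, since $d\le n$). For substitutions I again take $k=2$: when $A=B$ the word is already $2$-universal and the answer is $0$, whereas when $A\neq B$ it is not $2$-universal, so any correct algorithm must return a nonzero value (or report that $2$-universality is unreachable, which happens precisely when $|w|<2\sigma$); in all cases the returned value differs from $0$ exactly when $A\neq B$. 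In each case a sub-$\Theta(n\log n)$ comparison algorithm for the distance would yield a sub-$\Theta(n\log n)$ comparison decision procedure for set equality, contradicting~\cite{dobkin}; the time bound follows because each comparison costs $\Omega(1)$ time.

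The step I expect to require the most care is the substitution case, because the target $2$-universality can become \emph{infeasible} when $A\neq B$ (the extra letters of $B\setminus A$ force $2\sigma>|w|$). I would handle this by fixing the convention that an algorithm for the substitution problem must output a distinguished value (e.g.\ $\infty$) when no $k$-universal word is reachable, so that ``output $\neq 0$'' still faithfully detects $A\neq B$; alternatively one could pad the construction to keep $\sigma$ below $|w|/2$ while preserving the index dichotomy. The remaining routine points---that the construction introduces no comparisons between input elements, and that the bound $\Omega(n\log n)$ for size-$n$ sets translates to the claimed $\Omega(|w|\log|w|)$ since $|w|=2n$---are straightforward.
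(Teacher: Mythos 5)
Your proposal is correct and follows the same overall strategy as the paper's proof: a comparison-free reduction from set equality, whose $\Omega(n\log n)$ comparison lower bound is taken from the very same source \cite{dobkin}. The gadgets differ in an instructive way, though. The paper adds two fresh separator letters and uses $w=\$ s_1\cdots s_n\#\$ t_1\cdots t_m\#$, so that the two arches of a $2$-universal word are forced to be the delimited blocks and unequal-size sets ($m\le n$) come for free; for deletions it switches to the separator-free word $w'=s_1\cdots s_n t_1\cdots t_m$ and targets universality index $k=0$ (answer $1$ iff $S\neq T$, answer $2$ iff $S=T$). You instead work throughout with the separator-free word and equal-size sets, replacing the separator trick by a dichotomy lemma, $\iota(w)=2$ iff $A=B$ and $\iota(w)=1$ otherwise, proved from the greedy arch factorisation of \autoref{decomp}; this is sound, since when $A\neq B$ the first arch must extend to the last first-occurrence of a letter of $B\setminus A$ inside the second block, leaving a rest of length at most $n-1<\sigma$. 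Your deletion instance ($k=1$, answer $0$ iff $A\neq B$) is a valid alternative to the paper's $k=0$ instance. Two small caveats: your parenthetical claim that $\iota(w)\le|w|/\sigma$ holds \emph{strictly} when $A\neq B$ fails for disjoint sets (taking $d=n$ gives equality), but this is harmless because $\iota(w)\sigma\le|w|$ is automatic — every arch has length at least $\sigma$; and your fallback of ``padding to keep $\sigma$ below $|w|/2$'' is not obviously realizable, since $\sigma=n+d$ grows with $d$ and simple block repetitions do not keep the target $k$ below $\lfloor|w|/\sigma\rfloor$ uniformly in $d$, so you should rely on your first fix, the convention that an algorithm reports a distinguished value (e.g.\ $\infty$) when no $k$-universal word is reachable. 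Notably, that convention is silently needed by the paper as well: in its substitution instance one also has $|w|<2\sigma$ whenever $S\neq T$, a point the paper's ``similarly'' glosses over, so your explicit treatment of infeasibility is, if anything, more careful than the original.
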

\begin{proof}
Let $S=\{s_1,\ldots,s_n\}$ and $T=\{t_1,\ldots,t_m\}$ two sets, with $m\leq n$. We define the alphabet $\Sigma = S\cup T\cup \{\$,\#\}$, where the letters $\$ $ and $\# $ do not occur in $S$. We define the word $$w=\$ s_1\cdots s_n \# \$ t_1\cdots t_m \#.$$

We want to show that $S=T$ if and only if the number of insertions needed to transform $w$ into a $2$-universal word is $0$. 

The left to right implication is trivial. The right to left implication is also easy to show. If $w$ is $2$-universal, then it has two arches. These arches must be $\$ s_1\cdots s_n \# $ and $\$ t_1\cdots t_m \#$ (otherwise one would need to insert one of the separators). This means that the letters $s_1,\ldots,s_n$ are the same as $t_1,\ldots, t_m$. So $S=T$. 

Thus, to check the equality $S=T$ we can compute the minimal number of insertions needed to make $w$ $2$-universal. Thus, this requires at least $\Omega(|w|\log |w|)$ comparisons. As $|w|=n+m+4\in O(n)$, the statement follows. 

We can similarly show that $S=T$ if and only if the minimal number of substitutions needed to transform $w$ into a $2$-universal word is $0$. Similarly to the case of insertions, the lower bound is easily obtained now.

Finally, we can show that $S\neq T$ if and only if the minimal number of deletions needed to transform $w'= s_1\cdots s_n  t_1\cdots t_m$ into a $0$-universal word (w.r.t. $\letters(w')$) is exactly $1$. Indeed, each element $s_i$ occurs exactly once in $S$ and each element $t_j$ occurs exactly once in $T$. So, each letter $s_i$ (respectively, $t_j$) may occur at most twice in $w'$ (if it is contained in both $S$ and $T$). Clearly, if there is a letter that occurs exactly once, then  the minimal number of deletions needed to transform $w'= s_1\cdots s_n  t_1\cdots t_n$ into a $0$-universal word is $1$, but also this letter occurs only in $S$ or only in $T$. So $T\neq S$. If all letters occur twice, then the minimal number of deletions needed to transform $w'= s_1\cdots s_n  t_1\cdots t_n$ into a $0$-universal word is $2$, and $T= S$.

Thus, to check the equality $S=T$ we can compute the minimal number of deletions needed to make $w'$ $0$-universal. Thus, this requires at least $\Omega(|w|\log |w|)$ comparisons as well. As $|w|=n+m\in O(n)$, the statement follows. 
\end{proof}

So, having faster algorithms in this model of computation requires finding better methods than the dynamic programming approach we used.

In an intermediate model, we can assume that the input is a sequence of elements from a totally ordered set ${\mathcal U}$ (i.e., string over ${\mathcal U}$) whose elements can be stored in a constant number of memory words. The operations allowed in this model are those of the standard Word RAM model. So, in other words, the letters are from $[1:n^d]$ for some constant $d$, if the input word has length $n$. An implementation of our algorithms, where we first sort the letters of the input word, map them to words over $\{1,\ldots,n\}$, and then use the same strategies as the ones described for the case of integer alphabets, runs in exactly the same complexity as stated in the main part of the paper, as a set of $n$ numbers from $[1:n^d]$, where $d$ is a constant, can be sorted in $O(n)$ time using Radix-sort.

\end{appendices}

\end{document}